\documentclass[12pt]{article}
\usepackage[T1]{fontenc}
\usepackage[utf8]{inputenc}
\usepackage{lmodern}
\usepackage{260916_jmatNS_v2_assets/article_style} % Seongoh's style
\usepackage{pdflscape} % To adjust the table width
\usepackage{hyperref} % For table 
\usepackage{graphicx}
\usepackage{booktabs}
\usepackage{subcaption}
\usepackage{longtable}
\usepackage{rotating}
\usepackage{array}
\usepackage{microtype}

\title{Matrix Graphical Model Via Joint Estimation of Partial Correlations}

\author[1]{Hyewon Kim}

\author[2,3]{Seongoh Park\footnote{To whom all correspondence should be addressed. Email: \texttt{spark6@sungshin.ac.kr}}}
    
\affil[1]{Department of Statistics, Sungshin Women's University, Seoul, Korea} 
\affil[2]{School of Mathematics, Statistics and Data Science, Sungshin Women's University, Seoul, Korea} 
\affil[3]{Center for Data Science, Sungshin Women's University, Seoul, Korea}

\date{September 16, 2026}

\newcommand{\spacingset}[1]{\renewcommand{\baselinestretch}{#1}\small\normalsize}
\hypersetup{
  pdftitle={Matrix Graphical Model Via Joint Estimation of Partial Correlations},
  pdfauthor={Hyewon Kim; Seongoh Park},
  pdfkeywords={Matrix-variate data; Gaussian graphical model; Kronecker product; undirected graph},
  colorlinks=true, citecolor=blue, urlcolor=blue, linkcolor=blue
}

\begin{document}
\maketitle
\baselineskip 24pt

\begin{abstract}
\noindent
Matrix graphical models aim to characterize conditional dependence structures in matrix-variate data under a separable covariance assumption. In this framework, the precision matrix is decomposed as a Kronecker product, enabling separate modeling of undirected graphs across row and column domains.
Existing methods have been developed for this problem, including likelihood-based approaches and regression-based procedures for graph estimation. Likelihood-based methods estimate precision matrices directly and recover graph structures indirectly, whereas regression-based approaches directly target estimating edges among variables, thus outperforming the former. However, existing regression-based methods are based on multiple penalized regression problems, which naturally yields asymmetry in estimated graphs and computational difficulty in selecting tuning parameters.
To address the limitations, we propose a joint estimation of partial correlations in matrix graphical models. The proposed method estimates all partial correlations simultaneously within a unified optimization framework, thereby preserving symmetry and easing the pain of selecting the best models.
Numerical studies demonstrate that the proposed method improves graph recovery performance compared to existing approaches. We also analyze protein expression data collected from patients with pulmonary tuberculosis, measured repeatedly at multiple time points, where the proposed method compares protein networks between two groups of patients and recovers the temporal dependence structure.

\vskip0.5cm
\noindent {\bf Keywords:} Matrix-variate data; Gaussian graphical model; Kronecker product; undirected graph.
\end{abstract}
\baselineskip 18pt

\section{Introduction}\label{sec:intro}

In the modern era, high-dimensionality is not the only challenge to address in data analysis, but additional complex structures have complicated 
conventional methods. 
For example, matrix or more generally tensor structure in data has been observed in a variety of fields.
In the bioinformatics area,  \citet{Greenewald:2015} analyzed repeatedly measured gene expression data of the yeast (\textit{S. cerevisiae}) where gene probes (rows) were regularly sampled for every 24 minutes (columns). In \citet{Yin:2012}, graphical models were fitted to multi-tissue (rows) gene expression (columns) data.
Another example can be found in neuroscience. \citet{Oh:2026}  aimed to classify brain MRI images related to Alzheimer's disease symptom progression using nonparametric linear discriminant analysis. \citet{Zhou:2014}, \citet{Shin:2025}, and \citet{Xia:2017} proposed methods to estimate brain connectivity using electroencephalography (EEG) data that can be seen as multivariate time-series data.
Moreover, \citet{Yu:2022} examined macroeconomic indices (rows) for multiple nations (columns), which were collected from OECD database.
On the other hand, \citet{Greenewald:2019} viewed temporal-spatial data obtained from the US National Center for Environmental Prediction as 3-dimensional tensor (latitude versus longitude versus time) and developed a generalized graphical lasso.

Though this paper mainly focuses on graphical models, diverse statistical methodologies have been developed to analyze matrix-variate data; for example, multi-task learning \citep{Zhang:2010}, joint mean covariance estimation \citep{Hornstein:2019}, covariance estimation \citep{Zhang:2023}, factor analysis \citep{Wang:2019,Yu:2022}, differential network model \citep{Ji:2021}, and so on.
We refer interested readers to those cited papers for further understanding.

\subsection{Related work}
For a clear comparison with the proposed method, we review existing methods developed under the assumption of covariance separability. That is, for matrix-variate observations of dimension $p\times q$, the covariance matrix of the vectorized data is assumed to satisfy $\Sigma_{pq \times pq} = V_{q \times q} \otimes U_{p \times p}$ where $\otimes$ is the Kronecker product. For other structural assumptions such as Kronecker sum, readers may refer to \citet{Kalaitzis:2013,Greenewald:2019}.
The existing literature can be divided into two branches:
(1) estimation of precision matrices, and (2) estimation of graph support (or edges).
Methods in the first branch are typically based on likelihoods where the graph structure is estimated as a by-product of the procedure.
In contrast, methods in the second directly aim to estimate edges of the graph, mostly based on the idea of node-wise regression. 

To describe the likelihood-based methods, we define a function that returns the solution of the graphical lasso method \citep{Friedman:2008}\footnote{Note that this function returns a matrix of the same size as an input matrix. For notational convenience, we allow this slight abuse of notation.}:
    for $S \succeq 0$, 
    $$
    L_{GL}(S;\lambda) = \arg\min_{\Omega \succ 0} - \log \det(\Omega) + \text{tr}( \Omega S) + \lambda || \Omega ||_{1,\text{off}},
    $$
    where $\lambda > 0$ is a regularization parameter and $|| \cdot ||_{1,\text{off}}$ is an absolute sum of off-diagonal elements.
The initial work is the likelihood flip-flop method (\citet{Leng:2012}, \citet{Yin:2012}, \citet{Tsiligkaridis:2012,Tsiligkaridis:2013}). This method alternates updating one of $U^{-1}$ and $V^{-1}$, while fixing another as follows:
$$
\Omega^U_{(t+1)} = L_{GL}\big(\sum_{h=1}^n  Y^{(h)} \Omega^V_{(t)} (Y^{(h)})^\top / (nq);\lambda \big).
$$
where $\Omega^V_{(t)}$ is the estimate of $V^{-1}$ at iteration $t$ and $\Omega^U_{(t+1)}$ is that of $U^{-1}$ at iteration $t+1$.
The final estimator is 
$\Omega^U_{FF}= \lim_{t \to \infty} \Omega^U_{(t+1)}$ and $\Omega^V_{FF}= \lim_{t \to \infty} \Omega^V_{(t+1)}$.
To circumvent the alternating steps, \citet{Zhou:2014} proposed a non-iterative flip-flop method named ``GEMINI'', which defines the precision matrix estimators by             $\Omega^U_{NFF} = L_{GL}\big(\sum_{h=1}^n Y^{(h)} (Y^{(h)})^\top / (nq);\lambda \big)$ and 
$\Omega^V_{NFF} = L_{GL}\big(\sum_{h=1}^n (Y^{(h)})^\top Y^{(h)} / (np);\lambda \big)$. 
Moreover, \citet{Ning:2013} proposed more robust graph estimators based on matrix non-paranormal distribution. In this work, they used rank-based correlation matrices such as Spearman's rho or Kendall's tau instead of the usual sample covariance matrix and solved the followings:
$\Omega^U_{Rank} = L_{GL}\big(R^U;\lambda \big)$, where
$R^U$ is a rank-based correlation matrix modified to satisfy positive-definiteness. $\Omega^V_{Rank}$ is similarly defined.
Beyond the graphical lasso algorithm, \citet{Wu:2023} applied the sparse column-wise inverse operator (SCIO) that aims to induce column-wise sparsity in the estimates of $U^{-1}$ and $V^{-1}$. This method estimates the precision matrix in a column-wise manner, which yields a final asymmetric estimator by nature.

In literature of the other branch, the graph edges are directly estimated using the partial correlations. \cite{Chen:2019} and \cite{Xia:2017} described the graph estimation problem as a multiple testing problem. In estimating the partial covariances and variances, they obtained regression coefficient estimators by node-wise regression. Let $Y_{j}^{(h)} \in \mathbb{R}^q$, $h=1,\ldots, n$, be the $j$-th row-vector of the $h$-th matrix-variate observation. Then, the $j$-th node-wise regression is as follows:
\begin{equation}\label{eq:nodewise}
\hat{\theta}^j = \arg\min_{\theta \in \mathbb{R}^{p},\, \theta_j = 0} \left\{ \frac{1}{2nq}\sum_{h=1}^{n} \Vert Y_{j}^{(h)} - (Y^{(h)})^\top \theta \Vert _2^2+ \lambda \sum_{k=1}^p  (\hat{\psi}_{kk})^{\alpha} |\theta_k| \right\}, \quad j \in [p]
\end{equation}
where $\hat{\psi}_{kk}$ is a scaling factor defined by the $k$-th diagonal component of 
$\sum_{h=1}^n Y^{(h)} (Y^{(h)})
^\top / (n q)$ and $\alpha >0$ is a pre-determined constant.     
As a result, they need to solve $p+q$ separate regularized regression problems. A more recent work from \citet{Shin:2025} also used a similar regression-based approach to estimate matrix graphs, where they derived  theoretical results about (global) support recovery.

\subsection{Contributions}
The main contributions of this work are threefold. First, the proposed joint estimation preserves the symmetric structure of partial correlations while remaining within the regression framework to estimate graph edges. In contrast, the node-wise regression approaches yield asymmetric estimates (i.e. $(\hat{\theta}^j)_i \neq (\hat{\theta}^i)_j$).  
To avoid this issue, \citet{Shin:2025} adopted ``AND'' or ``OR'' rule (\citet{m&b2006}) to symmetrize the estimated graph. In \citet{Xia:2017}, 
the proposed variance estimator of the test statistic is asymmetric, but it is not clearly stated in the paper how to address the asymmetry. Though the variance estimator proposed in \citet{Chen:2019} did not suffer this problem, it requires a consistent covariance estimator and thus calls for an additional computational burden. Moreover, its theoretical validity relies on a weak sparsity condition on the population covariance matrix, which may not hold in general.

Second, due to the joint estimation, the proposed method only has a single regularization parameter to control the magnitude of partial correlations. As the correlations reside in $[-1, 1]$, the penalty term is relatively simple (see  \eqref{eq:loss}), and thus finding the optimal value of the regularization parameter is straightforward.
In contrast, each lasso regression may require regularization parameters on different scales, which complicates the selection of $\lambda$ and the scaling parameter $\alpha$ in \eqref{eq:nodewise}. For example, $\alpha=1/2$ is used in \citet{Xia:2017} and \citet{Chen:2019} with a slight modification, whereas $\alpha=0$ in \citet{Shin:2025} under the convention $a^0 \equiv 1$ for $a>0$.

Third, the proposed method empirically works better than the likelihood-based methods in estimating graph structure. 
It has also been verified in many previous studies \citep{Peng:2009,Shin:2025,Ravikumar:2011,Meinshausen:2008,Khare:2014} that the graphical lasso method shows worse finite-sample performance or may require a stronger condition to guarantee support recovery compared to node-wise regression methods. 

The rest of the paper is organized as follows. In Section~\ref{sec:meth}, we describe the proposed method and details of its implementation. In Section~\ref{sec:theory}, we present the consistency result of the proposed method. 
We evaluate its performance through simulation studies in Section~\ref{sec:simul}, while in Section~\ref{sec:real}, we demonstrate the practical utility of the proposed method using longitudinal proteomic profiles from pulmonary tuberculosis treatment data. Finally, we conclude the paper with a discussion in Section~\ref{sec:discus}.

\section{Methods}\label{sec:meth}

\subsection{Matrix graphical model and partial correlation}
Suppose $Y = (Y_{ij})_{i\in [p], j \in [q]}$ is a matrix-valued random variable where its vectorized version has mean zero and covariance $V_{q\times q} \otimes U_{p \times p}$. Here, the vectorization means binding columns of a matrix into a vector, and $\otimes$ is the Kronecker product of two matrices. 
First, we consider a node-wise regression for $j\in[q]$, that is, the least squares problem where $Y_{ij}$ is regressed on the rest $Y_{i,-j}$:
$$
\hat{b} = \arg\min_{b \in \mathbb{R}^{q-1}} \mathbb{E} || Y_{ij} - b^\top Y_{i,-j}||_2^2.
$$
The solution of the problem is 
$$
\hat{b} = \big[\mathbb{E}Y_{i,-j} Y_{i,-j}^\top \big]^{-1} \mathbb{E} Y_{i,-j} Y_{ij} = [u_{ii} V_{-j,-j}]^{-1} u_{ii} V_{-j,j} = (V_{-j,-j})^{-1}V_{-j,j}.
$$ 
Note that the coefficient does not depend on the row index $i$ due to the separable covariance structure.
Additionally, the partial correlation $\nu^V_{jk}$, defined by correlation between the best linear prediction error of $Y_{ij}$ and $Y_{ik}$, is known to be represented by 
$$
\nu^V_{jk}= - \omega^V_{jk} / \sqrt{\omega^V_{jj}\omega^V_{kk}},
$$
where we denote the inverse covariance matrix by $V^{-1}=(\omega^V_{jk})_{1\le j,k\le q}$.
As before, the partial correlation does not depend on the row index. 
Based on the block matrix inversion formula, we can derive $\hat{b}_{k} = \nu^V_{jk} \sqrt{\omega^V_{kk} / \omega^V_{jj}}$. Therefore, $\hat{b}_{k}=0$ is equivalent to $\nu^V_{jk}=0$, which moreover implies $\omega^V_{jk}=0$. Assuming the distribution of $Y$ satisfies the Markov property, $\omega^V_{jk}=0$ corresponds to the conditional independence between two columns, $Y_{\cdot j}$ and $Y_{\cdot k}$, given $Y_{\cdot, -\{j,k\}}$ (the rest except the $j,k$-th columns). Consequently, 
we can then recover the column-wise conditional independence graph $\mathcal{G}=\{(j_1, j_2)\in [q]\times [q]: Y_{\cdot j_1} \perp Y_{\cdot j_2} | Y_{\cdot, -\{j_1, j_2\}}\}$ by looking at zeros of the least square estimators.

\subsection{Estimation}
Suppose matrix data $Y^{(h)} \in \mathbb{R}^{p \times q}$ are independently observed, $h=1,\ldots, n$. 
Based on the aforementioned facts, we propose the least squares problem to estimate $\theta = (\nu^V_{jk})_{1 \le j < k \le q}$ and $w = (\omega^V_{jj})_{j \in [q]}$, with the $\ell_1$-regularization on $\theta$. The penalty term encourages partial correlations to be sparse and thus leads to sparse graph structure.
Then, the loss function to be minimized is given by:
\begin{equation}\label{eq:loss}
    \ell(\theta, w) = \dfrac{1}{2} \sum_{h=1}^n \sum_{i=1}^p \sum_{j=1}^q f_j^V \left(Y^{(h)}_{ij} -  \sum_{k\neq j} \sqrt{\frac{\omega^V_{kk}}{\omega^V_{jj}}} \nu^V_{jk} Y^{(h)}_{ik}\right)^2 + \lambda \sum_{j < k} |\nu^V_{jk}|.
\end{equation}
Here, $\{f_j^V\}_{j \in [q]}$ are non-negative weights. The choice of the weights will be discussed later.
This is a joint estimation of a precision matrix based on $q$ node-wise regressions.

To find the optimal solution of \eqref{eq:loss},  we use the alternating scheme to update $\theta$ and $w$, similar to \citet{Peng:2009}. Within each update, the coordinates are updated one at a time, also known as the shooting algorithm \citep{Fu:1998}.
For notational brevity, we omit the ``hat'' for estimators in the following updating formulas.
To update $\nu^V_{jk}$, suppose all the other parameters are fixed at their values from the previous iteration. Then, the next iterator of $\nu^V_{jk}$ has a closed form given by:
\begin{equation}\label{eq:update_nu}
    \nu^V_{jk} \leftarrow \dfrac{s_\lambda\bigg(\sum_{h,i} f_k^V\tilde{Y}_{ijk}^{(h)} \sqrt{\frac{\omega^V_{kk}}{\omega^V_{jj}}} Y_{ik}^{(h)} + 
    \sum_{h,i} f_j^V\tilde{Y}_{ikj}^{(h)} \sqrt{\frac{\omega^V_{jj}}{\omega^V_{kk}}} Y_{ij}^{(h)}\bigg)}{
    \sum_{h,i}\frac{\omega^V_{jj}}{\omega^V_{kk}} (Y_{ij}^{(h)})^2  + 
    \sum_{h,i}\frac{\omega^V_{kk}}{\omega^V_{jj}} (Y_{ik}^{(h)})^2}
\end{equation}

where $\tilde{Y}_{ijk}^{(h)} = Y^{(h)}_{ij} - \sum_{\ell\neq j,k} \sqrt{\omega^V_{\ell\ell}/\omega^V_{jj}} \cdot \nu^V_{j\ell} Y^{(h)}_{i\ell}$ and $\tilde{Y}_{ikj}^{(h)} = Y^{(h)}_{ik} - \sum_{\ell\neq j,k} \sqrt{\omega^V_{\ell\ell} / \omega^V_{kk}} \cdot \nu^V_{k\ell} Y^{(h)}_{i\ell}$. Here, $s_\lambda(x) = \text{sgn}(x) (|x|- \lambda)_+$ is the soft-thresholding function.
Next, the updating formula for the partial variance $\omega^V_{jj}$ is given:
\begin{equation}\label{eq:update_var}
\omega^V_{jj} \leftarrow \left\{\dfrac{1}{np}\sum_{h=1}^n \sum_{i=1}^p
\left(Y^{(h)}_{ij} -  \sum_{k\neq j} \sqrt{\frac{\omega^V_{kk}}{\omega^V_{jj}}} \nu^V_{jk} Y^{(h)}_{ik}\right)^2 \right\}^{-1}.    
\end{equation}
Note that the initial value of $\omega^V_{jj}$ is set to the sample variance; that is, $\hat{\omega}_{jj}^{V}=(np)\big\{\sum^n_{h=1}\sum^p_{i=1}(Y^{(h)}_{ij})^2\big\}^{-1}$.

Upon convergence, let $\hat{\theta}=(\hat{\nu}^V_{jk})_{j\neq k}$ and $\hat{w}=(\hat{\omega}^V_{kk})_{k}$ denote the minimizers of the loss function in (\ref{eq:loss}). Recalling the relationship $\hat{b}_{k} = \nu^V_{jk} \sqrt{\omega^V_{kk} / \omega^V_{jj}}$, we can readily estimate the precision matrix as $\widehat{V^{-1}} = (-\hat{\nu}^V_{jk} \sqrt{\hat{\omega}^V_{jj} \hat{\omega}^V_{kk}})_{j,k \in [q]}$, where $\hat{\nu}^V_{jj}\equiv -1$, $\forall j$.
The other precision matrix $U^{-1}$ can be estimated analogously by applying the same procedure to the transposed matrices $\{(Y^{(h)})^\top\}_{h=1}^n$. Note that it requires another $\ell_1$-regularized least squares problem. We add a few remarks on the proposed estimator.

\begin{itemize}
\item The proposed method simultaneously solves $q$ node-wise regressions, while \cite{Xia:2017}, \citet{Chen:2019}, and \citet{Shin:2025} solved $q$ individual penalized regressions separately. As pointed out in Section~\ref{sec:intro}, solving these regressions separately suffers from asymmetry of sparsity patterns and moreover computational burden for searching $q$ tuning parameters.
    
\item %% positive definite
A desirable property of a precision matrix estimator is the positive-definiteness, which is not guaranteed by the proposed method in general. However, we have empirically found that the resulting estimates are positive definite in almost all cases considered in our simulation study. Moreover, as noted by the previous work (\cite{Peng:2009}) for vector graphical models, the final estimate satisfies the positive-definite condition whenever the true network structure is sparse enough.

\item %% identifiability
Under the Kronecker product structure, each covariance matrix is only identifiable up to a multiplicative constant, as $(c \cdot V) \otimes (U/c) = V \otimes U$ for any $c> 0$. Therefore,
one may need a constraint to define a unique estimator such as $\text{tr}(U)=p$, $\text{tr}(V)=q$, or $U_{11}=1$. In our implentation, we adjust the estimate by $\omega^V_{jk} \leftarrow \omega^V_{jk} / \omega^V_{11}$ $\forall j,k \in [q]$ and 
$\omega^U_{jk} \leftarrow \omega^U_{jk} \cdot \omega^V_{11}$ $\forall j,k \in [p]$. This forces $(\widehat{V^{-1}})_{11}=1$, while leaving the Kronecker product matrix unchanged.

\item %% weights
Due to heterogeneity of the residual variances across the $q$ regression equations, choosing weights $\{f_j^V\}_{j\in[q]}$ adaptively may lead to more balanced network estimation.
Following \citet{Peng:2009}, we consider three choices. First, we consider uniform weights,  $f_j^V \equiv 1$. This treats all nodes as equally important.
Second, we consider inverse-variance weights, $f^V_j=q\cdot \omega_{jj}^V/{\sum_k\omega_{kk}^V}$, where $\omega_{jj}^V$ is the reciprocal of the residual variance defined in \eqref{eq:update_var}, evaluated at the current iterate. Similarly to the weighted least squares, this scheme assigns larger weights to regressions with smaller residual variance. 

Third, we consider degree-based weights, where $f_j^V= q\cdot d_j^V / \sum_k d_k^V$ where $d_j^V=\sum_{k\neq j}\text{I}(\nu_{jk}^V \neq 0)$ is evaluated at the current iterate. 
This choice gives weights proportional to node degree in order to account for local structural complexity such as hub structure.
Alternatively, we can fix $f_j^V$ at values reflecting reliable prior knowledge on the relative importance of the nodes, for example, based on external biological annotations.

\end{itemize}

We summarize the estimation procedure in Algorithm 
\ref{alg:matSPACE}. We name the proposed method ``matSPACE'' after the previous method (SPACE), as it can be viewed as a ``mat''rix-variate extension of SPACE.
The source code of the R implementation is available at \url{https://github.com/kimhyew1/matSPACE}.

\begin{algorithm}[H]
\caption{matSPACE algorithm}
\label{alg:matSPACE}
\begin{algorithmic}[1]
\raggedright
\Require Data $\{Y^{(h)}\}_{h=1}^n\subset\mathbb R^{p\times q}$; penalties $\lambda^V,\lambda^U$; weight schemes $f_V,f_U\in\{\mathrm{equal},\mathrm{variance},\mathrm{degree}\}$
\State Set $\hat\nu^V_{k\ell}\leftarrow0\ (k\neq \ell)$, $\hat\nu^V_{kk}\leftarrow-1$, $\hat f^V_k\leftarrow1$, and $\hat\omega^V_{kk}$ via \eqref{eq:update_var}, for $k,\ell\in[q]$ \Comment{initialization}
\While{$(\hat\nu^V_{kl})_{k,\ell\in[q]}$ not converged} \Comment{shooting algorithm}
            \State Update $\hat\nu^V_{k\ell}$ via \eqref{eq:update_nu}, $\forall k, \ell$, evaluated at $(\lambda_V,\hat f^V,\hat\omega^V)$
    \EndWhile
    \State $\hat\beta^V_{k\ell}\leftarrow\hat\nu^V_{k\ell}\sqrt{\hat\omega^V_{\ell\ell}/\hat\omega^V_{kk}}\ (k\neq \ell)$; \ update $\hat\omega^V_{kk}$ via \eqref{eq:update_var} with $\hat\beta^V$ \Comment{partial variance update}
    \State Update $\hat f^V_k$ according to the chosen scheme $f_V$, using $\hat\omega^V,\hat\nu^V$ \Comment{weight update}
\State Repeat step 1-6 with the transposed data, $\{(Y^{(h)})^\top\}_{h=1}^n$, and $(\lambda_U,f_U)$ in place of $(\lambda_V,f_V)$, yielding $\hat\nu^U,\hat\omega^U,\hat f^U$
\State $\hat\omega^U_{rs} \leftarrow \hat\omega^U_{rs}\cdot\hat\omega^V_{11}$, $\forall r,s\in[p]$, and $\hat\omega^V_{k\ell} \leftarrow \hat\omega^V_{k\ell}/\hat\omega^V_{11}$, $\forall k,\ell\in[q]$ \Comment{identifiability constraint}
\Ensure Precision matrices $\widehat{V^{-1}}=(\hat\omega^V_{k\ell})_{k,\ell\in[q]}$ and $\widehat{U^{-1}}=(\hat\omega^U_{rs})_{r,s\in[p]}$
\end{algorithmic}
\end{algorithm}

\subsection{Selection of regularization parameters}
The proposed model has an $\ell_1$-penalty parameter $\lambda^V$ to be tuned for estimation of $V^{-1}$. To select the optimal level of regularization, we compute an information criterion with the estimates obtained at each value of $\lambda^V$. The estimates in this subsection thus depend on $\lambda^V$, but we omit it for notational simplicity.

Given the estimates at $\lambda$, the residual sum of squares for the $j$-th regression is defined by
$$
RSS_j(\lambda)=\sum^n_{h=1}\sum^p_{i=1}\left(Y^{(h)}_{ij}-\sum_{k\neq j}\sqrt{\frac{\hat{\omega}^V_{kk}}{\hat{\omega}^V_{jj}}}\hat{\nu}^V_{jk}Y^{(h)}_{ik}\right)^2.
$$
The Bayes information criterion for the $j$-th regression takes the following form:
\begin{equation}\label{eq:bic_new}
BIC_j(\lambda)=np \cdot \log\left(RSS_j(\lambda)\right)+\log(np)\cdot \#\{k:k\neq j,\,\hat{\nu}^{V}_{jk}\neq0 \}.
\end{equation}

The optimal value is chosen to minimize their sum, i.e. $\lambda^{opt}:=\arg\min_{\lambda \in \Lambda}\sum^q_{j=1}BIC_j(\lambda)$, where $\Lambda$ is a set of candidates. We choose the range of candidates as follows.
Recall that the update formula \eqref{eq:update_nu} for $\nu_{jk}^V$ follows the soft-thresholding rule: $\nu_{jk}^V \leftarrow S(G_{jk}, \lambda) / D_{jk}$,
where 
$$
G_{jk}=\sum_{h,i}f_k^V\tilde{Y}_{ijk}^{(h)}\sqrt{\frac{\omega^V_{kk}}{\omega^V_{jj}}}Y^{(h)}_{ik}+\sum_{h,i}f_j^V\tilde{Y}_{ikj}^{(h)}\sqrt{\frac{\omega^V_{jj}}{\omega^V_{kk}}}Y^{(h)}_{ij}, \quad 
D_{jk}=\sum_{h,i}\frac{\omega^V_{jj}}{\omega^V_{kk}}(Y^{(h)}_{ij})^2+\sum_{h,i}\frac{\omega^V_{kk}}{\omega^V_{jj}}(Y^{(h)}_{ik})^2.
$$
Since all partial correlation estimates are identically zero for all $\lambda > \lambda_{\max}:=\max_{j\neq k}|G_{jk}|$, it suffices to search over the interval $[0, \lambda_{\max}]$. 
Following a standard practice, we construct a sequence of $K$ values of $\lambda$ decreasing from $\lambda_{\max}$ to  $\lambda_{\min}:=\epsilon\lambda_{\max}$ on the log scale. Unless otherwise stated, we use $\epsilon=10^{-6}$ and $K=40$.

\section{Theoretical result}\label{sec:theory}

Throughout this section, we omit the superscript $V$ for simplicity and assume $\text{tr}(U)=p$ as an identifiability condition.
Denote by $\omega^*$ the true value of $\omega$ and by $\theta^*=(\nu^*_{jk};j>k)$ the (enumerated) true partial correlations.
We first introduce assumptions required to derive theoretical guarantees.
\begin{assumption}[Boundedness of precisions]\label{assum:bd_w}
There exists $B_\omega>1$ such that
$$
1/B_\omega \le \min_{j} \omega^*_{jj} \le \max_{j} \omega^*_{jj} \le B_\omega.
$$
\end{assumption}

\begin{assumption}[Boundedness of eigenvalues]\label{assum:eigenvalue}
    The eigenvalues of $V$ are bounded, i.e. there exists $\eta \in (0,1)$ such that $\eta < \lambda_{\min}(V) \le \lambda_{\max}(V) < 1/\eta$.
\end{assumption}
\noindent
In fact, Assumption \ref{assum:eigenvalue} implies Assumption \ref{assum:bd_w}; however, we keep both constants for transparency.

We prove the consistency of the proposed estimator $\hat{\theta}$ when the estimator $\hat{\omega}$ of $\omega^*$ is given fixed. To be specific, consider the squared loss after normalization
$$
L_n(\theta;\omega) = \dfrac{1}{2pn}\sum_{h=1}^n \sum_{i=1}^p \sum_{j=1}^q f_j \left(Y^{(h)}_{ij} -  \sum_{k:k\neq j}  \gamma_{jk}(\omega)\nu_{jk} Y^{(h)}_{ik}\right)^2,
$$
where we define $\gamma_{jk}(\omega) = \sqrt{\omega_{kk}/\omega_{jj}}$ as a function of $\omega = (\omega_{jj})_{j}$. Then, we show the consistency of the estimator defined by
$$
\hat{\theta} =\hat{\theta}(\hat{\omega})= \arg\min_\theta L_n(\theta;\hat{\omega}) + \lambda || \theta ||_1,
$$
when $f_j>0$ are fixed.
Let $\widehat{\Sigma}= \sum_{h}(Y^{(h)})^\top Y^{(h)} /(np)$ be the covariance estimator for column dimension.
We denote the convergence rate $r_\omega$ and $r_\Sigma$ by $||\hat{\omega} - \omega^*||_\infty=O_p(r_\omega)$ and 
$||\hat{\Sigma} - V||_{\max} = O_p(r_\Sigma)$, respectively. From Lemma~\ref{lem:rate_Sigma_elemmax} in Appendix~\ref{sec:pf_lemma}, we know that 
$$
r_\Sigma = O
\left((\max_{1\le j\le q} V_{jj})
\frac{\|U\|_{F}}{p}
\sqrt{\frac{\log(p\vee q)}{n}}
\right)
$$
Let $S=\{(j,k):j>k,\ \nu_{jk}^*\neq 0\}$ denote the support of the true partial correlations.

\begin{theorem}\label{thm:main} 
Suppose Assumptions \ref{assum:bd_w} and \ref{assum:eigenvalue} hold. 
Assume the vectorization of $Y^{(h)}$, $h=1,\ldots, n$, are independently distributed as $N_{pq}(0, V \otimes U)$.
Assume $np > c_1 (q\vee \log n) / \eta^6$ for some numerical constant $c_1>0$.
Assume that the estimator $\hat{\omega}$ satisfies 
$||\hat{\omega} - \omega^*||_\infty\le \min\{1/(2B_\omega), r_\omega\}$ and with probability at least $1-\delta_\omega$,
and $r_\Sigma = O(r_\omega)$.
\noindent
Choose the tuning parameter $\lambda$ such that 
\[ 
\lambda = \Omega\left( (\max_j f_j)(1+|S|) \left\{ B_\omega^2 r_\Sigma + B_\omega^4(\max_j V_{jj})r_\omega \right\} \right). 
\] 

Then, with probability at least \[ 1-\delta_\omega - 3/(p\vee q)^2 - d\exp\left\{ -c_2 \eta^6np +q  \right\}, \] for some positive numerical constants $c_2$ and $d$, we have
\[ 
\|\hat{\theta}-\theta^*\|_2 \le  \frac{6\lambda\sqrt{|S|}} {\eta\min_j f_j}.
\]
\end{theorem}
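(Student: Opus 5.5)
The approach is the standard lasso argument for a quadratic loss: a basic inequality, a cone condition, and strong convexity. It applies because, for fixed $\hat\omega$, $L_n(\theta;\hat\omega)$ is exactly quadratic in $\theta$. Write $\Delta=\hat\theta-\theta^*$ and $b_j(\theta)=(\gamma_{jk}(\hat\omega)\nu_{jk})_{k\neq j}$. The plan has three steps:
\begin{itemize}
\item[(i)] show that $\lambda\ge 2\|\nabla_\theta L_n(\theta^*;\hat\omega)\|_\infty$ with high probability;
\item[(ii)] show that the Hessian $H$ of $L_n(\cdot;\hat\omega)$ satisfies $\Delta^\top H\Delta\ge \kappa\|\Delta\|_2^2$ with $\kappa\gtrsim \eta\min_j f_j$;
\item[(iii)] combine (i) and (ii) via optimality of $\hat\theta$.
\end{itemize}
For (iii), optimality gives $\frac12\Delta^\top H\Delta\le \|\nabla L_n(\theta^*)\|_\infty\|\Delta\|_1+\lambda(\|\theta^*\|_1-\|\hat\theta\|_1)$. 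Hence $\|\Delta_{S^c}\|_1\le 3\|\Delta_S\|_1$, so $\|\Delta\|_1\le 4\sqrt{|S|}\|\Delta\|_2$, and $\kappa\|\Delta\|_2^2\le 3\lambda\sqrt{|S|}\|\Delta\|_2$. The constant $6/(\eta\min_j f_j)$ then follows once $\kappa\ge \eta\min_j f_j/2$.

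\emph{Curvature (ii).} Directly,
\[
\Delta^\top H\Delta=\frac1{np}\sum_{h,i}\sum_j f_j\Big(\sum_{k\neq j}\gamma_{jk}\Delta_{jk}Y^{(h)}_{ik}\Big)^2\ge \min_j f_j\,\lambda_{\min}(\widehat\Sigma)\sum_j\|b_j(\Delta)\|_2^2 .
\]
Each unordered pair $\{j,k\}$ contributes $(\gamma_{jk}^2+\gamma_{kj}^2)\Delta_{jk}^2$, and $\gamma_{jk}\gamma_{kj}=1$ for every $\omega$. By AM–GM this sum is at least $2\Delta_{jk}^2$, so $\sum_j\|b_j(\Delta)\|_2^2\ge 2\|\Delta\|_2^2$ regardless of $\hat\omega$. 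It therefore suffices to show $\lambda_{\min}(\widehat\Sigma)\ge \eta/4$ with probability $1-d\exp\{-c_2\eta^6np+q\}$. To do so, write $Y^{(h)}=U^{1/2}Z^{(h)}V^{1/2}$ with $Z^{(h)}$ having i.i.d.\ $N(0,1)$ entries. Then $\widehat\Sigma=V^{1/2}\big(\sum_h Z^{(h)\top}UZ^{(h)}/(np)\big)V^{1/2}$, and its mean is $V$ because $\mathrm{tr}(U)=p$. For a fixed unit vector $x$, $x^\top\widehat\Sigma x$ is a Gaussian quadratic form, so Hanson–Wright gives a lower-tail bound. An $\varepsilon$-net over $S^{q-1}$ of size $e^{O(q)}$, with $\varepsilon$ chosen polynomially in $\eta$, then yields the exponent $-c\eta^6np+q$, and this is where the condition $np>c_1(q\vee\log n)/\eta^6$ enters.

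\emph{Gradient (i).} This is the main obstacle, because the bound has to absorb the $\hat\omega$ error and produce the $(1+|S|)$ factor. The $(j,k)$ coordinate of the gradient is
\[
-\tfrac1{np}\sum_{h,i}\big(f_j r_{ij}\gamma_{jk}Y_{ik}+f_k r_{ik}\gamma_{kj}Y_{ij}\big),
\]
where $r_{ij}=Y_{ij}-\sum_{\ell}\gamma_{j\ell}(\hat\omega)\nu^*_{j\ell}Y_{i\ell}$. I would split $\gamma(\hat\omega)=\gamma(\omega^*)+[\gamma(\hat\omega)-\gamma(\omega^*)]$.

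At $\omega^*$, the coefficient vector is the population regression vector $\beta_j=(V_{-j,-j})^{-1}V_{-j,j}$, and $\mathbb E[r_{ij}Y_{ik}]=0$ for $k\ne j$. The term therefore equals $f_j\gamma_{jk}\,e_k^\top(\widehat\Sigma-V)(e_j-\beta_j)$ and is bounded by $f_j B_\omega\, r_\Sigma(1+\|\beta_j\|_1)$. I would bound $\|\beta_j\|_1\lesssim B_\omega|S|$ using $|\nu^*|\le1$ and $\gamma\le B_\omega$. Lemma~\ref{lem:rate_Sigma_elemmax} supplies $r_\Sigma$ with the $3/(p\vee q)^2$ failure probability.

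For the $\hat\omega$ perturbation, $\|\hat\omega-\omega^*\|_\infty\le 1/(2B_\omega)$ keeps $\hat\omega_{jj}\ge 1/(2B_\omega)$. A mean-value bound then gives $|\gamma_{jk}(\hat\omega)-\gamma_{jk}(\omega^*)|\lesssim B_\omega^{2}r_\omega$. This difference multiplies sums of $|\widehat\Sigma_{k\ell}|\le V_{\ell\ell}+r_\Sigma$ over at most $|S|$ nonzero $\nu^*_{j\ell}$, together with $\gamma$ factors; the resulting contribution is $f_j(1+|S|)B_\omega^4(\max_j V_{jj})r_\omega$, where $r_\Sigma=O(r_\omega)$ is used to absorb cross terms.

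Taking a union bound over the three failure events and applying the choice of $\lambda$ completes the argument.
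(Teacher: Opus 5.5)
Your proposal follows the paper's proof almost step for step. It uses the same basic inequality with $\|\nabla L_n(\theta^*;\hat\omega)\|_\infty\le\lambda/2$. The gradient decomposition is also the same: the population normal equations at $\omega^*$ give $(1+B_\omega|S|)\|\widehat\Sigma-V\|_{\max}$, followed by a mean-value bound on $\gamma(\hat\omega)-\gamma(\omega^*)$ together with $\|\widehat\Sigma\|_{\max}\le\max_jV_{jj}+\|\widehat\Sigma-V\|_{\max}$. Finally, you use the same curvature identity $\Delta^\top\nabla^2L_n\Delta=\sum_d f_d\tilde\Delta_d^\top\widehat\Sigma_{-d,-d}\tilde\Delta_d$. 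Your remark that $\gamma_{jk}\gamma_{kj}=1$, so $\sum_d\|\tilde\Delta_d\|_2^2\ge 2\|\Delta\|_2^2$ for every $\hat\omega$, supplies the justification the paper omits for $\sum_d f_d\|\tilde\Delta_d\|_2^2\ge\min_j f_j\|\Delta\|_2^2$, and even gains a factor of two.

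You diverge from the paper only in bounding $\lambda_{\min}(\widehat\Sigma)$, and that step does not deliver an exponent of order $np$ under the stated hypotheses. The paper imports $\|\widehat\Sigma-V\|_2\le\eta/2$ from \citet{sun2026highdimensionalconvergencerates} and applies Weyl; you propose Hanson--Wright plus a net. For a unit vector $x$, $x^\top\widehat\Sigma x/x^\top Vx$ has the law of $(np)^{-1}\sum_h g_h^\top U g_h$ with $g_h\sim N(0,I_p)$. Its Hanson--Wright exponent at relative deviation $t\le 1$ is only of order $t^2np/\|U\|_2$. Since only $\mathrm{tr}(U)=p$ is assumed, $\|U\|_2$ may be of order $p$.

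This is not an artifact of the method. If $U$ is close to $p\,e_1e_1^\top$, then $\widehat\Sigma$ is essentially a sample covariance of $n$ draws, hence singular whenever $n<q$, even though $np>c_1 q/\eta^6$ can hold. You therefore need an extra hypothesis such as $\|U\|_2\le 1/\eta$. This is presumably part of the assumption behind the cited lemma, since Assumption~\ref{assum:eigenvalue} constrains only $V$; the paper's proof has the same hidden requirement.

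Granting that hypothesis, the cleanest route is to net $\widehat\Sigma-V$ with a constant mesh (size $21^q$, as in the paper) and require relative deviation of order $\eta^2$ at each net point. This gives a bound of the stated type. A mesh polynomial in $\eta$, as you suggest, costs $q\log(1/\eta)$ in the exponent, which is still absorbable using $np>c_1 q/\eta^6$.
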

\noindent
The proof of the theorem is given in Appendix~\ref{sec:pf_main}. 
We present a suitable estimator $\tilde{\omega}$ in Appendix~\ref{sec:suitable_w}
and prove that its rate is $||\tilde{\omega} - \omega^*||_\infty = O_p(r_{\tilde{\omega}})$ in Lemma~\ref{lem:rate_w} 
(with $\alpha=3$), where
$$
r_{\tilde{\omega}} = B_\omega \max\left\{\dfrac{||U||_F}{p} \sqrt{\dfrac{\log(p\vee q)^3}{n}}, 
\dfrac{||U||_2}{p} \cdot \dfrac{q \vee \log(p\vee q)^3}{n} \right\}.
$$
Then, this yields the convergence rate of order
$$
\|\hat{\theta}-\theta^*\|_2 = O_p \left(\frac{ (\max_j f_j)|S|^{3/2} B_\omega^4 (\max_j V_{jj}) r_{\tilde{\omega}}}{\eta \cdot \min_j f_j}
\right).
$$
As a consequence, if $q = o(np)$, the proposed estimator is consistent with the rate of order $O(\sqrt{q\log(p\vee q)/(np)})$ under the assumed conditions.

\section{Simulation study}\label{sec:simul}

We generate $n$ samples, $Y^{(1)}, ...,Y^{(n)}$, from $N_{pq}(0, V \otimes U)$ independently. 
Following \citet{Liu:2013} and \citet{Shin:2025}, we adopt four network structures in this study: band, cluster, hub, and random. For each structure, we first generate an adjacency matrix and then assign nonzero values to the support positions. For diagonal entries, we adjust them by adding a small positive value to guarantee the positive definiteness of the resulting matrix. Then, each diagonal entry is multiplied by a random number from $\text{Unif}(1,5)$, allowing for heterogeneity. 
The details for construction can be found in Appendix~\ref{sec:supp_generating}. 
The resulting precision matrices are illustrated in Figure~\ref{fig:GraphStructure}.

\begin{figure}[H]
\centering

\includegraphics[page=1, width=0.24\textwidth, trim=30pt 40pt 30pt 20pt, clip]{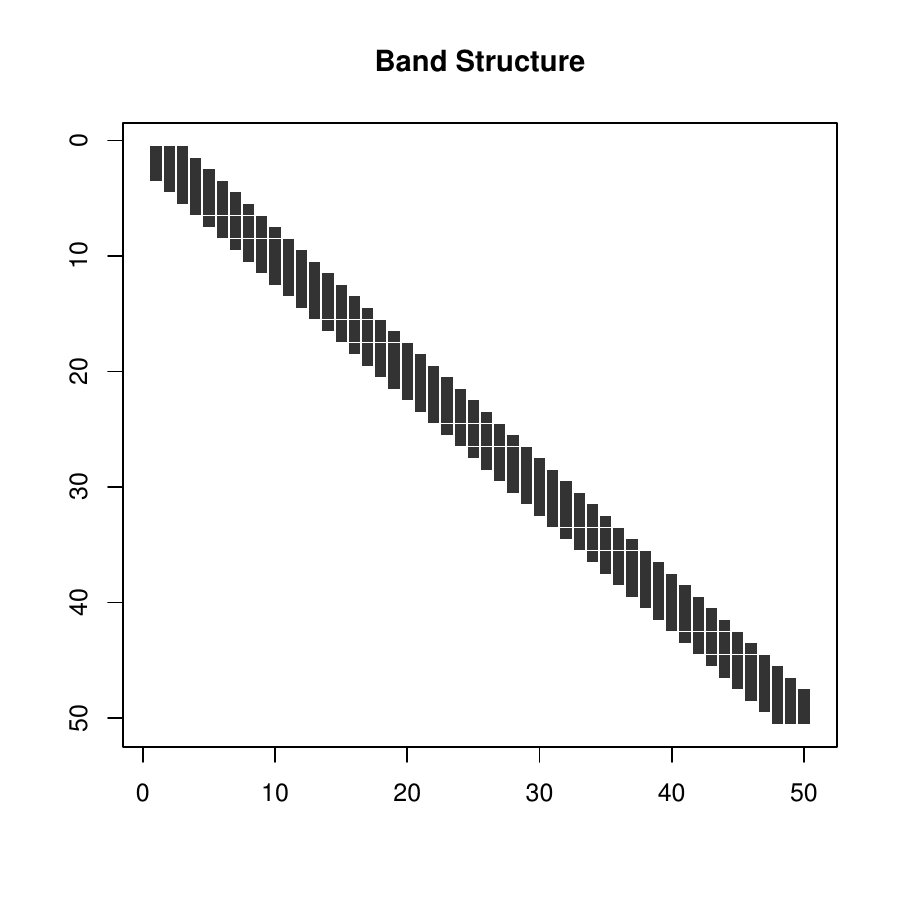}
\includegraphics[page=2, width=0.24\textwidth, trim=30pt 40pt 30pt 20pt, clip]{260916_jmatNS_v2_assets/fig/network_structure.pdf}
\includegraphics[page=3, width=0.24\textwidth, trim=30pt 40pt 30pt 20pt, clip]{260916_jmatNS_v2_assets/fig/network_structure.pdf}
\includegraphics[page=4, width=0.24\textwidth, trim=30pt 40pt 30pt 20pt, clip]{260916_jmatNS_v2_assets/fig/network_structure.pdf}

\caption{Network structures used in the simulation study: band, cluster, hub, and random networks with dimension $q=50$.}
\label{fig:GraphStructure}
\end{figure}

In our simulation study, we fix $U^{-1}$ to follow a banded network, while the structure of $V^{-1}$ varies among the four networks. $Y^{(1)}, ...,Y^{(n)} $ are iid generated from $N_{pq}(0, V\otimes U)$. We consider sample sizes $n \in \{10, 40, 70\}$ and matrix dimensions $(p,q)$ with $p,q \in \{20,50\}$. For each simulation setup, we conduct 300 replications.

\subsection{Comparative methods and computational details}
The proposed method (matSPACE) has two weighting schemes: inverse-variance based (``sw'') and degree-based (``dew'') - matSPACE-sw, and matSPACE-dew - reported in Table~\ref{tab:band_n}. 
We compare the proposed method with three existing approaches: ``GEMINI''  in \citet{Zhou:2014}, the matrix neighborhood selection method, or ``matNS'', in \citet{Shin:2025}, and the multiple testing--based graph estimation method with FDR control, or ``G-FDR'' in \citet{Chen:2019}. 
We exclude the penalized flip-flop method \citep{Leng:2012} in comparison, due to computational costs; under $n=10,p=20,q=20$, a single estimation including tuning parameter selection takes about 0.4 hours, implying that 300 replications would require about 5 days to complete.
Due to the asymmetric nature, matNS either adopts AND or OR rules \citep{m&b2006} to estimate the final graph. Thus, we consider two variants; (i) ``matNS-AND'' using AND rule and (ii) ``matNS-OR'' using OR rule. For G-FDR, the significance level for FDR control is set to 0.10, which is referred to as ``G-FDR-0.10''. 

All methods involve regularization parameters for row and column dimensions, each of which is optimally selected via 5-fold cross-validation.
For instance, 
GEMINI minimizes the cross-validated negative Gaussian log-likelihood, separately for the row and column penalties. matNS minimizes the cross-validated mean squares error. G-FDR minimizes the cross-validated prediction error for the nodewise Lasso penalties and the squared Frobenius-norm error for the covariance thresholding parameters.

\subsection{Results}\label{sec:results}

Let $E$ and $\hat{E}$ denote the edge sets of the true graph and the estimated graph, respectively. We define the counts of true positive (TP), true negative (TN), false positive (FP), and false negative (FN) as 
$$
\text{TP}=|\hat{E} \cap E|, \quad \text{TN}=|\hat{E}^C \cap E^C|, \quad \text{FP}=|\hat{E} \cap E^C|, \quad \text{FN}=|\hat{E}^C \cap E|.
$$
For evaluation of graph recovery, we measure true positive rate (TPR), false positive rate (FPR), and Matthews correlation coefficient (MCC) scores, defined by:
$$
\begin{array}{c}
\text{TPR}=\frac{\mathrm{TP}}{\mathrm{TP}+\mathrm{FN}}, \quad
\text{FPR}=\frac{\mathrm{FP}}{\mathrm{FP}+\mathrm{TN}}, \quad
\text{MCC}=\frac{\mathrm{TP}\cdot \mathrm{TN} - \mathrm{FP}\cdot \mathrm{FN}}{\sqrt{(\mathrm{TP}+\mathrm{FP})(\mathrm{TP}+\mathrm{FN})(\mathrm{TN}+\mathrm{FP})(\mathrm{TN}+\mathrm{FN})}}.
\end{array}
$$
To further assess performance across the full range of sparsity levels, we construct the ROC curve by plotting TPR against FPR across the entire regularization path. The area under the ROC curve (AUC) is then used as an aggregate measure of graph recovery performance, 
where a larger value indicates better recovery.
Due to space limitations, we report here the representative results for the setting where $U^{-1}$ and $V^{-1}$ are both fixed at the band structure. Results for the other structures are provided in Appendix~\ref{sec:supp_sim}. 

\begin{table}[H]
\centering
\spacingset{1}
\caption{Performance of graph estimation of $V^{-1}$, where true $U^{-1}$ and $V^{-1}$ are both band structure. Each entry reports an average
of each metric with the standard deviation in parentheses, based on 300 data replications.
Boldface indicates the best-performing method.}
\label{tab:band_n}
\renewcommand{\arraystretch}{0.9}
\footnotesize
\setlength{\tabcolsep}{4pt}
\resizebox{0.90\textwidth}{!}{%
\begin{tabular}{clcccc}
\toprule
$n$ & Method & TPR & FPR & MCC & AUC \\
\midrule
\multirow{14}{*}{10} & \multicolumn{5}{l}{\textit{$p=20$, $q=20$}} \\
 & matSPACE-sw & \textbf{0.392 (0.028)} & 0.133 (0.009) & \textbf{0.277 (0.008)} & \textbf{0.706 (0.002)} \\
 & matSPACE-dew & 0.262 (0.021) & 0.077 (0.006) & 0.245 (0.007) & 0.683 (0.002) \\
 & matNS-AND & 0.021 (0.001) & 0.001 (0.000) & 0.084 (0.009) & 0.695 (0.002) \\
 & matNS-OR & 0.079 (0.005) & 0.012 (0.000) & 0.160 (0.010) & 0.686 (0.002) \\
 & GEMINI & 0.310 (0.033) & 0.100 (0.008) & 0.245 (0.012) & 0.691 (0.002) \\
 & G-FDR-0.10 & 0.001 (0.000) & \textbf{0.000 (0.000)} & 0.006 (0.001) & 0.537 (0.001) \\
\cmidrule(lr){2-6}
 & \multicolumn{5}{l}{\textit{$p=50$, $q=50$}} \\
 & matSPACE-sw & \textbf{0.489 (0.009)} & 0.067 (0.001) & \textbf{0.390 (0.002)} & \textbf{0.800 (0.001)} \\
 & matSPACE-dew & 0.275 (0.010) & 0.029 (0.000) & 0.319 (0.002) & 0.763 (0.001) \\
 & matNS-AND & 0.052 (0.001) & 0.001 (0.000) & 0.199 (0.003) & 0.781 (0.001) \\
 & matNS-OR & 0.146 (0.003) & 0.009 (0.000) & 0.268 (0.003) & 0.768 (0.001) \\
 & GEMINI & 0.483 (0.015) & 0.084 (0.003) & 0.361 (0.003) & 0.779 (0.001) \\
 & G-FDR-0.10 & 0.006 (0.000) & \textbf{0.000 (0.000)} & 0.045 (0.003) & 0.616 (0.001) \\
\midrule
\midrule
\multirow{14}{*}{40} & \multicolumn{5}{l}{\textit{$p=20$, $q=20$}} \\
 & matSPACE-sw & 0.731 (0.013) & 0.156 (0.006) & \textbf{0.521 (0.006)} & \textbf{0.864 (0.001)} \\
 & matSPACE-dew & 0.549 (0.040) & 0.113 (0.011) & 0.461 (0.007) & 0.828 (0.001) \\
 & matNS-AND & 0.105 (0.004) & 0.001 (0.000) & 0.269 (0.010) & 0.843 (0.001) \\
 & matNS-OR & 0.249 (0.011) & 0.013 (0.000) & 0.393 (0.009) & 0.833 (0.001) \\
 & GEMINI & \textbf{0.789 (0.011)} & 0.268 (0.012) & 0.434 (0.006) & 0.843 (0.001) \\
 & G-FDR-0.10 & 0.111 (0.005) & \textbf{0.000 (0.000)} & 0.284 (0.013) & 0.741 (0.003) \\
\cmidrule(lr){2-6}
 & \multicolumn{5}{l}{\textit{$p=50$, $q=50$}} \\
 & matSPACE-sw & 0.817 (0.004) & 0.072 (0.001) & \textbf{0.605 (0.004)} & \textbf{0.937 (0.000)} \\
 & matSPACE-dew & 0.550 (0.020) & 0.037 (0.001) & 0.541 (0.002) & 0.898 (0.000) \\
 & matNS-AND & 0.231 (0.004) & 0.001 (0.000) & 0.450 (0.004) & 0.916 (0.000) \\
 & matNS-OR & 0.413 (0.005) & 0.009 (0.000) & 0.550 (0.003) & 0.906 (0.000) \\
 & GEMINI & \textbf{0.865 (0.004)} & 0.178 (0.007) & 0.458 (0.007) & 0.918 (0.000) \\
 & G-FDR-0.10 & 0.237 (0.003) & \textbf{0.000 (0.000)} & 0.469 (0.003) & 0.844 (0.001) \\
\midrule
\midrule
\multirow{14}{*}{70} & \multicolumn{5}{l}{\textit{$p=20$, $q=20$}} \\
 & matSPACE-sw & 0.836 (0.006) & 0.150 (0.004) & \textbf{0.605 (0.006)} & \textbf{0.917 (0.001)} \\
 & matSPACE-dew & 0.727 (0.032) & 0.170 (0.021) & 0.527 (0.009) & 0.882 (0.001) \\
 & matNS-AND & 0.186 (0.007) & 0.002 (0.000) & 0.376 (0.010) & 0.897 (0.001) \\
 & matNS-OR & 0.378 (0.014) & 0.014 (0.000) & 0.516 (0.009) & 0.888 (0.001) \\
 & GEMINI & \textbf{0.893 (0.005)} & 0.322 (0.012) & 0.460 (0.006) & 0.897 (0.001) \\
 & G-FDR-0.10 & 0.287 (0.011) & \textbf{0.000 (0.000)} & 0.489 (0.010) & 0.833 (0.002) \\
\cmidrule(lr){2-6}
 & \multicolumn{5}{l}{\textit{$p=50$, $q=50$}} \\
 & matSPACE-sw & 0.900 (0.002) & 0.065 (0.001) & \textbf{0.678 (0.005)} & \textbf{0.969 (0.000)} \\
 & matSPACE-dew & 0.662 (0.018) & 0.038 (0.003) & 0.630 (0.005) & 0.938 (0.000) \\
 & matNS-AND & 0.355 (0.005) & 0.001 (0.000) & 0.571 (0.003) & 0.952 (0.000) \\
 & matNS-OR & 0.557 (0.004) & 0.008 (0.000) & 0.669 (0.002) & 0.945 (0.000) \\
 & GEMINI & \textbf{0.938 (0.001)} & 0.195 (0.008) & 0.480 (0.010) & 0.954 (0.000) \\
 & G-FDR-0.10 & 0.442 (0.004) & \textbf{0.000 (0.000)} & 0.648 (0.002) & 0.912 (0.000) \\
\bottomrule
\end{tabular}
}
\end{table}

Table \ref{tab:band_n} shows that matSPACE-sw tends to identify more edges than matSPACE-dew, thus accompanied by higher TPR and FPR. The overall performance measured by MCC is consistently higher in matSPACE-sw, except a single case in hub structure with $n=10, p=q=20$. For AUC, matSPACE-sw wins every comparative method across all settings.

Compared with the competing methods, matSPACE achieves the highest AUC and MCC in most settings. 
As $n$ grows, matSPACE maintains a stable FPR, whereas that of GEMINI increases. On the other hand, matNS has a limited detection power with small sample size, resulting in low TPR and MCC. G-FDR performs poorly as it recovers few true edges, especially when the sample size is small ($n=10$). In this case, AUC is close to 0.5, indicating performance near random guessing. 

Increasing row dimension improves the estimation of the precision matrix $V^{-1}$ for the other dimension.
This is because the loss function in \eqref{eq:loss} consists of the sum of squared errors from $np$ nodewise regressions, 
yielding an effective sample size of $np$.
We observe similar patterns in all the other methods, which have also been reported in previous studies.

\section{Real data analysis}\label{sec:real}

To demonstrate a practical usability of the proposed method, we analyze the Longitudinal Plasma Proteomic Profiles of Pulmonary Tuberculosis Treatment dataset \citep{Call:2026}, originally collected to identify plasma-based biomarkers predictive of disease relapse during antituberculosis therapy. The study cohort consisted of 60 patients with confirmed pulmonary tuberculosis: 30 who achieved sustained cure (``Cured'' group) and 30 who subsequently relapsed (``Relapsed'' group). Patients were treated under either the standard National Tuberculosis Program (NTP) or the Rapid Evaluation of Moxifloxacin in Tuberculosis (REMoxTB) clinical trial protocol, with treatment administered over 26 weeks and follow-up extended to week 52 to ascertain relapse status. Plasma samples were collected at seven pre-specified time points (0, 2, 4, 8, 17, 26, and 52 weeks), yielding quantification of 2,418 proteins implicated in immune response, tissue repair and recovery, and lipid metabolism. We refer readers to the original paper for further details.

Our goal is to identify the key proteins and biological pathways that distinguish two groups (Cured and Relapsed). 
Since the prior work \citep{Call:2026}, has shown that treatment regimen does not meaningfully influence proteomic profiles, we pool patients treated under the NTP and REMoxTB regimens within each group, rather than analyzing each regimen separately.
Patients and proteins with missing observations at any time point were removed. For each protein, we center expression values within each group.
The resulting data consist of $p=641$ proteins measured at $q=7$ time points for $n_1=30$ patients in the Cured group and $n_2=23$ patients in the Relapsed group. In this context, the row-wise network of dimension $p=641$ indicates the protein-protein interaction network, while the column-wise network means the temporal network among $q=7$ time points.
The proposed method is used to estimate networks for both dimensions, where the optimal model selection is carried out over a grid of 40 candidates $\lambda$. 
For the row-wise network, we use degree-based weighting, motivated by the hub structure often observed in protein networks \citet{Barabasi:2004}. 
For the column-wise network, we adopt inverse-variance weighting, which performed better across most simulation settings including hub-structured networks. 

\begin{figure}[H]
    \centering
    \begin{subfigure}{0.48\linewidth}
        \centering
        \includegraphics[page=1, width=\linewidth, trim=1cm 0.5cm 1cm 2cm, clip]{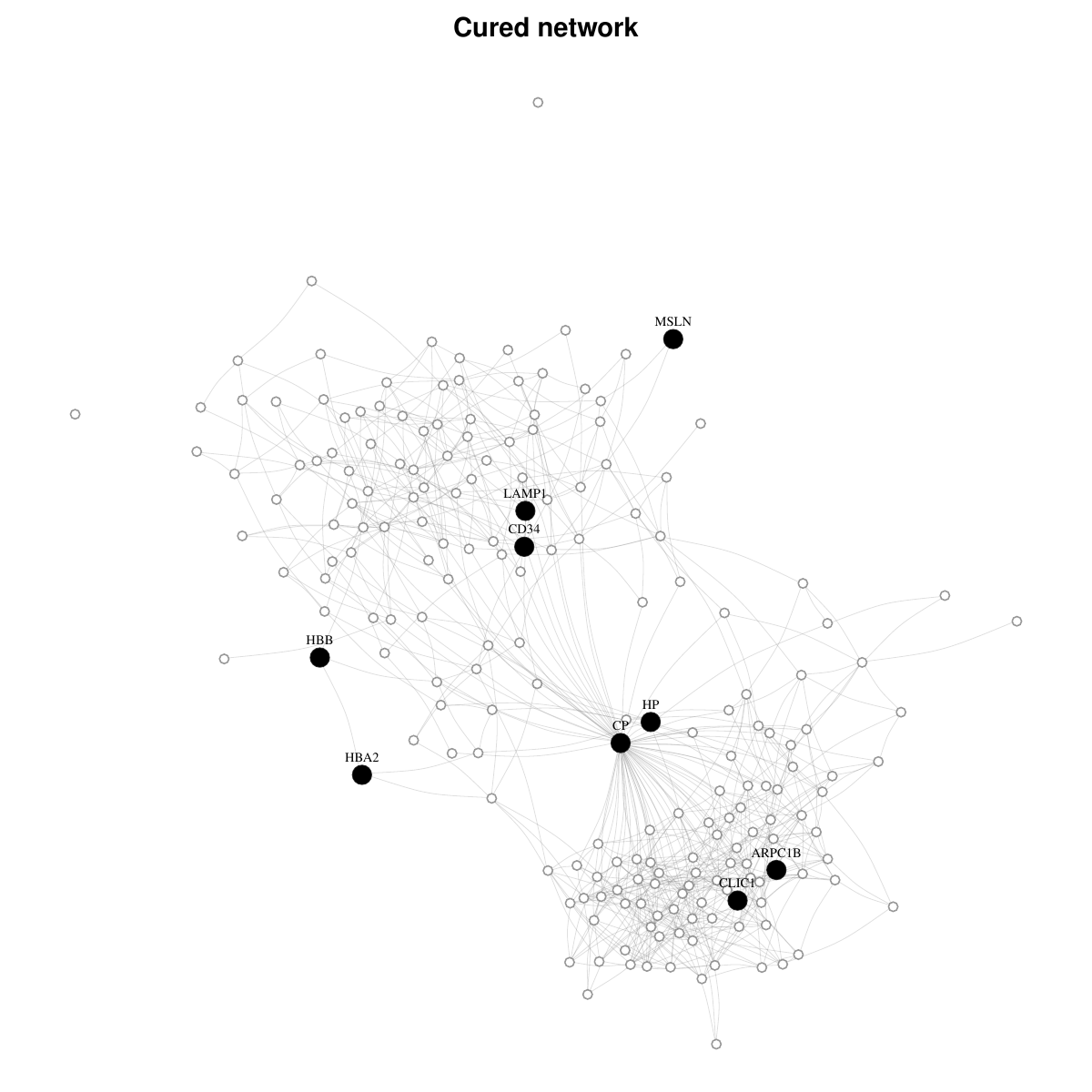}
        \caption{Cured group}
        \label{fig:network_cured}
    \end{subfigure}
    \hfill
    \begin{subfigure}{0.48\linewidth}
        \centering
        \includegraphics[page=2, width=\linewidth, trim=1cm 0.5cm 1cm 1cm, clip]{260916_jmatNS_v2_assets/fig/network_top200.pdf}
        \caption{Relapsed group}
        \label{fig:network_relapsed}
    \end{subfigure}
    \caption{Estimated protein networks for plasma proteomic measurements in the cured (left) and relapsed (right) groups. For visualization, only 200 edges with the largest  estimated (absolute) partial correlation are shown in each network; nodes with no remaining edges after the thresholding are omitted. Filled circles indicate hub proteins, defined as top 5\% of nodes by the degree within each network.}
    \label{fig:network_protein}
\end{figure}

The Relapsed network shows denser connectivity than the Cured network, with an overall network density of 0.151 compared to 0.130. The two networks also differ in structural organization. The Cured network consists of two sub-networks, each of them is densely connected internally. The two clusters appear to be linked primarily through a protein named ceruloplasmin (CP), yielding a modular architecture centered on a single connecting hub (defined by the one with a large degree). The Relapsed network, by contrast, form a single, more tightly integrated cluster. Its hub proteins are closely located and densely interconnected, which leads to a more cohesive overall architecture.

For biological interpretation, functional enrichment analysis is performed for each group based on hub proteins, defined as the top 5\% of nodes by degree within each network.
Based on the Database for Annotation, Visualization and Integrated Discovery (DAVID) \citep{Huang:2009, Sherman:2022}, we perform functional annotation clustering, which groups related biological annotations identified by the enrichment analysis into clusters and thus facilitates higher-level interpretation.
The functional enrichment analysis is based on the Gene Ontology (GO) Biological Process (BP), Cellular Component (CC), and Molecular Function (MF) direct categories, together with Kyoto Encyclopedia of Genes and Genomes (KEGG) pathway annotations. When implementing DAVID, we choose the following parameters: a kappa similarity term overlap of 3, a similarity threshold of 0.50, initial and final group membership thresholds of 3, a multiple linkage threshold of 0.50, and an EASE score threshold of 1.0. 

\begin{table}[H]
\centering
\spacingset{1}

\footnotesize
\renewcommand{\arraystretch}{0.85}
\setlength{\tabcolsep}{4pt}
\caption{DAVID functional annotation clustering results for hub proteins in the Cured and Relapsed groups. Each cluster is named after the term with the smallest $p$-value within the cluster. Hub proteins are listed in alphabetical order.}
\label{tab:david_cluster}
\begin{tabular}{@{}ccc>{\raggedright\arraybackslash}p{3.9cm}>{\raggedright\arraybackslash}p{5.3cm}@{}}
\toprule
Cluster & Enrichment Score & No.\ of Terms & Representative Term & Hub Proteins \\
\midrule
\multicolumn{5}{l}{\textbf{Cured}} \\
C1 & 3.80 & 5  & Blood microparticle & APOA4, CALR, CLIC1, CP, HBA2, HBB, HP, YWHAZ \\
C2 & 2.71 & 4  & Actin binding & ARPC1B, CAP1, COTL1, PARVB, TAGLN2, TPM1, TPM4 \\
C3 & 0.97 & 4  & Oocyte meiosis & BLVRB, CALM1, CAVIN2, MAPRE2, PRKACA, SLC3A2, TPT1, YWHAZ \\
\midrule
\multicolumn{5}{l}{\textbf{Relapsed}} \\
R1 & 3.17 & 9  & Antibacterial humoral response & B2M, FCRL5, IGHA1, IGHD, IGHG3, IGKV3-20, IGLL5, PPBP, TF \\
R2 & 2.88 & 18 & HDL particle receptor binding & APOA1, APOA2, APOC3, PLA2G7, TF \\
R3 & 2.00 & 4  & Focal adhesion & ARHGDIB, ARPC1B, B2M, CAP1, PARVB, TLN1, TPM4 \\
R4 & 1.70 & 3  & Endoplasmic reticulum lumen & APOA1, APOA2, APOC3, B2M, CP, TF \\
\bottomrule
\end{tabular}
\end{table}

In the Cured group, three functional clusters are identified: (i) Cluster-C1: blood microparticle, (ii) Cluster-C2: actin binding, and (iii) Cluster-C3: oocyte meiosis. In the Relapsed group, four functional clusters are identified: (i) Cluster-R1: antibacterial humoral response, (ii) Cluster-R2: HDL particle receptor binding, (iii) Cluster-R3: focal adhesion, and (iv) Cluster-R4: endoplasmic reticulum lumen. Table \ref{tab:david_cluster} summarizes the enrichment score, number of terms, and associated hub proteins for each cluster.

Several of these clusters align with the key pathways and proteins found in \citet{Call:2026}. In the Cured group, Cluster-C1 (blood microparticle), driven by HBB, HBA2, and HP, has a close connection with the increased oxygen transport identified as a key recovery signature in cured participants. In the Relapsed group, Cluster-R1 (antibacterial humoral response), including IGHG3, IGHD, and IGKV3-20, parallels the enhanced humoral immune response identified as a major relapse signature, while Cluster-R2 (HDL particle receptor binding), driven by APOA1, APOA2, and APOC3, represents the HDL particle enrichment identified as another main relapse signature.

\begin{figure}[H]
    \centering
    \begin{subfigure}{0.45\linewidth}
        \centering
        \includegraphics[width=\linewidth, trim=0 0 5pt 0, clip]{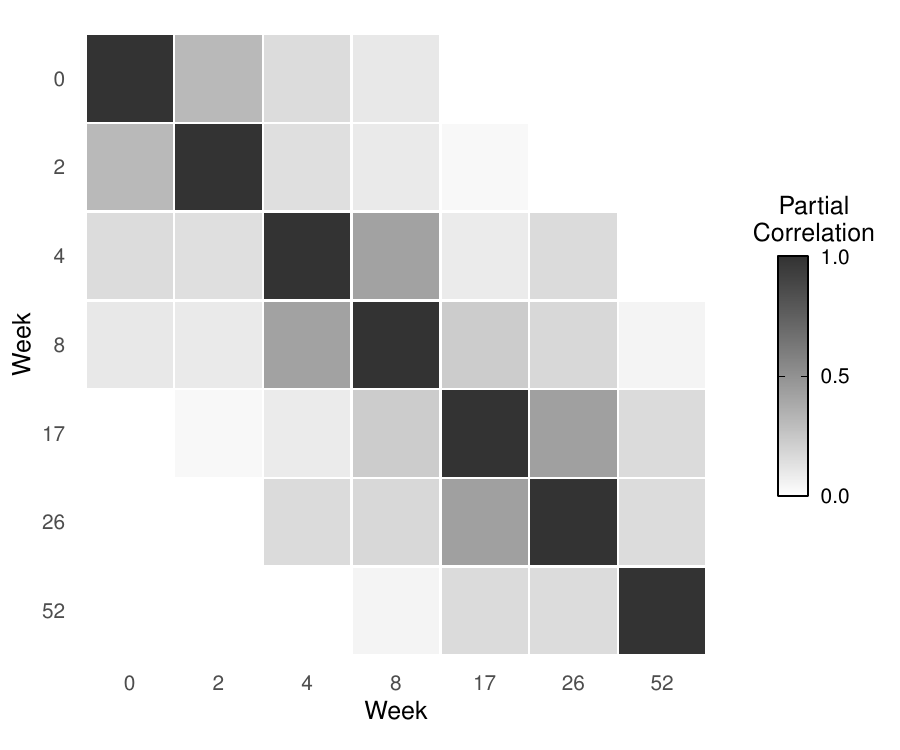}
        \caption{Cured group}
        \label{fig:heatmap_cured}
    \end{subfigure}
    \hfill
    \begin{subfigure}{0.45\linewidth}
        \centering
        \includegraphics[width=\linewidth, trim=0 0 5pt 0, clip]{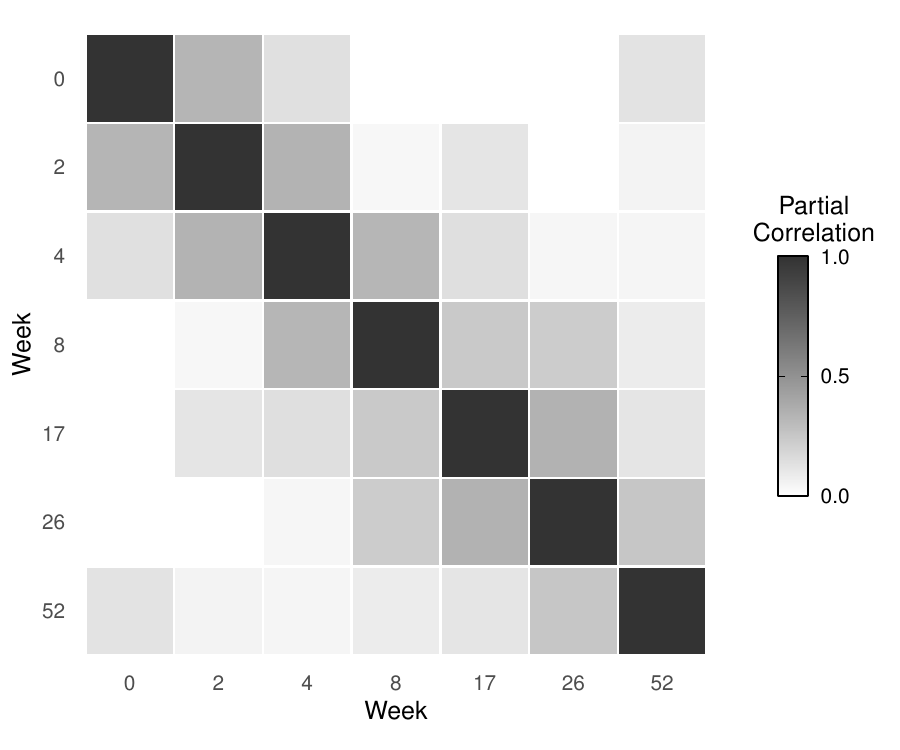}
        \caption{Relapsed group}
        \label{fig:heatmap_relapsed}
    \end{subfigure}
    \caption{The estimated column-wise network for the plasma proteomic measurements across the seven time points (0, 2, 4, 8, 17, 26, and 52 weeks). The left panel is for the Cured group, and the right panel for the Relapsed group.}
    \label{fig:heatmap_time}
\end{figure}

Figure \ref{fig:heatmap_time} visualizes the estimated column graph among temporal variables. In general, temporal dependency weakens as the interval between time points increases. This trend is clearly observed in the Cured group, but is partially observed in the other group. To be specific, the last row and column (at Week 52) in Figure \ref{fig:heatmap_time} deviates from the decaying pattern, which we believe may be due to the following reasons.
In the study design, patients received treatment for the first 26 weeks, followed by a 26-week post-treatment follow-up period. Since patients in the Relapsed group experienced recurrence during this follow-up period, the last phase may show biological characteristics different from the first 26 weeks. This may explain the different dependence patterns observed in the estimated precision matrix.

\section{Discussion}\label{sec:discus}

This paper develops a graphical model for matrix-variate data by jointly parameterizing the individual regression problems.
The key insight is that the node-wise regression coefficients can be expressed as a common set of partial correlations and conditional variances. 
This formulation naturally enforces symmetry in edge estimation and dramatically reduces the computational burden associated with tuning parameter selection. Our numerical results show that the simultaneous estimation leads to improved graph support recovery compared to existing ones.

Several directions for future research are worth investigating. An immediate extension is to graphical modeling for tensor-valued data. Under the covariance separability, such an extension is conceptually straightforward: for each mode of the tensor, standard graphical model methods (e.g. graphical lasso or neighborhood selection) are applied to observations obtained by stacking the tensor data across the rest of modes. 
A second direction is to develop a convex formulation for the joint estimation problem, along the lines of the work by \citet{Khare:2014}. The authors showed that the SPACE method for vector graphical models may fail to converge, partly due to the non-convexity of the optimization problem. Thus, a convex joint-estimation framework for matrix graphical models could achieve computationally more stable estimation.
Finally, we propose a new BIC-type criterion that considers dependence among matrix-variate observations. A theoretical justification for this criterion is another important direction for future work.

\section*{Acknowledgement}
Seongoh Park is supported by the government of the Republic of Korea (MSIT) and the National Research Foundation of Korea (RS-2024-00338876).

\section*{Data availability statement}
The dataset analyzed in this study is publicly available at \url{https://doi.org/10.1021/acs.jproteome.5c01206} as Supporting Information (see Table S1) of \citet{Call:2026}.

\clearpage
\bibliographystyle{apalike}
\bibliography{260916_jmatNS_v2_assets/references}

\clearpage
\appendix
\section*{Supplementary Materials}\label{supplementary-material}

This supplementary material provides the proof of the theoretical results, construction details of the precision matrices, and additional numerical experiments supporting the main paper.

\section{Proof of theoretical results}\label{sec:pf_theory}

We use the notation, assumptions, and main theorem stated in Section~\ref{sec:theory}.

\subsection{Proof of Theorem \ref{thm:main}}\label{sec:pf_main}
\begin{proof}
For a given estimate $\hat{\omega}$ of $\omega$, define the solution of partial correlations: for given $\lambda>0$,
$$
\hat{\theta} = \arg\min_\theta L_n(\theta;\hat{\omega}) + \lambda || \theta ||_1.
$$
Note that we will denote the loss function evaluated at $\omega = \hat{\omega}$ without specifying it unless it occurs confusion.
Then, the optimality implies
$$
L_n(\hat{\theta}) + \lambda || \hat{\theta} ||_1 \le L_n(\theta^*) + \lambda || \theta^* ||_1.
$$
Rearranging it with the difference $\Delta := \hat{\theta} - \theta^*$, we get
$$
L_n(\theta^* +\Delta) - L_n(\theta^*) \le  + \lambda (|| \theta^* ||_1 - || \theta^* +\Delta ||_1 ).
$$
As the loss function is quadratic in $\theta$, we get
$$
L_n(\theta^* +\Delta) - L_n(\theta^*) =
\nabla L_n(\theta^*)^\top \Delta +
\frac{1}{2}\Delta^\top \nabla^2 L_n(\theta^*) \Delta,
$$
where $\nabla L_n(\theta^*), \nabla^2 L_n(\theta^*)$ are the gradient vector and  Hessian matrix, respectively, of $L_n$ with respect to $\theta$ evaluated at $(\theta^*, \hat{\omega})$.

We state the two lemmas that are essential for the proof.
Assume $\text{tr}(U)=p$ for the identifiability of the separable covariance matrix.
In Section \ref{sec:suitable_w}, we
introduce a suitable $\hat{\omega}$ satisfying conditions in the following lemma, while the rate $r_\Sigma$ is specified in Lemma \ref{lem:rate_Sigma_elemmax}.
The proofs of the lemmas are pended until Appendix \ref{sec:pf_lemma}.

\begin{lemma}\label{lem:bound_grad}
Suppose Assumption \ref{assum:bd_w} holds.
Let $\hat{\omega}$ be any estimate of $\omega^*$  satisfying $||\hat{\omega} - \omega^*||_\infty\le \min\{1/(2B_\omega), r_\omega\}$ and with probability at least $1-\delta_\omega$,  and $r_\Sigma = O(r_\omega)$.
Then, choose $\lambda$ such that
$$
\lambda = \Omega\left( (\max_j f_j)(1+|S|) \left\{ B_\omega^2 r_\Sigma + B_\omega^4(\max_j V_{jj})r_\omega \right\} \right).
$$
where $\Omega$ denotes the asymptotic lower bound. Then,
$||\nabla L_n(\theta^*;\hat{\omega} )||_\infty \le \lambda/2$ holds with probability at least $1- \delta_\omega - 3/(p\vee q)^2$.
\end{lemma}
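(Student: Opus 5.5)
We compute the gradient of $L_n(\theta;\omega)$ explicitly and compare its value at $(\theta^*,\hat\omega)$ with its value at $(\theta^*,\omega^*)$.

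\textbf{Gradient formula.} Write $\beta_{jk}(\omega)=\gamma_{jk}(\omega)\nu^*_{jk}$ for the regression coefficients and $\hat\Sigma=(\hat\sigma_{kl})$. For the coordinate $\nu_{jk}$, $j>k$, the pair appears in regression $j$ with factor $\gamma_{jk}$ and in regression $k$ with factor $\gamma_{kj}$. Hence
\[
\partial_{\nu_{jk}} L_n(\theta^*;\omega) = -f_j\gamma_{jk}(\omega)\Big(\hat\sigma_{jk}-\sum_{l\neq j}\gamma_{jl}(\omega)\nu^*_{jl}\hat\sigma_{lk}\Big) - f_k\gamma_{kj}(\omega)\Big(\hat\sigma_{kj}-\sum_{l\neq k}\gamma_{kl}(\omega)\nu^*_{kl}\hat\sigma_{lj}\Big).
\]
At $\omega=\omega^*$ and with $\hat\Sigma$ replaced by $V$, each bracket vanishes, since $\beta_{j\cdot}(\omega^*)=(V_{-j,-j})^{-1}V_{-j,j}$ gives $V_{jk}-\sum_l\beta_{jl}V_{lk}=0$ for $k\neq j$. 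So the population gradient is zero. I then split each bracket, say the first, as
\[
\big(\hat\sigma_{jk}-V_{jk}\big)-\sum_l\gamma_{jl}(\hat\omega)\nu^*_{jl}\big(\hat\sigma_{lk}-V_{lk}\big)-\sum_l\big(\gamma_{jl}(\hat\omega)-\gamma_{jl}(\omega^*)\big)\nu^*_{jl}V_{lk}.
\]

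\textbf{Bounding the first two terms.} On the event $\|\hat\omega-\omega^*\|_\infty\le 1/(2B_\omega)$, every $\hat\omega_{jj}\in[1/(2B_\omega),\,3B_\omega/2]$, so $\gamma_{jl}(\hat\omega)\lesssim B_\omega$. Using $|\nu^*_{jl}|\le1$, the sum over $l$ has at most $\deg(j)\le|S|$ nonzero terms. On the event of Lemma~\ref{lem:rate_Sigma_elemmax}, which holds with probability $\ge 1-3/(p\vee q)^2$, the first two terms together are $O\big((1+|S|)B_\omega r_\Sigma\big)$. Multiplying by the prefactor $f_j\gamma_{jk}(\hat\omega)\lesssim(\max_j f_j)B_\omega$ gives the contribution $(\max f)(1+|S|)B_\omega^2 r_\Sigma$.

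\textbf{Bounding the third term.} The third term needs a Lipschitz bound for $\gamma$:
\[
|\gamma_{jl}(\hat\omega)-\gamma_{jl}(\omega^*)|\le \tfrac12\max\omega^{-1/2}\,|\Delta\omega_{ll}|\,\omega_{jj}^{-1/2}+\dots \lesssim B_\omega^{c}r_\omega,
\]
obtained by the mean value theorem on $\sqrt{a/b}$ with $a,b\in[1/(2B_\omega),2B_\omega]$. Its partial derivatives are bounded by $B_\omega^{2}$-type factors. Together with $|V_{lk}|\le\max_j V_{jj}$ (Cauchy–Schwarz on a PSD matrix) and at most $|S|$ terms, the third term is $O\big(|S|B_\omega^{3}(\max V_{jj})r_\omega\big)$. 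With the prefactor, this becomes $(\max f)(1+|S|)B_\omega^4(\max V_{jj})r_\omega$.

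\textbf{Combining.} Summing both regressions' contributions and taking the maximum over $(j,k)$ gives $\|\nabla L_n(\theta^*;\hat\omega)\|_\infty\le C(\max f)(1+|S|)\{B_\omega^2r_\Sigma+B_\omega^4(\max V_{jj})r_\omega\}$. Choosing the constant implicit in $\lambda=\Omega(\cdot)$ at least $2C$ gives $\|\nabla L_n\|_\infty\le\lambda/2$. The probability statement follows from a union bound over the $\hat\omega$ event ($\delta_\omega$) and the $\hat\Sigma$ event. The assumption $r_\Sigma=O(r_\omega)$ is used only to ensure that the random $O_p$ rates are realized on those events with explicit constants.

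\textbf{Main obstacle.} The delicate step is tracking the powers of $B_\omega$ in the perturbation of $\gamma(\hat\omega)$ uniformly. It is also important to use the sparsity of $\nu^*_{j\cdot}$ (row degree $\le|S|$), rather than $q$, in the sums over $l$. A crude bound there would introduce $q$ and break the rate.
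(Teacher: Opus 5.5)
Your proof is correct and uses the same ingredients as the paper: the explicit gradient, the population normal equations $V_{jk}=\sum_{\ell\neq j}\gamma_{j\ell}(\omega^*)\nu^*_{j\ell}V_{\ell k}$, a mean-value Lipschitz bound for $\gamma$ on the $1/(2B_\omega)$-neighbourhood, $\sum_{\ell}|\nu^*_{j\ell}|\le|S|$, $|V_{\ell k}|\le\max_j V_{jj}$, Lemma~\ref{lem:rate_Sigma_elemmax}, and a union bound. What differs is how you organize the algebra.

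The paper passes through the oracle gradient. It first bounds $\|\nabla L_n(\theta^*;\omega^*)\|_\infty$ by $\|\widehat{\Sigma}-V\|_{\max}$. It then bounds $\nabla L_n(\theta^*;\hat\omega)-\nabla L_n(\theta^*;\omega^*)$ via two pieces: $A_1$, the perturbed residual paired with $\widehat{\Sigma}$, and $A_2$, the perturbed prefactor $\gamma_{jk}$ paired with the oracle residual. Because $A_1$ carries $\widehat{\Sigma}$ rather than $V$, the paper needs $\|\widehat{\Sigma}\|_{\max}\le\|\widehat{\Sigma}-V\|_{\max}+\max_jV_{jj}$. That step creates a second-order term of order $\|\hat\omega-\omega^*\|_\infty\,\|\widehat{\Sigma}-V\|_{\max}$ (times powers of $B_\omega$), which must then be absorbed.

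Your single centred expansion keeps $\gamma_{jk}(\hat\omega)$ as a bounded prefactor and pairs the $\gamma$-perturbation with $V$, so no cross term ever arises. Your bound is linear in $r_\Sigma$ and $r_\omega$ separately, and the hypothesis $r_\Sigma=O(r_\omega)$ is never used. Your closing remark about its role is therefore inaccurate: on your route it is simply superfluous.

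Tracking constants, the Lipschitz constant is $3B_\omega^2$, so your third term is $O(|S|B_\omega^2(\max_jV_{jj})r_\omega)$. After the prefactor, the $r_\omega$-part of the final bound carries $B_\omega^3$ rather than $B_\omega^4$. The paper's route has the advantage of giving a standalone bound on the oracle gradient; yours is shorter and slightly sharper.
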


\begin{lemma}\label{lem:pd}
Suppose Assumption \ref{assum:eigenvalue} holds. Let $\kappa:= 0.5 \cdot \eta  \cdot \min_j f_j > 0$.
Assume $(c\eta^6np)/16-(q\vee \log n)\log 21 > 0$ for some numerical constant $c>0$.
Then, we have with probability at least $1-2\exp\{
-(c\eta^6np)/16
+q\log 21\}$ that
$$
t^\top \nabla^2 L_n(\theta^*) t \ge \kappa || t||_2^2, \quad \forall t.
$$
\end{lemma}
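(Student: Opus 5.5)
The plan is to reduce the Hessian bound to a lower bound on $\lambda_{\min}(\widehat\Sigma)$, and then to get that lower bound from a concentration inequality combined with an $\varepsilon$-net. The first step is deterministic. Since $L_n(\cdot;\hat\omega)$ is quadratic in $\theta$, for any $t=(t_{jk})_{j>k}$ (with $t_{kj}:=t_{jk}$) we have
\[
t^\top \nabla^2 L_n(\theta^*)\, t=\frac{1}{np}\sum_{h=1}^n\sum_{i=1}^p\sum_{j=1}^q f_j\Big(\sum_{k\neq j}\gamma_{jk}(\hat\omega)\,t_{jk}\,Y^{(h)}_{ik}\Big)^2 .
\]
Define the $q\times q$ matrix $A(t)$ with zero diagonal and off-diagonal entries $A_{jk}=\gamma_{jk}(\hat\omega)t_{jk}$. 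Write $y^{(h)}_i\in\mathbb{R}^q$ for the $i$-th row of $Y^{(h)}$. The inner sum over $j$ is then at least $(\min_j f_j)\,\|A(t)y^{(h)}_i\|_2^2$. Summing over $h$ and $i$ gives
\[
t^\top \nabla^2 L_n\, t\ \ge\ (\min_j f_j)\,\mathrm{tr}\big(A\widehat\Sigma A^\top\big)\ \ge\ (\min_j f_j)\,\lambda_{\min}(\widehat\Sigma)\,\|A\|_F^2 .
\]
Each unordered pair $\{j,k\}$ contributes $t_{jk}^2(\gamma_{jk}^2+\gamma_{kj}^2)$ to $\|A\|_F^2$. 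Because $\gamma_{kj}=1/\gamma_{jk}$ and $a+1/a\ge 2$, we get $\|A\|_F^2\ge 2\|t\|_2^2$, whatever the value of $\hat\omega$. It therefore suffices to show that $\lambda_{\min}(\widehat\Sigma)\ge \eta/4$ with the stated probability, which yields $\kappa=\tfrac12\eta\min_j f_j$.

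The second step is a pointwise bound. Fix a unit vector $x\in\mathbb{R}^q$. Then $x^\top\widehat\Sigma x=(np)^{-1}\sum_h\|Y^{(h)}x\|_2^2$. Under $\mathrm{vec}(Y^{(h)})\sim N(0,V\otimes U)$ we have $Y^{(h)}x\sim N_p\big(0,(x^\top Vx)\,U\big)$. So $x^\top\widehat\Sigma x=(x^\top Vx)\,(np)^{-1}g^\top(I_n\otimes U)g$ with $g\sim N_{np}(0,I)$, and its mean is $x^\top Vx\cdot \mathrm{tr}(U)/p=x^\top Vx\ge\eta$. I would apply the Hanson--Wright inequality to the quadratic form $g^\top(I_n\otimes U)g$, with deviation level $\epsilon\, np$ for $\epsilon$ a small multiple of $\eta$. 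This controls $|x^\top\widehat\Sigma x-x^\top Vx|$ by a fraction of $\eta$ except on an event of probability about $2\exp(-c\eta^2 np)$, after also using $x^\top Vx\le 1/\eta$ to convert relative to absolute deviation. These powers of $\eta$ are where the $\eta^6$ in the exponent comes from.

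The third step is the net argument. Take a $1/10$-net $\mathcal N$ of the unit sphere in $\mathbb{R}^q$, with $|\mathcal N|\le 21^q$. The standard argument bounds $\|\widehat\Sigma-V\|_2$ by a constant times its maximum over $\mathcal N$. A union bound then gives $\lambda_{\min}(\widehat\Sigma)\ge \lambda_{\min}(V)-\eta/2\cdot\tfrac32\ge\eta/4$ with probability at least $1-2\exp\{-(c\eta^6 np)/16+q\log 21\}$. The hypothesis $(c\eta^6np)/16>(q\vee\log n)\log 21$ makes this probability nontrivial.

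The main obstacle is the Hanson--Wright step. The relevant norms are $\|I_n\otimes U\|_F^2=n\|U\|_F^2$ and $\|U\|_2$, and Assumption~\ref{assum:eigenvalue} constrains only $V$, while $\mathrm{tr}(U)=p$ gives only $\|U\|_F^2\le p\|U\|_2$. To obtain an exponent linear in $np$ I would first try to show that the effective rank $p/\|U\|_2$ is of order $p$. Failing that, I would absorb the spectrum of $U$ by interpreting the eigenvalue condition on the whole separable covariance under the normalization $\mathrm{tr}(U)=p$, which is how I read the $\eta^6$ dependence. The remaining steps are routine once this pointwise tail bound is in hand.
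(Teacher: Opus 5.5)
Your deterministic reduction and your net step are correct, and they follow the paper's own route. The paper also writes the quadratic form as $\sum_d f_d\tilde\Delta_d^\top\widehat\Sigma_{-d,-d}\tilde\Delta_d$, which is your $\mathrm{tr}(A\widehat\Sigma A^\top)$ with zero-diagonal $A$, and then needs $\lambda_{\min}(\widehat\Sigma)$ bounded below by a multiple of $\eta$. Your observation $\gamma_{jk}^2+\gamma_{kj}^2\ge 2$ supplies the justification the paper omits when it passes from $\sum_d f_d\|\tilde\Delta_d\|_2^2$ to $\min_j f_j\|\Delta\|_2^2$.

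The gap is the one you flagged, and your first escape route cannot work. Nothing in the hypotheses controls $\|U\|_2$, and without such control the lemma is false. Take:
\begin{itemize}
\item $V=I_q$, $f\equiv 1$, and $\hat\omega=\omega^*=\mathbf 1$, so $\gamma\equiv1$;
\item $U=\mathrm{diag}(p-(p-1)\delta,\delta,\ldots,\delta)$ with fixed $\delta<\eta/8$;
\item fixed $n,q$ with $q>2n+1$.
\end{itemize}
The $nq$ linear conditions $Ay^{(h)}_1=0$, $h\in[n]$, act on the $q(q-1)/2$ free entries of a symmetric zero-diagonal $A$, so they have a nonzero solution $t$. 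The remaining rows are independent of the first ones and have covariance $\delta I_q$. Hence $t^\top\nabla^2L_n t=(np)^{-1}\sum_h\sum_{i\ge2}\|Ay^{(h)}_i\|_2^2$ concentrates at about $2\delta\|t\|_2^2<\kappa\|t\|_2^2$ as $p\to\infty$. Meanwhile the hypothesis on $np$ holds and the claimed probability tends to one.

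So an explicit extra assumption on $U$ is unavoidable. That is your second option, but it must be stated as a hypothesis rather than as a reading of Assumption~\ref{assum:eigenvalue}. With it, your argument closes:
\begin{itemize}
\item Suppose the eigenvalues of $V\otimes U$ lie in $(\eta,1/\eta)$. Then $\|U\|_2<\eta^{-2}$ and $\|U\|_F^2\le\|U\|_2\,\mathrm{tr}(U)<p\eta^{-2}$.
\item Apply Hanson--Wright at deviation $\eta^2np/2$. Since $x^\top Vx<1/\eta$, this makes $|x^\top(\widehat\Sigma-V)x|\le\eta/2$, with tail at most $2\exp(-c\,\eta^6np/4)$.
\item Your union bound over the $21^q$ net points then gives the stated probability, after renaming $c$.
\end{itemize}

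The paper's proof carries the same implicit assumption. It imports from Sun et al.\ the bound $\|\widehat\Sigma-V\|_2\le\widetilde C_0\sqrt{(q\vee\log n)/(np)}$, and the example above violates that bound too. In the paper, the $\eta^6$ is $\eta^4$ from the cited exponent times $\eta^2$ from setting $\widetilde C_0\sqrt{(q\vee\log n)/(np)}=\eta/2$. The hypothesis $(c\eta^6np)/16>(q\vee\log n)\log21$ serves only to keep that $\widetilde C_0$ above the cited threshold, so your self-contained route does not need it.
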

\noindent
Together with H\"{o}lder's inequality, Lemma \ref{lem:bound_grad} implies that  with probability at least $1- \delta_\omega - 3/(p\vee q)^2$
$$
|\nabla L_n(\theta^*)^\top \Delta |\le ||\nabla L_n(\theta^*)||_\infty ||\Delta||_1 \le \lambda ||\Delta||_1  / 2.
$$
Combining this with Lemma \ref{lem:pd}, we get
$$
- \dfrac{\lambda}{2} ||\Delta||_1 + \dfrac{\kappa}{2} ||\Delta||_2^2 \le \lambda (|| \theta^* ||_1 - || \theta^* +\Delta ||_1 ),
$$
with probability at least $1-3/(p\vee q)^2 - 2\exp\{-(c\eta^6np)/16 +q\log 21\}$.

Let $S = \{j: \theta^*_j \neq 0\}$ be the index set of non-zero partial correlations, which can be thought as an enumeration of $\{(j,k): \nu^*_{jk}\neq 0\}$. Then,
$$
|| \theta^* ||_1 - || \theta^* +\Delta ||_1
= || \theta^*_S ||_1 - (||  \theta^*_S + \Delta_S ||_1 +
||\Delta_{S^c} ||_1)
\le  ||\Delta_S ||_1 - || \Delta_{S^c} ||_1,
$$
where the last inequality uses $||  \theta^*_S + \Delta_S ||_1 \ge ||  \theta^*_S||_1 - ||\Delta_S ||_1$. Thus,
$$
\dfrac{\kappa}{2} ||\Delta||_2^2 \le \dfrac{\lambda}{2}(3||\Delta_S ||_1 - ||\Delta_{S^c}||_1).
$$
Dropping $||\Delta_{S^c}||_1$, we can get
$$
\dfrac{\kappa}{2} ||\Delta||_2^2 \le \dfrac{3\lambda}{2} ||\Delta_S||_1.
$$
Using $ ||\Delta_S||_1 \le \sqrt{|S|}  ||\Delta_S||_2$, we conclude that
$$
||\Delta||_2 \le \dfrac{3\lambda\sqrt{|S|}}{\kappa}.
$$
\end{proof}

\section{Proof of the lemmas}\label{sec:pf_lemma}
\subsection{Proof of Lemma \ref{lem:bound_grad}}
As a sketch of the proof, we need  the following two:
\begin{enumerate}
    \item The event $\{||\nabla L_n(\theta^*;\omega^*)||_\infty \le \lambda/2\}$ holds with high probability.

    \item $||\nabla L_n(\theta^*;\hat{\omega}) - \nabla L_n(\theta^*;\omega^*)||_\infty$ is small enough.
\end{enumerate}

First, observe that the gradient of $L_n$ with respect to $\nu_{jk}$ is given by
$$
\nabla_{jk} L_n(\theta;\omega)
=-\dfrac{1}{np}\Big(
f_j \gamma_{jk}(\omega) \sum_{h,i} r_{ij}^{(h)}(\theta,\omega) Y_{ik}^{(h)} +
f_k \gamma_{kj} (\omega) \sum_{h,i} r_{ik}^{(h)}(\theta,\omega) Y_{ij}^{(h)}
\Big), \quad j > k,
$$
where $r_{ij}^{(h)}=r_{ij}^{(h)}(\theta,\omega) := Y^{(h)}_{ij} -  \sum_{\ell:\ell\neq j}  \gamma_{j\ell}(\omega)\nu_{j\ell} Y^{(h)}_{i\ell}$.
Note that $\widehat{\Sigma}_{jk} = \sum_{h,i}Y_{ij}^{(h)} Y_{ik}^{(h)}/(np)$.
Then, we can derive
\begin{equation}\label{eq:sum_rY}
\begin{array}{rcl}
\dfrac{1}{np}\Big|\sum_{h,i} r_{ij}^{(h)}(\theta^*,\omega^*) Y_{ik}^{(h)} \Big|
&=&
\Big| \widehat{\Sigma}_{jk} - \sum\limits_{\ell:\ell\neq j}  \gamma_{j\ell}(\omega^*) \nu^*_{j\ell} \widehat{\Sigma}_{\ell k}
\Big|
\\
&=&
\Big| (\widehat{\Sigma}_{jk} - V_{jk})  - \sum\limits_{\ell:\ell\neq j}  \gamma_{j\ell}(\omega^*) \nu^*_{j\ell} (\widehat{\Sigma}_{\ell k} - V_{\ell k})
\Big|
\\
&\le &
\big| \widehat{\Sigma}_{jk} - V_{jk} \big| +
\max\limits_{\ell:\ell\neq j} \gamma_{j\ell}(\omega^*)
\max\limits_{\ell:\ell\neq j} \big|\widehat{\Sigma}_{\ell k} - V_{\ell k}
\big|
\sum\limits_{\ell:\ell\neq j} |\nu^*_{j\ell}| \\
&\le &
\big(1 +
\max\limits_{\ell:\ell\neq j} \gamma_{j\ell}(\omega^*)
\sum\limits_{\ell:\ell\neq j} |\nu^*_{j\ell}| \big) \max\limits_{\ell} \big|\widehat{\Sigma}_{\ell k} - V_{\ell k}
\big|,
\end{array}
\end{equation}
where we use $\Sigma_{jk}  = \sum\limits_{\ell:\ell\neq j}  \gamma_{j\ell}(\omega^*) \nu^*_{j\ell} \Sigma_{\ell k}$ in the second equality.

Using Assumption \ref{assum:bd_w}, we see that $\max_{j,k} \gamma_{jk}(\omega^*) \le B_\omega$, i.e. bounded from above.
Moreover, remark that $\sum\limits_{\ell\neq j} |\nu^*_{j\ell}| \le |S|$ as $|\nu^*_{jk}| \le 1$.
Combining all together, we can get the upper bound of the gradient vector:
$$
||\nabla L_n(\theta^*)||_\infty \le
2 (\max_{j} f_j) (B_\omega)^2 (1 + |S| ) \max\limits_{\ell,k} \big|\widehat{\Sigma}_{\ell k} - V_{\ell k}
\big|.
$$

Second, observe the difference of two terms below evaluated at $\omega^*$ and $\hat{\omega}$, respectively:
$$
\begin{array}{l}
\dfrac{1}{np}f_j\Big(
 \gamma_{jk}(\hat{\omega}) \sum_{h,i} r_{ij}^{(h)}(\theta^*,\hat{\omega}) Y_{ik}^{(h)}
- \gamma_{jk}(\omega^*) \sum_{h,i} r_{ij}^{(h)}(\theta^*,\omega^*) Y_{ik}^{(h)} \Big)\\
=
\dfrac{1}{np}f_j\Big\{
 \gamma_{jk}(\hat{\omega}) \sum_{h,i} \big(r_{ij}^{(h)}(\theta^*,\hat{\omega}) - r_{ij}^{(h)}(\theta^*,\omega^*)\big) Y_{ik}^{(h)}
\\
\qquad + \big(\gamma_{jk}(\hat{\omega}) - \gamma_{jk}(\omega^*)\big) \sum_{h,i} r_{ij}^{(h)}(\theta^*,\omega^*) Y_{ik}^{(h)} \Big\}\\
=: A_1 + A_2
\end{array}
$$
We can get an upper bound of $A_1$ as follows.
$$
\begin{array}{rcl}
|A_1| &=&
\dfrac{1}{np}f_j
 \gamma_{jk}(\hat{\omega})
 \Big|\sum_{h,i}\sum_{\ell:\ell \neq j} \nu_{j\ell}^* (\gamma_{j\ell}(\hat{\omega}) - \gamma_{j\ell}(\omega^*)) Y_{i\ell}^{(h)} Y_{ik}^{(h)} \Big|\\
&=& \dfrac{1}{np}f_j
 (\gamma_{jk}(\hat{\omega}) - \gamma_{jk}(\omega^*) + \gamma_{jk}(\omega^*))
 \Big|\sum_{h,i}\sum_{\ell:\ell \neq j} \nu_{j\ell}^* (\gamma_{j\ell}(\hat{\omega}) - \gamma_{j\ell}(\omega^*)) Y_{i\ell}^{(h)} Y_{ik}^{(h)} \Big|\\
&\le& f_j
\big( |\gamma_{jk}(\hat{\omega}) - \gamma_{jk}(\omega^*)| + \gamma_{jk}(\omega^*) \big)
 \sum\limits_{\ell:\ell \neq j} |\nu_{j\ell}^*|
 \cdot \max\limits_{\ell:\ell \neq j} |\gamma_{j\ell}(\hat{\omega}) - \gamma_{j\ell}(\omega^*)|
 \cdot \max\limits_{\ell:\ell \neq j}\big|\widehat{\Sigma}_{\ell k}
\big|
\end{array}
$$
Using \eqref{eq:sum_rY}, we can get the upper bound of $|A_2|$:
$$
|A_2| \le f_j
 \big|\gamma_{jk}(\hat{\omega}) - \gamma_{jk}(\omega^*)\big|
 \big(1 +
\max\limits_{\ell:\ell\neq j} \gamma_{j\ell}(\omega^*)
\sum\limits_{\ell:\ell\neq j} |\nu^*_{j\ell}| \big) \max\limits_{\ell} \big|\widehat{\Sigma}_{\ell k}
\big|.
$$
Then, we get
$$
|A_1|  + |A_2|  \le
f_j B_\omega
\big( |\gamma_{jk}(\hat{\omega}) - \gamma_{jk}(\omega^*)| + B_\omega \big)
 \max_{\ell:\ell \neq j} |\gamma_{j\ell}(\hat{\omega}) - \gamma_{j\ell}(\omega^*)|
 \big(1 +
\sum\limits_{\ell:\ell\neq j} |\nu^*_{j\ell}| \big) \max\limits_{\ell} \big|\widehat{\Sigma}_{\ell k}
\big|.
$$
Now, we derive that
$$
\begin{array}{l}
||\nabla L_n(\theta^*;\hat{\omega}) - \nabla L_n(\theta^*;\omega^*)||_\infty\\
\qquad \le
2 (\max_j f_j) B_\omega
\big( \max\limits_{j\neq k} |\gamma_{jk}(\hat{\omega}) - \gamma_{jk}(\omega^*)| + B_\omega \big)
\\
\qquad\quad {}\times \max\limits_{\ell \neq j} |\gamma_{j\ell}(\hat{\omega}) - \gamma_{j\ell}(\omega^*)|
 \big(1 +
\sum\limits_{\ell\neq j} |\nu^*_{j\ell}| \big) \max\limits_{\ell,k} \big|\widehat{\Sigma}_{\ell k}
\big|
\end{array}
$$
\begin{lemma}
Suppose Assumption \ref{assum:bd_w} holds, and let $\epsilon_0= 1/(2B_\omega)$.
For $\hat{\omega} \in N_{\omega^*}(\epsilon_0)$, we have
$$
|\gamma_{j\ell}(\hat{\omega}) - \gamma_{j\ell}(\omega^*)| \le 2(B_\omega)^2||\hat{\omega} - \omega^*||_\infty, \quad \forall j,k
$$
\end{lemma}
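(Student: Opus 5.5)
The plan is a direct calculus argument on the map $g(a,b)=\sqrt{b/a}$, where $a=\omega_{jj}$ and $b=\omega_{\ell\ell}$, so that $\gamma_{j\ell}(\omega)=g(\omega_{jj},\omega_{\ell\ell})$.

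\emph{Step 1 (localization).} For $\hat{\omega}\in N_{\omega^*}(\epsilon_0)$, every coordinate satisfies $|\hat{\omega}_{jj}-\omega^*_{jj}|\le 1/(2B_\omega)$. Assumption~\ref{assum:bd_w} gives $\omega^*_{jj}\ge 1/B_\omega$, so $\hat{\omega}_{jj}\ge 1/(2B_\omega)$. It also gives $\hat{\omega}_{jj}\le B_\omega+1/(2B_\omega)\le 2B_\omega$, using $B_\omega>1$. Since the box $[1/(2B_\omega),2B_\omega]^2$ is convex, the whole segment between $(\omega^*_{jj},\omega^*_{\ell\ell})$ and $(\hat{\omega}_{jj},\hat{\omega}_{\ell\ell})$ stays in this box.

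\emph{Step 2 (mean value theorem).} Apply the mean value theorem to $g$ along that segment, with intermediate point $(\bar a,\bar b)$ in the box:
\[
|\gamma_{j\ell}(\hat{\omega})-\gamma_{j\ell}(\omega^*)|\le \big(|\partial_a g(\bar a,\bar b)|+|\partial_b g(\bar a,\bar b)|\big)\,\|\hat{\omega}-\omega^*\|_\infty .
\]
The partial derivatives are
\[
\partial_a g=-\tfrac12 b^{1/2}a^{-3/2},\qquad \partial_b g=\tfrac12 a^{-1/2}b^{-1/2}.
\]

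\emph{Step 3 (bounding the derivatives).} On the box, $|\partial_a g|\le \tfrac12 (2B_\omega)^{1/2}(2B_\omega)^{3/2}=2B_\omega^2$. Similarly, $|\partial_b g|\le \tfrac12(2B_\omega)=B_\omega$.

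The raw sum from Step 3 is $2B_\omega^2+B_\omega$, which slightly exceeds the claimed constant $2B_\omega^2$. This constant is the main obstacle, and the bound must be tightened as follows. Split the difference as
\[
g(\hat a,\hat b)-g(a^*,b^*)=\big[g(\hat a,\hat b)-g(\hat a,b^*)\big]+\big[g(\hat a,b^*)-g(a^*,b^*)\big].
\]
In each bracket only one argument moves, so the corresponding sharper ranges can be used. The moving argument is in the enlarged box. The fixed starred argument lies in $[1/B_\omega,B_\omega]$.

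With these ranges, the first bracket is bounded by $\tfrac12\hat a^{-1/2}\bar b^{-1/2}\le \tfrac12\sqrt{2B_\omega}\sqrt{2B_\omega}=B_\omega$. The second bracket is bounded by $\tfrac12 (b^*)^{1/2}\bar a^{-3/2}\le \tfrac12 B_\omega^{1/2}(2B_\omega)^{3/2}=\sqrt2\,B_\omega^2$. The total, $B_\omega+\sqrt2 B_\omega^2$, is at most $2B_\omega^2$, because $(2-\sqrt2)B_\omega\ge 1$ whenever $B_\omega\ge 1/(2-\sqrt2)\approx1.71$.

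For $B_\omega\in(1,1.71)$ I would still need a small refinement. One option is to use $\hat b\le B_\omega+1/(2B_\omega)$ instead of $2B_\omega$. If that does not close the gap, I would settle for the constant $3B_\omega^2$, since only the order $B_\omega^2$ matters for Lemma~\ref{lem:bound_grad}. In either case the bound is uniform in $(j,\ell)$, which yields the stated inequality for all index pairs.
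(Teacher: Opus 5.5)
Your Steps 1--3 are, in substance, the paper's own proof: the mean value theorem on the box $[1/(2B_\omega),2B_\omega]^2$, with gradient $\ell_1$-norm at most $2B_\omega^2+B_\omega$. The paper then closes with $2B_\omega^2+B_\omega\le 3B_\omega^2$. So its proof actually ends with the constant $3(B_\omega)^2$, not the $2(B_\omega)^2$ of the displayed statement. It is this $3$ that is carried into the later bound on $\|\nabla L_n(\theta^*;\hat\omega)-\nabla L_n(\theta^*;\omega^*)\|_\infty$. Your fallback is therefore exactly what the paper establishes, and you were right that the raw argument does not deliver $2$.

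Your attempted sharpening to $2B_\omega^2$ has a genuine gap for $1<B_\omega<(2+\sqrt2)/2$, and the patch you suggest cannot close it. In your split the upper bound on $\hat b$ never enters: the first bracket uses only the lower bounds $\hat a,\bar b\ge 1/(2B_\omega)$, and the second is evaluated at $b=b^*\le B_\omega$. The loss comes from the mean value theorem itself. It forces you to put the intermediate point at the worst corner $\bar a=1/(2B_\omega)$, even though one endpoint satisfies $a^*\ge 1/B_\omega$.

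Keep your split but use exact secant identities instead. With $\hat a,\hat b\ge 1/(2B_\omega)$ and $a^*,b^*\ge 1/B_\omega$,
\[
\bigl|\hat a^{-1/2}-(a^*)^{-1/2}\bigr|=\frac{|\hat a-a^*|}{\sqrt{\hat a\,a^*}\,\bigl(\sqrt{\hat a}+\sqrt{a^*}\bigr)}\le\frac{2B_\omega^{3/2}}{1+\sqrt2}\,|\hat a-a^*|,\qquad
\bigl|\sqrt{\hat b}-\sqrt{b^*}\bigr|=\frac{|\hat b-b^*|}{\sqrt{\hat b}+\sqrt{b^*}}\le\frac{\sqrt{2B_\omega}}{1+\sqrt2}\,|\hat b-b^*|.
\]
In your decomposition these are multiplied by $\sqrt{b^*}\le\sqrt{B_\omega}$ and $\hat a^{-1/2}\le\sqrt{2B_\omega}$ respectively, giving
\[
|\gamma_{j\ell}(\hat\omega)-\gamma_{j\ell}(\omega^*)|\le\frac{2B_\omega^2+2B_\omega}{1+\sqrt2}\,\|\hat\omega-\omega^*\|_\infty\le\frac{4}{1+\sqrt2}\,B_\omega^2\,\|\hat\omega-\omega^*\|_\infty .
\]
Since $4/(1+\sqrt2)=4(\sqrt2-1)<2$, the stated constant $2B_\omega^2$ does hold for every $B_\omega>1$. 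This is a slightly stronger conclusion than the paper's proof actually reaches.
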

\begin{proof}
As $\omega \mapsto \gamma_{j\ell}(\omega)$ is differentiable, the mean-value theorem yields that there exists $t \in (0,1)$ such that
$$
\gamma_{j\ell}(\hat{\omega}) - \gamma_{j\ell}(\omega^*) =
\nabla \gamma_{j\ell}(\tilde{\omega})^\top (\hat{\omega} - \omega^*),
$$
where $\tilde{\omega} = t\hat{\omega} + (1-t) \omega^*$. The norm of the gradient vector is
$$
||\nabla \gamma_{j\ell}(\omega)^\top||_1 \le 0.5(\omega_{jj}^{-3/2}\omega_{kk}^{1/2} + \omega_{jj}^{-1/2}\omega_{kk}^{-1/2} ).
$$
Since $B_\omega^{-1}\le \omega_{jj} \le B_\omega$ and $\hat{\omega} \in N_{\omega^*}(\epsilon_0)$, we have $1/(2B_\omega) \le \hat{\omega}_{jj} \le 2B_\omega$, so as $1/(2B_\omega) \le \tilde{\omega}_{jj} \le 2B_\omega$, $\forall j$.
Consequently, $||\nabla \gamma_{j\ell}(\tilde{\omega})^\top||_1 \le 3 (B_\omega)^2$.
Then, we get
$$
|\gamma_{j\ell}(\hat{\omega}) - \gamma_{j\ell}(\omega^*)| \le
||\nabla \gamma_{j\ell}(\tilde{\omega})||_1 \cdot ||\hat{\omega} - \omega^*||_\infty
\le
3(B_\omega)^2||\hat{\omega} - \omega^*||_\infty
$$
\end{proof}

Using this lemma to the last inequality, we get for some numerical constant $C>0$
$$
\begin{array}{l}
||\nabla L_n(\theta^*;\hat{\omega}) - \nabla L_n(\theta^*;\omega^*)||_\infty\\
\qquad \le
C (\max_j f_j) (B_\omega)^4\big( 3B_\omega||\hat{\omega} - \omega^*||_\infty   + 1 \big) (1 + |S|)
||\hat{\omega} - \omega^*||_\infty
  \max\limits_{\ell,k} \big|\widehat{\Sigma}_{\ell k}
\big|.
\end{array}
$$

Now, we combine the two results:
$$
\begin{aligned}
&||\nabla L_n(\theta^*;\hat{\omega} )||_\infty\\
&\quad\le ||\nabla L_n(\theta^*;\hat{\omega}) - \nabla L_n(\theta^*; \omega^* )||_\infty
 + ||\nabla L_n(\theta^*; \omega^* )||_\infty\\
&\quad\le C' (\max_{j} f_j) (1 + |S| ) (B_\omega)^2
 \max\limits_{\ell,k} \big|\widehat{\Sigma}_{jk} - V_{jk}\big|\\
&\qquad + C' (\max_j f_j) (1 + |S|) (B_\omega)^4
 \big(3B_\omega||\hat{\omega} - \omega^*||_\infty + 1\big)\\
&\qquad\quad {}\times \max\limits_{\ell,k}\big|\widehat{\Sigma}_{\ell k}\big|
 \cdot ||\hat{\omega} - \omega^*||_\infty,
\end{aligned}
$$
where $C'>0$ is a numerical constant.
Using the triangular inequality
$$
\|\widehat{\Sigma}\|_{\max} \le \|\widehat{\Sigma}-V\|_{\max} + \|V\|_{\max} =
\|\widehat{\Sigma}-V\|_{\max} + \max_{1\le j\le q} V_{jj},
$$
we conclude the proof.

\subsection{Proof of Lemma \ref{lem:pd}}

A tedious calculation of the second derivative $\nabla_{j_2 k_2} \nabla_{j_1k_1} L_n(\theta;\omega)$, $j_2 >k_2, j_1> k_1$, of $L_n$ leads to the following:
\begin{enumerate}
    \item[\underline{Case 1}:] $j_1=j_2(=:d)$, $k_1 \neq k_2$, and $\max(k_1, k_2) < d$.
    $$
    \nabla_{j_2 k_2} \nabla_{j_1k_1} L_n(\theta;\omega) = f_d \gamma_{dk_1}\gamma_{dk_2} \widehat{\Sigma}_{k_1 k_2}.
    $$

    \item[\underline{Case 2}:] $j_1=j_2(=:j)$, $k_1 = k_2(=:k)$
    $$
    \nabla_{j_2 k_2} \nabla_{j_1k_1} L_n(\theta;\omega) = f_j (\gamma_{jk})^2\widehat{\Sigma}_{kk} + f_k(\gamma_{kj})^2\widehat{\Sigma}_{jj}.
    $$

    \item[\underline{Case 3}:] $k_1=k_2(=:d)$, $j_1 \neq j_2$
    $$
    \nabla_{j_2 k_2} \nabla_{j_1k_1} L_n(\theta;\omega) = f_d \gamma_{dj_1}\gamma_{dj_2} \widehat{\Sigma}_{j_1 j_2}.
    $$

    \item[\underline{Case 4}:] $j_1=k_2(=:d)$ or $j_2=k_1(=:d)$
    $$
    \nabla_{j_2 k_2} \nabla_{j_1k_1} L_n(\theta;\omega) = f_d \gamma_{dj_2}\gamma_{dk_1} \widehat{\Sigma}_{k_1 j_2}.
    $$

    \item[\underline{Case 5}:] $j_1, j_2, k_1, k_2$ are all distinct
    $$
    \nabla_{j_2 k_2} \nabla_{j_1k_1} L_n(\theta;\omega) = 0.
    $$
\end{enumerate}
Thus, the quadratic form of the Hessian matrix is
$$
\Delta^\top \nabla^2 L_n(\theta;\omega) \Delta = \sum_{d \in [q]}\sum_{c_1, c_2: c_1 \neq d, c_2 \neq d} f_d \gamma_{dc_1}\Delta_{d c_1}\gamma_{dc_2} \Delta_{d c_2}\widehat{\Sigma}_{c_1 c_2}
=
\sum_{d \in [q]} f_d \tilde{\Delta}_d^\top \widehat{\Sigma}_{-d,-d}\tilde{\Delta}_d,
$$
where $\tilde{\Delta}_d = (\Delta_{d c}\gamma_{dc}; c \neq d)$. Since $\widehat{\Sigma}_{-d,-d}$ is the principal submatrix of $\widehat{\Sigma}$ and the smallest eigenvalue of $\widehat{\Sigma}$ is strictly larger than a certain constant $\eta >0$ by Lemma \ref{lem:norm_spectral}, we can derive that with probability at least $1-2\exp\{
-(c\eta^6np)/16
+q\log 21\}$
$$
\sum_{d \in [q]} f_d \tilde{\Delta}_d^\top \widehat{\Sigma}_{-d,-d}\tilde{\Delta}_d \ge 0.5 \eta \sum_{d \in [q]} f_d  ||\tilde{\Delta}_d ||_2^2 \ge
0.5 \cdot \eta \cdot \min_j f_j   ||\Delta ||_2^2 >
0.
$$
Let $\kappa = 0.5 \cdot \eta \cdot \min_j f_j$, which concludes the proof.

\subsection{Concentration inequalities}

\subsubsection{Convergence rate of the covariance estimator}

The following lemma specifies the rate of convergence of the covariance estimator in terms of the element-wise maximum norm.
\begin{lemma}\label{lem:rate_Sigma_elemmax}
Suppose $\text{tr}(U)=p$.
It holds with probability at least $1 -3/{(p\vee q)^2}$
$$
\|\widehat{\Sigma}-V\|_{\max}
\le
C
\left(\max_{1\le j\le q} V_{jj}\right)
\frac{\|U\|_{F}}{p}
\sqrt{\frac{\log(p\vee q)}{n}}.
$$
\end{lemma}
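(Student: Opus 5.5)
The estimator $\widehat{\Sigma}_{jk}=\sum_{h,i}Y^{(h)}_{ij}Y^{(h)}_{ik}/(np)$ is a Gaussian quadratic form, so the plan is to bound each entry with the Hanson--Wright inequality and then take a union bound over the $q^2$ pairs $(j,k)$.

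\textbf{Step 1: quadratic-form representation.} Let $y=(\mathrm{vec}(Y^{(1)})^\top,\dots,\mathrm{vec}(Y^{(n)})^\top)^\top\sim N(0,I_n\otimes V\otimes U)$, and write $y=\Gamma z$ with $z\sim N(0,I_{npq})$ and $\Gamma=I_n\otimes V^{1/2}\otimes U^{1/2}$. Then
\[
\widehat{\Sigma}_{jk}=z^\top A_{jk}z,\qquad A_{jk}=\frac{1}{np}\,\Gamma\Bigl(I_n\otimes \tfrac12(e_je_k^\top+e_ke_j^\top)\otimes I_p\Bigr)\Gamma .
\]
For the mean, $\mathrm{tr}(A_{jk})=V_{jk}\,\mathrm{tr}(U)/p=V_{jk}$, using $\mathrm{tr}(U)=p$. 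So $\mathbb E\widehat{\Sigma}_{jk}=V_{jk}$ and the estimator is unbiased.

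\textbf{Step 2: norms of $A_{jk}$.} Since $A_{jk}=\frac{1}{np}\,I_n\otimes M_{jk}\otimes U$ with $M_{jk}=V^{1/2}\,\mathrm{sym}(e_je_k^\top)\,V^{1/2}$, the norms factor.
\begin{itemize}
\item For the Frobenius norm,
\[
\|A_{jk}\|_F=\frac{\sqrt n\,\|M_{jk}\|_F\,\|U\|_F}{np}.
\]
\item For the operator norm,
\[
\|A_{jk}\|_2=\frac{\|M_{jk}\|_2\,\|U\|_2}{np}\le \frac{\|M_{jk}\|_F\,\|U\|_F}{np}.
\]
\end{itemize}
The remaining factor is bounded as follows. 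Write $M_{jk}=\tfrac12(v_jv_k^\top+v_kv_j^\top)$, where $v_j=V^{1/2}e_j$ and $\|v_j\|_2^2=V_{jj}$. Then
\[
\|M_{jk}\|_F\le \|v_j\|_2\|v_k\|_2\le \max_\ell V_{\ell\ell}.
\]

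\textbf{Step 3: Hanson--Wright and union bound.} The Hanson--Wright inequality gives
\[
P\bigl(|z^\top Az-\mathrm{tr}A|>t\bigr)\le 2\exp\bigl(-c\min\{t^2/\|A\|_F^2,\ t/\|A\|_2\}\bigr).
\]
Take $t=C\,\|A_{jk}\|_F\sqrt{\log(p\vee q)}$, which equals $C(\max_\ell V_{\ell\ell})\,\frac{\|U\|_F}{p}\sqrt{\log(p\vee q)/n}$ up to constants. The Gaussian (first) branch of the minimum then gives exponent $cC^2\log(p\vee q)$. The linear branch gives exponent
\[
\frac{t}{\|A_{jk}\|_2}\ \ge\ C\sqrt{n\log(p\vee q)}\;\frac{\|U\|_F}{\|U\|_2}.
\]
This exceeds $\log(p\vee q)$ whenever $n\gtrsim\log(p\vee q)$, because $\|U\|_F\ge\|U\|_2$. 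A union bound over the $q(q+1)/2$ pairs, with $C$ chosen large enough, gives failure probability at most $3/(p\vee q)^2$.

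\textbf{Main obstacle.} The delicate step is the linear branch of the minimum. It needs $n\gtrsim\log(p\vee q)$ to be absorbed. If that condition cannot be assumed, the bound picks up an extra term of order $\|U\|_2\log(p\vee q)/(np)$. In that case I would fold this term into the constant under the paper's sample-size regime. Alternatively, I would note that $\|U\|_F/p\ge 1/\sqrt p$ (by $\mathrm{tr}(U)=p$ and Cauchy--Schwarz) to show that it is dominated.
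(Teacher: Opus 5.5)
\paragraph{Comparison with the paper's route.} Your computation is correct as far as it goes. The factorization $A_{jk}=\frac{1}{np}\,I_n\otimes M_{jk}\otimes U$ holds, and so does the trace identity $\mathrm{tr}(A_{jk})=V_{jk}$ under $\mathrm{tr}(U)=p$. The bound $\|M_{jk}\|_F^2=\tfrac12(V_{jj}V_{kk}+V_{jk}^2)\le(\max_\ell V_{\ell\ell})^2$ is right, and so is the Gaussian branch of Hanson--Wright with a union bound over at most $q^2$ pairs. The paper gives no argument of its own here; it imports the bound from Lemma~4.3 of \citet{Zhou:2014}. Your route is a self-contained derivation of that kind. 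Its real advantage is that it exposes a hypothesis the paper's statement omits.

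\paragraph{The obstacle is genuine.} The missing hypothesis is the real issue, and your second remedy does not work. The linear-branch term is $(\max_\ell V_{\ell\ell})\|U\|_2\log(p\vee q)/(np)$. Its ratio to the claimed rate is $\frac{\|U\|_2}{\|U\|_F}\sqrt{\log(p\vee q)/n}$. The inequality $\|U\|_F\ge\sqrt{p}$ compares $\|U\|_F$ with $p$, not with $\|U\|_2$. Since $\|U\|_2$ can be as large as $\mathrm{tr}(U)=p$, it only gives $\|U\|_2/\|U\|_F\le\sqrt{p}$, which is useless.

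In fact no argument can absorb this term. Take $V=I_q$, $U=\mathbf{1}_p\mathbf{1}_p^\top$, $n=1$ and $p\le q$; then $\mathrm{tr}(U)=p$ and $\|U\|_F=\|U\|_2=p$, and you can perturb to $(1-\epsilon)\mathbf{1}_p\mathbf{1}_p^\top+\epsilon I_p$ if you want $U\succ 0$. Now $Y^{(1)}=\mathbf{1}_p x^\top$ with $x\sim N(0,I_q)$, so $\widehat\Sigma_{jj}=x_j^2$. Hence $\max_j|\widehat\Sigma_{jj}-V_{jj}|\ge \log q-1$ with probability tending to one. The stated bound is only $C\sqrt{\log q}$, so it fails while the claimed success probability $1-3/q^2$ tends to one.

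\paragraph{What can be proved.} The lemma is false as written. What is true is your first remedy turned into an explicit hypothesis: $n\|U\|_F^2\ge c\,\|U\|_2^2\log(p\vee q)$ for some fixed $c>0$, with $C$ then depending on $c$. This condition is implied by $n\ge c\log(p\vee q)$. It is also implied by $\|U\|_2\le K$ together with $np\ge cK^2\log(p\vee q)$, since $\|U\|_F^2\ge p$. Alternatively, keep the conclusion unconditional but add the extra term $(\max_\ell V_{\ell\ell})\|U\|_2\log(p\vee q)/(np)$ to the bound. The stable-rank condition, not your cruder $n\gtrsim\log(p\vee q)$, is what Hanson--Wright actually needs. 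The main theorem inherits the same missing assumption through Lemma~\ref{lem:bound_grad}.
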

\noindent
\begin{proof}
The following concentration is a straightforward result from Lemma 4.3 of \citet{Zhou:2014}:
$$
\Pr\left(
\|\widehat{\Sigma}-V\|_{\max}
\le
C
\left(\max_{1\le j\le q} V_{jj}\right)
\frac{\|U\|_{F}}{p}
\sqrt{\frac{\log(p\vee q)}{n}}
\right)
\ge
1-\frac{3}{(p\vee q)^2}.
$$
Here, $C>0$ is a numerical constant.

\end{proof}

We also presents the convergence rate of the covariance estimator in terms of the spectral norm.
\begin{lemma}\label{lem:norm_spectral}
Suppose Assumption \ref{assum:eigenvalue}. For some numerical constant $c>0$,
if $(c\eta^6np)/16-(q\vee \log n)\log 21 > 0$, it holds with probability at least $1-2\exp\{
-(c\eta^6np)/16
+q\log 21\}$ that
$$
||\widehat{\Sigma} - V||_2 \le \eta/2.
$$
With the same probability, this implies that
$$
\lambda_{\min}(\widehat{\Sigma}) \ge \eta/2.
$$
\end{lemma}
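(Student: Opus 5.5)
We need to prove Lemma \ref{lem:norm_spectral}: with high probability $\|\widehat{\Sigma}-V\|_2\le\eta/2$, and hence $\lambda_{\min}(\widehat{\Sigma})\ge\eta/2$.

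The plan is an $\epsilon$-net argument over the unit sphere $\mathbb{S}^{q-1}$, after reducing $\widehat{\Sigma}$ to a quadratic form in i.i.d.\ standard normals. Write $Y^{(h)}=U^{1/2}Z^{(h)}V^{1/2}$, where $Z^{(h)}$ is a $p\times q$ matrix of i.i.d.\ $N(0,1)$ entries. Then
$$
\widehat{\Sigma}=\frac{1}{np}\sum_h V^{1/2}(Z^{(h)})^\top U Z^{(h)}V^{1/2}.
$$
Since $\mathbb{E}[(Z^{(h)})^\top U Z^{(h)}]=\mathrm{tr}(U)I_q=pI_q$ under $\mathrm{tr}(U)=p$, we get $\mathbb{E}\widehat{\Sigma}=V$.

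Fix a unit vector $x$ and set $v=V^{1/2}x$. Then
$$
x^\top(\widehat{\Sigma}-V)x=\frac{1}{np}\sum_h\big\{(Z^{(h)}v)^\top U (Z^{(h)}v)-p\|v\|_2^2\big\}.
$$
Each $Z^{(h)}v\sim N(0,\|v\|_2^2 I_p)$, so the whole sum equals $\|v\|_2^2\, g^\top(I_n\otimes U)g-np\|v\|_2^2$ for $g\sim N(0,I_{np})$.

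Apply the Hanson--Wright inequality to this Gaussian quadratic form with $A=I_n\otimes U$. This gives
$$
\Pr\big(|x^\top(\widehat{\Sigma}-V)x|>t\big)\le 2\exp\Big\{-c\min\Big(\frac{t^2(np)^2}{\|v\|^4 n\|U\|_F^2},\frac{t\,np}{\|v\|^2\|U\|_2}\Big)\Big\}.
$$
Use $\|v\|^2\le\lambda_{\max}(V)<1/\eta$, and take $t=\eta/8$ to leave room for the net. The key point is that the exponent is of order $\eta^6 np$. This requires controlling $\|U\|_F^2/p$ and $\|U\|_2$ by $O(1)$. That is the main obstacle: the assumptions only bound the eigenvalues of $V$. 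I would first try to use $\mathrm{tr}(U)=p$ together with an implicit bound $\|U\|_2\lesssim 1/\eta$, treating $U$ under the same spectral assumption by symmetry of the roles. Then $\|U\|_F^2\le p\|U\|_2\le p/\eta$, and both branches of the minimum are at least $c\,\eta^{6}np/16$ after collecting powers of $\eta$. This collection of powers is where the $\eta^6$ in the statement comes from. If that implicit assumption is unavailable, I would state it explicitly as the needed condition.

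Next, take a $1/4$-net $\mathcal{N}$ of $\mathbb{S}^{q-1}$ with $|\mathcal{N}|\le 9^q\le 21^q$. The standard symmetric-matrix bound
$$
\|\widehat{\Sigma}-V\|_2\le (1-2\cdot\tfrac14)^{-1}\max_{x\in\mathcal{N}}|x^\top(\widehat{\Sigma}-V)x|=2\max_{x\in\mathcal{N}}|x^\top(\widehat{\Sigma}-V)x|
$$
shows that it suffices to bound each quadratic form by $\eta/4$. A union bound over $\mathcal{N}$ then gives failure probability at most $2\exp\{-c\eta^6np/16+q\log 21\}$. This is nontrivial precisely under the stated condition $c\eta^6np/16>(q\vee\log n)\log 21$.

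Finally, Weyl's inequality gives $\lambda_{\min}(\widehat{\Sigma})\ge\lambda_{\min}(V)-\|\widehat{\Sigma}-V\|_2\ge\eta-\eta/2=\eta/2$ on the same event.
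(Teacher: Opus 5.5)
Modulo the hypothesis on $U$ that you flag yourself, your argument is correct, but it is not the paper's route. The paper does not derive the concentration. It imports Equation (75) from the proof of Theorem 4.1 of \citet{sun2026highdimensionalconvergencerates}:
$\Pr\bigl(\|\widehat{\Sigma}-V\|_2>\widetilde C_0\sqrt{(q\vee\log n)/(np)}\bigr)\le 2\exp\{-c\eta^4\widetilde C_0^2(q\vee\log n)/4+q\log 21\}$ for $\widetilde C_0>\frac{2}{\eta^2}\sqrt{\log 21/c}$. It then sets the deviation level equal to $\eta/2$, i.e.\ $\widetilde C_0^2=\eta^2np/\{4(q\vee\log n)\}$. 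So the $\eta^6$ is $\eta^4\cdot\eta^2$, and the side condition $c\eta^6np/16>(q\vee\log n)\log 21$ is just admissibility of this $\widetilde C_0$. The Weyl step is identical to yours.

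Your self-contained derivation uses the factorization $Y^{(h)}=U^{1/2}Z^{(h)}V^{1/2}$, Hanson--Wright for $g^\top(I_n\otimes U)g$, a $1/4$-net of size $9^q\le 21^q$, and a union bound. It needs no side condition at all. Two small points:
\begin{itemize}
\item After the factor $2$ from the net, $t=\eta/4$ suffices; the $\eta/8$ is unnecessary.
\item With $t=\eta/4$, $\|v\|_2^2<1/\eta$ and $\|U\|_F^2\le p\|U\|_2$, your exponent is $c\,\eta^4np/(16\|U\|_2)$. This dominates $c\eta^6np/16$ as soon as $\|U\|_2\le\eta^{-2}$, and gives $\eta^5$ under your $\|U\|_2\le 1/\eta$. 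So the $\eta^6$ in the statement does not come from your collection of powers.
\end{itemize}

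The obstacle you flagged is genuine, not an artifact of your method. Under Assumption~\ref{assum:eigenvalue} on $V$ and $\mathrm{tr}(U)=p$ alone, the lemma is false. Take $q=2$, $n=1$, $V=I_2$ (so $\eta=1/2$ is admissible), and $U=\mathrm{diag}(p-(p-1)\epsilon,\epsilon,\ldots,\epsilon)$ with a fixed $\epsilon\in(0,1/4)$. Writing $z_i^\top$ for the rows of $Z$, we get $\widehat{\Sigma}=\frac{p-(p-1)\epsilon}{p}z_1z_1^\top+\frac{\epsilon}{p}\sum_{i\ge 2}z_iz_i^\top$. The first term has rank one, so $\lambda_{\min}(\widehat{\Sigma})\le\frac{\epsilon}{p}\bigl\|\sum_{i\ge 2}z_iz_i^\top\bigr\|_2\approx\epsilon<\eta/2$ with probability tending to one. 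Meanwhile the side condition holds and the claimed failure probability $2\exp\{-c\eta^6p/16+2\log 21\}$ tends to $0$ as $p\to\infty$.

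So $\|U\|_2\lesssim 1/\eta$ cannot be obtained ``by symmetry of the roles'', since Assumption~\ref{assum:eigenvalue} constrains only $V$; it must be added as an explicit hypothesis. The paper's proof needs the same thing but leaves it buried in the cited result. That transparency is what your route buys: it shows the effective sample size is $np/\|U\|_2$, not $np$. This matches the $\|U\|_F/p$ and $\|U\|_2/p$ factors in the paper's other concentration lemmas.
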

\begin{proof}
To prove the first concentration inequality, we use the result from \citet{sun2026highdimensionalconvergencerates}, which we copy here for completeness.
\begin{lemma}[Equation (75) in the proof of Theorem 4.1 of \citet{sun2026highdimensionalconvergencerates}]
Suppose Assumption \ref{assum:eigenvalue}.
For any $\widetilde C_0>
\frac{2}{\eta^2}
\sqrt{\frac{\log 21}{c}}$ ($c$: a numerical constant)
we have
\[
P\left(
\left\|\widehat{\Sigma}-V\right\|_2
>
\widetilde C_0
\sqrt{\frac{q\vee\log n}{np}}
\right)
\le
2\exp\left\{
-\frac{c\eta^4\widetilde C_0^2}{4}(q\vee\log n)
+q\log 21
\right\}.
\]
\end{lemma}
\noindent
We choose $\widetilde C_0$ such that
$$
\widetilde C_0
\sqrt{\frac{q \vee \log n}{np}}
=
\frac{\eta}{2}.
$$
Then, the choice yields $
||\widehat{\Sigma} - V||_2 > \eta/2$,
with probability at most $2\exp\{
-(c \eta^6np)/16
+q\log 21\}$.

The second conclusion of the lemma is a straightforward application of the derived result.
Using Weyl's inequality, we know that
$|\lambda_{\min}(\widehat{\Sigma}) - \lambda_{\min}(V)| \le \eta/2$.
Then,
$$
\lambda_{\min}(\widehat{\Sigma}) \ge \lambda_{\min}(\Sigma) - \eta/2 \ge \eta/2.
$$
\end{proof}

\subsubsection{Convergence rate of the conditional variance estimators}\label{sec:suitable_w}

Define
$$
\tilde{\beta}= \arg\min_{\beta \in \mathbb{R}^{q-1}}(np)^{-1}\sum^n_{h=1}\sum^p_{i=1}\big(Y^{(h)}_{ij}-\sum_{k\neq j} Y^{(h)}_{ik}\beta_k \big)^2
$$
The proposed estimator for $\omega_{jj}$ is $\tilde{\omega}_{jj}=1/\tilde{\phi}$, where
$$
\tilde{\phi}:= (np)^{-1}\sum^n_{h=1}\sum^p_{i=1}\big(Y^{(h)}_{ij}-\sum_{k\neq j} Y^{(h)}_{ik}\tilde{\beta}_k \big)^2.
$$
The following lemma presents the convergence rate of $\tilde{\omega}_{jj}$.
\begin{lemma}\label{lem:rate_w}
Suppose $\text{tr}(U)=p$ and Assumption \ref{assum:bd_w}. Also, assume $np > q$ and $R_\phi = o(1)$, where
$$
R_\phi= c \max\left\{\dfrac{||U||_F}{p} \sqrt{\dfrac{\log(p\vee q)^\alpha}{n}},
\dfrac{||U||_2}{p} \cdot \dfrac{q \vee \log(p\vee q)^\alpha}{n} \right\},
$$
where $c>0$ is a numerical constant.
For $\alpha > 1$, we have with probability at least $1 - 2/(p \vee q)^{\alpha-1}$
$$
\max_j |\tilde{\omega}_{jj} - \omega_{jj}|  \le C B_\omega R_\phi,
$$
where $C>0$ is a numerical constant.

\end{lemma}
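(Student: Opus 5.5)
The estimator $\tilde\omega_{jj}=1/\tilde\phi_j$ is the inverse of a residual variance from an ordinary least squares fit. So I would first control $\max_j|\tilde\phi_j-\phi_j|$, where $\phi_j=1/\omega_{jj}=V_{jj}-V_{j,-j}V_{-j,-j}^{-1}V_{-j,j}$ is the population residual variance, and then transfer the bound to $\tilde\omega_{jj}$. Assumption~\ref{assum:bd_w} gives $\phi_j\in[1/B_\omega,B_\omega]$. On the event $|\tilde\phi_j-\phi_j|\le 1/(2B_\omega)$, which holds eventually because $R_\phi=o(1)$, we also have $\tilde\phi_j\ge 1/(2B_\omega)$. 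The identity
\[
\tilde\omega_{jj}-\omega_{jj}=\frac{\phi_j-\tilde\phi_j}{\tilde\phi_j\phi_j}
\]
then yields $|\tilde\omega_{jj}-\omega_{jj}|\le 2B_\omega^2|\tilde\phi_j-\phi_j|$. To get the stated single factor $B_\omega$, I would absorb one factor into $R_\phi$ through the normalization $\mathrm{tr}(U)=p$, or else accept the constant.

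Since $np>q$, the least squares fit is well defined, and $\tilde\phi_j$ equals the Schur complement of the sample covariance:
\[
\tilde\phi_j=\widehat\Sigma_{jj}-\widehat\Sigma_{j,-j}\widehat\Sigma_{-j,-j}^{-1}\widehat\Sigma_{-j,j}=1/(\widehat\Sigma^{-1})_{jj}.
\]
A cleaner probabilistic route writes the column-$j$ residual as $\epsilon_{ij}^{(h)}=Y_{ij}^{(h)}-\sum_{k\ne j}\beta^*_kY_{ik}^{(h)}$, where $\beta^*=V_{-j,-j}^{-1}V_{-j,j}$. Under $N_{pq}(0,V\otimes U)$, the matrix $E_j=(\epsilon^{(h)}_{ij})_{i,h}$ is independent of $\{Y^{(h)}_{\cdot,-j}\}$, and each column $\epsilon^{(h)}_{\cdot j}\sim N_p(0,\phi_jU)$ independently over $h$. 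Then
\[
np\,\tilde\phi_j=\|P^\perp \epsilon_j\|_2^2=\|\epsilon_j\|_2^2-\|P\epsilon_j\|_2^2,
\]
where $\epsilon_j\in\mathbb R^{np}$ stacks the residuals and $P$ is the projection onto the column space of the $np\times(q-1)$ design formed from the other columns. This splits the error into two terms:
\[
\tilde\phi_j-\phi_j=\Big(\tfrac{1}{np}\|\epsilon_j\|_2^2-\phi_j\Big)-\tfrac{1}{np}\|P\epsilon_j\|_2^2.
\]

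The first term is a Gaussian quadratic form $\epsilon_j^\top\epsilon_j$ with covariance $\phi_j(I_n\otimes U)$ and mean $np\,\phi_j$, using $\mathrm{tr}(U)=p$. Hanson--Wright gives a deviation of order
\[
\phi_j\Big(\|U\|_F\sqrt{nt}+\|U\|_2\,t\Big)/(np).
\]
Taking $t=\alpha\log(p\vee q)$ and a union bound over $j$ gives failure probability $\lesssim q(p\vee q)^{-\alpha}\le (p\vee q)^{-(\alpha-1)}$. This produces the $\frac{\|U\|_F}{p}\sqrt{\log(p\vee q)^\alpha/n}$ part of $R_\phi$, with the union bound accounting for the power. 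The second term is the main obstacle, because the noise is correlated across rows $i$ through $U$. Conditionally on the design, $P\epsilon_j$ is Gaussian with covariance $\phi_jP(I_n\otimes U)P$, so $\|P\epsilon_j\|^2$ is a quadratic form with trace at most $\phi_j\|U\|_2\,\mathrm{rank}(P)\le\phi_j\|U\|_2(q-1)$ and operator norm at most $\phi_j\|U\|_2$. Hanson--Wright again, conditionally on the design, bounds it by
\[
\phi_j\|U\|_2\big(q+\sqrt{qt}+t\big)\lesssim \phi_j\|U\|_2\,(q\vee t).
\]
Dividing by $np$ gives $\frac{\|U\|_2}{p}\frac{q\vee\log(p\vee q)^\alpha}{n}$, the second part of $R_\phi$. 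Independence of $\epsilon_j$ from the design is what makes this conditional argument legitimate, and I would check that Gaussianity delivers it. Both bounds carry a factor $\phi_j\le B_\omega$, so adding them and converting through the first paragraph finishes the proof.
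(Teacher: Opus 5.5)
Your probabilistic core is the paper's argument. Both use the representation $W=Z\beta+\epsilon$ with $\epsilon\sim N(0,\phi_j I_n\otimes U)$ independent of the design, Hanson--Wright conditional on the design, and the rank bound $\mathrm{tr}(P(I_n\otimes U))\le (q-1)\|U\|_2$. The paper applies Hanson--Wright once to $\epsilon^\top(I-P_Z)\epsilon$ about its conditional mean and bounds the bias $\phi_j\,\mathrm{tr}(P_Z(I_n\otimes U))/(np)$ separately. You instead split into $\|\epsilon_j\|_2^2$ and $\|P\epsilon_j\|_2^2$. That is only bookkeeping. It yields the same two pieces of $R_\phi$, and in the relative form $|\tilde\phi_j-\phi_j|\le\phi_jR_\phi$.

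The gap is in your first paragraph, the conversion to $\tilde\omega_{jj}$. You replace $\tilde\phi_j\phi_j$ by the absolute lower bound $1/(2B_\omega^2)$ and then bound $\phi_jR_\phi\le B_\omega R_\phi$. That gives $|\tilde\omega_{jj}-\omega_{jj}|\le 2B_\omega^3R_\phi$, not $CB_\omega R_\phi$.

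Neither proposed remedy works. The normalization $\mathrm{tr}(U)=p$ is already spent centering $\|\epsilon_j\|_2^2$ at $np\phi_j$. It only fixes the scale split between $U$ and $V$, and $R_\phi$ depends on $U,n,p,q$ alone, so it cannot absorb $B_\omega$. And $B_\omega$ is a model parameter, not a numerical constant, so ``accepting the constant'' proves a weaker lemma.

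The fix is to keep the error relative. On the event $|\tilde\phi_j-\phi_j|\le\phi_jR_\phi$ with $R_\phi\le 1/2$ you have $\tilde\phi_j\ge\phi_j/2$, hence
\[
|\tilde\omega_{jj}-\omega_{jj}|=\frac{|\tilde\phi_j-\phi_j|}{\tilde\phi_j\phi_j}\le\frac{\phi_jR_\phi}{\phi_j^2/2}=2\omega_{jj}R_\phi\le 2B_\omega R_\phi .
\]
This is how the paper concludes. It also replaces your requirement $|\tilde\phi_j-\phi_j|\le 1/(2B_\omega)$ with the weaker condition $R_\phi\le 1/2$. Your requirement needs $B_\omega^2R_\phi\le 1/2$, which does not follow from $R_\phi=o(1)$ unless $B_\omega$ stays bounded.
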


\begin{proof}
    Note that
$\tilde{\beta} = (Z^\top Z)^{-1} Z^\top W$
where $Z \in \mathbb{R}^{np \times (q-1)}$ denotes the matrix binding $\{Y^{(h)}\}_h$ in rows after deleting $j$-th columns and $
W\in \mathbb{R}^{np}$ denotes a column vector binding all $j$-th columns of $\{Y^{(h)}\}_h$. If $np > q$, the ordinary least square estimator is well-defined.
More concisely, it can be written as $\tilde{\phi}= (np)^{-1}W^\top (I - P_Z) W$, where $P_Z = Z(Z^\top Z)^{-1} Z^\top$.
Using the Gaussianity, we can express $W$ by a regression form:
$$
W = Z \beta + \epsilon,
$$
where $\beta \in \mathbb{R}^{q-1}$ is a deterministic parameter (as a function of $V$ and $U$) and $\epsilon \sim N(0,\phi \cdot I_n \otimes U)$ with $\phi = 1/\omega_{jj}^V$. Remark that $Z$ and $\epsilon$ are independent. Therefore, we get
$$
\tilde{\phi}= (np)^{-1}\epsilon^\top (I - P_Z) \epsilon
$$
because of $(I-P_Z)Z=0$. As $\epsilon \overset{\text{d}}{=} \sqrt{\phi} \cdot I_n \otimes U^{1/2}\tilde{\epsilon}$ with every entry of $\tilde{\epsilon}$ being independent standard Gaussian variable, $\tilde{\phi} \overset{\text{d}}{=} \tilde{\epsilon}^\top A \tilde{\epsilon}$ where $A = \phi (I_n \otimes U^{1/2}) (I-P_Z) \cdot (I_n \otimes U^{1/2}) / (np)$ and $U^{1/2}$ is the square-root matrix.

Let $\bar{\phi}(Z) := \mathbb{E}[\tilde{\phi}|Z]$. Conditioning on $Z$, the Hanson-Wright inequality describes the concentration of $\tilde{\phi}$:
$$
P[
|\tilde{\phi} - \bar{\phi}(Z) | > t
] \le 2 \exp\left\{
-c \min \Big(
\dfrac{t^2}{||A||_F^2}, \dfrac{t}{||A||_2}
\Big)
\right\}, \quad t>0,
$$
where $c>0$ is a numerical constant. Using the property of the projection matrix, we get
$$
||A||_F^2 \le \phi^2 ||I_n \otimes U||_F^2 / (np)^2 \le \phi^2 ||U||_F^2 / (np^2),
$$
and
$$
||A||_2 \le \phi ||I_n \otimes U||_2 / np = \phi ||U||_2 / (np).
$$
Thus, the concentration inequality is presented as
$$
P[
|\tilde{\phi} - \bar{\phi}(Z) | > t
] \le 2 \exp\left\{
-c \min \Big(
\dfrac{np^2 t^2}{\phi^2 ||U||_F^2}, \dfrac{npt}{\phi ||U||_2}
\Big)
\right\}.
$$
Since the upper bound does not depend on $Z$, the conditioning can be removed, and thus the same probability bound holds unconditionally.
For $\alpha>0$, putting
$$
t = c^{-1} \cdot \phi \max\{||U||_F \sqrt{\log(p\vee q)^\alpha}/(\sqrt{n}p), ||U||_2 \log(p\vee q)^\alpha/ (np)\},
$$
we get
$$
|\tilde{\phi} - \bar{\phi}(Z) | > c_2 \cdot \phi \max\left\{\dfrac{||U||_F}{p} \sqrt{\dfrac{\log(p\vee q)^\alpha}{n}},
\dfrac{||U||_2}{p} \dfrac{\log(p\vee q)^\alpha}{n} \right\},
$$
with probability at most $2/(p \vee q)^\alpha$ for some numerical constant $c_2>0$.

It is easy to check that
$$
\bar{\phi}(Z) = \phi \big\{1 - \text{tr}(P_Z (I_n \otimes U) / (np)) \big\},
$$
which means that $\tilde{\phi}$ is a biased estimator of $\phi$. Note that the bias term tends to zero if $q=o(np)$ because
$$
\phi \cdot \text{tr}(P_Z (I_n \otimes U) / (np))
\le \phi \cdot ||P_Z ||_* || I_n \otimes U||_2 / (np)
\le
\phi \cdot q || U||_2 / (np) \to 0,
$$
where the first inequality uses the Schatten-norm H\"{o}lder inequality and $||\cdot||_*$ is the nuclear norm of a matrix.
Then, we get with probability at least $1-2/(p \vee q)^\alpha$ that
$$
|\tilde{\phi} - \phi| / \phi \le
|\tilde{\phi} - \bar{\phi}(Z)| / \phi + |\bar{\phi}(Z) - \phi| / \phi \le
 R_\phi.
$$
The rate $R_\phi$ is defined by
$$
R_\phi= c_3 \max\left\{\dfrac{||U||_F}{p} \sqrt{\dfrac{\log(p\vee q)^\alpha}{n}},
\dfrac{||U||_2}{p} \cdot \dfrac{q \vee \log(p\vee q)^\alpha}{n} \right\},
$$
where $c_3>0$ is a numerical constant.
Taking union bound argument across different $j \in [q]$, we have
$\max_{j \in [q]} |\tilde{\phi}_{jj} - \phi_{jj} |/\phi_{jj} \le R_\phi$ with probability at least $1 - 2/(p \vee q)^{\alpha-1}$.

Finally, using
$$
|1/x-1/y| \le \dfrac{|x-y|/y^2}{1 - |x-y|/y}, \quad \text{if } x,y>0, |x-y|/y <1,
$$
we have
$$
|\tilde{\omega}_{jj} - \omega_{jj}| = |1/\tilde{\phi}_{jj} - 1/\phi_{jj}|
\le \dfrac{1}{\phi_{jj}} \cdot \dfrac{|\tilde{\phi}_{jj} - \phi_{jj}| /\phi_{jj}}{1 - |\tilde{\phi}_{jj} - \phi_{jj}|/\phi_{jj}}.
$$
If $np$ is sufficiently large so that $R_\phi < 1/2$, then $|\tilde{\omega}_{jj} - \omega_{jj}| \le 2 |\tilde{\phi}_{jj} - \phi_{jj}|/\phi_{jj}^2$.
Thus, we get with probability at least $1-2/(p \vee q)^\alpha$
$$\max_j |\tilde{\omega}_{jj} - \omega_{jj}| \le \max_j \dfrac{2}{\phi_{jj}} \cdot \max_j \dfrac{|\tilde{\phi}_{jj} - \phi_{jj}|}{\phi_{jj}} \le 2B_\omega R_\phi,
$$
where the last inequality uses Assumption \ref{assum:bd_w}.

\end{proof}

\section{Construction of a precision matrix}\label{sec:supp_generating}

We first generate an adjacency matrix $\Omega=(\omega_{ij})_{i,j\in[q]}$ as follows, depending on each network structure.
\begin{itemize}
    \item \textbf{Band structure.}
    We generate a banded precision matrix structure with band width $d$.
    In our experiments, we set $d=2$, meaning each variable is connected to its two nearest neighbors on both sides. The precision matrix entries are defined as
    $$\omega_{ij} =
    \begin{cases}
    1, & i = j\\
    \rho_{\text{band}}, & 1\leq |i-j|< d\\
    0.5\cdot\rho_{\text{band}}, & |i-j|=d\\
    0, & \text{otherwise},
    \end{cases}$$
    where $\rho_{\text{band}}\sim \text{Unif}(0.2,0.6)$

    \item \textbf{Cluster structure.}
    Connections are allowed only within the same cluster, and no edges are allowed between variables from different clusters. Within each cluster, the adjacency structure is generated as $a_{ij}\sim \text{Ber}(0.1)$.
    Given the adjacency, the precision matrix elements follow
    $$\omega_{ij} = \begin{cases}
    1, & i = j,\\
    \rho_{\text{cluster}}, & i \neq j,\ a_{ij} = 1,\\
    0, & \text{otherwise},
    \end{cases}$$
    where $\rho_{\text{cluster}}\sim \text{Unif}(0.3, 0.5)$.

    \item \textbf{Hub structure}.
    For the hub structure, the variables are divided into blocks of size 10. Within each block, the first variable acts as a hub and is connected to all remaining variables in the same block.
    Let $k=1,\dots,\lfloor p/10\rfloor$ index the blocks, and define the hub index in block $k$ as $h_k=10(k-1)+1$.
    The nonzero precision values are drawn from
    $$\omega_{ij} = \begin{cases}
    1, & i = j, \\
    \rho_{\text{hub}},& i = h_k \text{ or } j = h_k,\ \text{with } i,j \text{ in block } k, \\
    0, & \text{otherwise} ,
    \end{cases}$$
    where $\rho_{\text{hub}}\sim \text{Unif}(0.2, 0.6)$.

    \item \textbf{Random structure.}
    For the random network, we adopt an Erd\H{o}s--R\'enyi random graph model. The adjacency is generated as $a_{ij}\sim \text{Ber}(q)$, $q=\min(0.05, 5/p)$.
    For every existing edge, we assign a fixed nonzero precision value
    $$\omega_{ij} = \begin{cases}
    1, & i = j,\\
    \rho_{\text{random}}, & i \neq j,\ a_{ij} = 1,\\
    0, & \text{otherwise},
    \end{cases}$$
    where $\rho_{\text{random}}=0.4$
\end{itemize}
Next, we define $\tilde{\Omega}=\Omega+(0.05+|\mu_{\min}(\Omega)|)\cdot I_q$, where $\mu_{\min}(\Omega)$ denotes the smallest eigenvalue of $\Omega$, which makes $\tilde{\Omega}$ be positive definite. Finally, all diagonal entries of $\tilde{\Omega}$ are then multiplied by $\text{Unif}(1,5)$.

\section{Additional results for simulation study}\label{sec:supp_sim}

This section reports the simulation results omitted from Section~\ref{sec:results} due to space limitations.
We present results for all four network structures of $V^{-1}$ (band, cluster, hub, and random), with $U^{-1}$ fixed at the band structure throughout, evaluated separately on the estimated $\widehat{U^{-1}}$ and $\widehat{V^{-1}}$.
The overall pattern is consistent with that reported in Section~\ref{sec:results}: matSPACE-sw attains the highest AUC and MCC in nearly all settings, and its advantage over the competing methods persists across network structures.

\subsection{Band structure}\label{sec:supp_band}
Tables~\ref{tab:band_U} and \ref{tab:band_V} report the full results for the band structure.

\subsection{Cluster structure}\label{sec:supp_cluster}
Tables~\ref{tab:cluster_U} and \ref{tab:cluster_V} report results for the cluster structure.
The overall pattern is similar to that under the band structure, with matSPACE-sw continuing to achieve the highest AUC and MCC in most settings.

\subsection{Hub structure}\label{sec:supp_hub}
Tables~\ref{tab:hub_U} and \ref{tab:hub_V} present results for the hub structure.
The overall pattern is similar to the other structures, with matSPACE-sw achieving the highest AUC throughout.
For MCC, however, matSPACE-dew attains a higher value than matSPACE-sw when $n=10$ and $p=20$ (for both $q=20$ and $q=50$), evaluated on $\widehat{V^{-1}}$, whereas matSPACE-sw remains higher in all other settings.

\subsection{Random structure}\label{sec:supp_random}
Tables~\ref{tab:random_U} and \ref{tab:random_V} report results for the random structure. matSPACE-sw again achieves the highest AUC and MCC in most settings, and the gap over GEMINI and matNS narrows only slightly compared with the band structure. As in Section~\ref{sec:results}, increasing $n$ improves all methods' performance while matSPACE maintains a comparatively stable FPR.

\clearpage
\begin{sidewaystable}
\centering
\spacingset{1}
\caption{Simulation results based on 300 data replications, where $U^{-1}$ is fixed at the band structure and $V^{-1}$ is fixed at the band structure, evaluated on the estimated $\widehat{U^{-1}}$.
Each entry reports an average of each metric with the standard error in parentheses. Boldface indicates the best-performing method.
}
\label{tab:band_U}
\renewcommand{\arraystretch}{0.85}
\tiny
\setlength{\tabcolsep}{4pt}
\begin{minipage}{0.49\textwidth}
\centering
\resizebox{\textwidth}{!}{%
\begin{tabular}{clcccc}
\toprule
$n$ & Method & TPR & FPR & MCC & AUC \\
\midrule
\multirow{28}{*}{10} & \multicolumn{5}{l}{\textit{$p=20$, $q=20$}} \\
 & matSPACE-sw & \textbf{0.405 (0.010)} & 0.136 (0.006) & \textbf{0.287 (0.005)} & \textbf{0.712 (0.003)} \\
 & matSPACE-dew & 0.259 (0.008) & 0.067 (0.004) & 0.257 (0.005) & 0.689 (0.003) \\
 & matNS-AND & 0.023 (0.002) & 0.001 (0.000) & 0.087 (0.006) & 0.701 (0.003) \\
 & matNS-OR & 0.082 (0.004) & 0.011 (0.001) & 0.169 (0.006) & 0.691 (0.003) \\
 & GEMINI & 0.318 (0.010) & 0.096 (0.005) & 0.267 (0.006) & 0.698 (0.003) \\
 & G-FDR-0.10 & 0.002 (0.000) & \textbf{0.000 (0.000)} & 0.010 (0.002) & 0.542 (0.002) \\
\cmidrule(lr){2-6}
 & \multicolumn{5}{l}{\textit{$p=20$, $q=50$}} \\
 & matSPACE-sw & 0.618 (0.008) & 0.148 (0.005) & \textbf{0.448 (0.005)} & \textbf{0.815 (0.002)} \\
 & matSPACE-dew & 0.430 (0.010) & 0.089 (0.005) & 0.391 (0.004) & 0.780 (0.002) \\
 & matNS-AND & 0.066 (0.003) & 0.001 (0.000) & 0.198 (0.006) & 0.796 (0.002) \\
 & matNS-OR & 0.182 (0.005) & 0.014 (0.001) & 0.312 (0.006) & 0.785 (0.002) \\
 & GEMINI & \textbf{0.643 (0.008)} & 0.199 (0.006) & 0.405 (0.004) & 0.794 (0.002) \\
 & G-FDR-0.10 & 0.038 (0.003) & \textbf{0.000 (0.000)} & 0.131 (0.007) & 0.654 (0.003) \\
\cmidrule(lr){2-6}
 & \multicolumn{5}{l}{\textit{$p=50$, $q=20$}} \\
 & matSPACE-sw & \textbf{0.291 (0.006)} & 0.065 (0.003) & \textbf{0.237 (0.003)} & \textbf{0.701 (0.002)} \\
 & matSPACE-dew & 0.166 (0.005) & 0.029 (0.001) & 0.199 (0.003) & 0.676 (0.002) \\
 & matNS-AND & 0.018 (0.001) & \textbf{0.000 (0.000)} & 0.096 (0.004) & 0.690 (0.002) \\
 & matNS-OR & 0.071 (0.002) & 0.009 (0.000) & 0.150 (0.003) & 0.678 (0.002) \\
 & GEMINI & 0.200 (0.007) & 0.035 (0.002) & 0.230 (0.004) & 0.688 (0.002) \\
 & G-FDR-0.10 & 0.000 (0.000) & \textbf{0.000 (0.000)} & 0.000 (0.000) & 0.529 (0.001) \\
\cmidrule(lr){2-6}
 & \multicolumn{5}{l}{\textit{$p=50$, $q=50$}} \\
 & matSPACE-sw & \textbf{0.502 (0.005)} & 0.071 (0.002) & \textbf{0.390 (0.003)} & \textbf{0.803 (0.002)} \\
 & matSPACE-dew & 0.281 (0.005) & 0.029 (0.001) & 0.325 (0.003) & 0.765 (0.001) \\
 & matNS-AND & 0.054 (0.002) & 0.001 (0.000) & 0.201 (0.003) & 0.783 (0.001) \\
 & matNS-OR & 0.145 (0.003) & 0.008 (0.000) & 0.274 (0.003) & 0.770 (0.001) \\
 & GEMINI & 0.490 (0.007) & 0.084 (0.003) & 0.367 (0.003) & 0.783 (0.001) \\
 & G-FDR-0.10 & 0.007 (0.001) & \textbf{0.000 (0.000)} & 0.051 (0.004) & 0.615 (0.002) \\
\midrule
\midrule
\multirow{14}{*}{40} & \multicolumn{5}{l}{\textit{$p=20$, $q=20$}} \\
 & matSPACE-sw & 0.728 (0.006) & 0.156 (0.004) & \textbf{0.518 (0.004)} & \textbf{0.864 (0.002)} \\
 & matSPACE-dew & 0.542 (0.011) & 0.106 (0.006) & 0.467 (0.004) & 0.827 (0.002) \\
 & matNS-AND & 0.110 (0.004) & 0.001 (0.000) & 0.274 (0.006) & 0.844 (0.002) \\
 & matNS-OR & 0.254 (0.006) & 0.014 (0.001) & 0.397 (0.005) & 0.833 (0.002) \\
 & GEMINI & \textbf{0.778 (0.006)} & 0.259 (0.006) & 0.437 (0.004) & 0.842 (0.002) \\
 & G-FDR-0.10 & 0.121 (0.005) & \textbf{0.000 (0.000)} & 0.296 (0.007) & 0.745 (0.003) \\
\cmidrule(lr){2-6}
 & \multicolumn{5}{l}{\textit{$p=20$, $q=50$}} \\
 & matSPACE-sw & 0.875 (0.004) & 0.147 (0.004) & \textbf{0.641 (0.005)} & \textbf{0.941 (0.001)} \\
 & matSPACE-dew & 0.813 (0.009) & 0.221 (0.011) & 0.539 (0.007) & 0.910 (0.002) \\
 & matNS-AND & 0.257 (0.006) & 0.002 (0.000) & 0.454 (0.006) & 0.923 (0.001) \\
 & matNS-OR & 0.453 (0.006) & 0.012 (0.001) & 0.588 (0.005) & 0.914 (0.002) \\
 & GEMINI & \textbf{0.931 (0.003)} & 0.338 (0.006) & 0.474 (0.005) & 0.923 (0.001) \\
 & G-FDR-0.10 & 0.414 (0.006) & \textbf{0.000 (0.000)} & 0.600 (0.005) & 0.878 (0.002) \\
\bottomrule
\end{tabular}
}
\end{minipage}%
\hfill
\begin{minipage}{0.49\textwidth}
\centering
\resizebox{\textwidth}{!}{%
\begin{tabular}{clcccc}
\toprule
$n$ & Method & TPR & FPR & MCC & AUC \\
\midrule
\multirow{14}{*}{40} & \multicolumn{5}{l}{\textit{$p=50$, $q=20$}} \\
 & matSPACE-sw & 0.621 (0.005) & 0.071 (0.002) & \textbf{0.478 (0.003)} & \textbf{0.859 (0.001)} \\
 & matSPACE-dew & 0.376 (0.007) & 0.034 (0.002) & 0.401 (0.003) & 0.817 (0.001) \\
 & matNS-AND & 0.095 (0.002) & 0.001 (0.000) & 0.276 (0.004) & 0.837 (0.001) \\
 & matNS-OR & 0.221 (0.003) & 0.009 (0.000) & 0.363 (0.003) & 0.824 (0.001) \\
 & GEMINI & \textbf{0.658 (0.006)} & 0.127 (0.004) & 0.407 (0.004) & 0.837 (0.001) \\
 & G-FDR-0.10 & 0.038 (0.002) & \textbf{0.000 (0.000)} & 0.171 (0.004) & 0.707 (0.002) \\
\cmidrule(lr){2-6}
 & \multicolumn{5}{l}{\textit{$p=50$, $q=50$}} \\
 & matSPACE-sw & 0.829 (0.003) & 0.076 (0.002) & \textbf{0.604 (0.004)} & \textbf{0.940 (0.001)} \\
 & matSPACE-dew & 0.573 (0.008) & 0.042 (0.003) & 0.545 (0.003) & 0.901 (0.001) \\
 & matNS-AND & 0.234 (0.004) & 0.001 (0.000) & 0.453 (0.004) & 0.919 (0.001) \\
 & matNS-OR & 0.421 (0.004) & 0.010 (0.000) & 0.551 (0.003) & 0.909 (0.001) \\
 & GEMINI & \textbf{0.874 (0.003)} & 0.181 (0.005) & 0.458 (0.005) & 0.921 (0.001) \\
 & G-FDR-0.10 & 0.240 (0.003) & \textbf{0.000 (0.000)} & 0.472 (0.003) & 0.850 (0.002) \\
\midrule
\midrule
\multirow{28}{*}{70} & \multicolumn{5}{l}{\textit{$p=20$, $q=20$}} \\
 & matSPACE-sw & 0.835 (0.004) & 0.153 (0.004) & \textbf{0.602 (0.004)} & \textbf{0.917 (0.002)} \\
 & matSPACE-dew & 0.719 (0.011) & 0.173 (0.008) & 0.520 (0.006) & 0.881 (0.002) \\
 & matNS-AND & 0.188 (0.006) & 0.001 (0.000) & 0.377 (0.006) & 0.896 (0.002) \\
 & matNS-OR & 0.376 (0.007) & 0.013 (0.001) & 0.518 (0.005) & 0.886 (0.002) \\
 & GEMINI & \textbf{0.894 (0.004)} & 0.320 (0.006) & 0.461 (0.004) & 0.896 (0.002) \\
 & G-FDR-0.10 & 0.287 (0.006) & \textbf{0.000 (0.000)} & 0.489 (0.006) & 0.835 (0.002) \\
\cmidrule(lr){2-6}
 & \multicolumn{5}{l}{\textit{$p=20$, $q=50$}} \\
 & matSPACE-sw & 0.938 (0.003) & 0.128 (0.004) & 0.714 (0.005) & \textbf{0.973 (0.001)} \\
 & matSPACE-dew & 0.916 (0.006) & 0.336 (0.015) & 0.515 (0.011) & 0.947 (0.001) \\
 & matNS-AND & 0.371 (0.006) & 0.001 (0.000) & 0.561 (0.005) & 0.958 (0.001) \\
 & matNS-OR & 0.586 (0.006) & 0.011 (0.001) & 0.693 (0.004) & 0.951 (0.001) \\
 & GEMINI & \textbf{0.976 (0.002)} & 0.368 (0.006) & 0.480 (0.005) & 0.959 (0.001) \\
 & G-FDR-0.10 & 0.624 (0.005) & \textbf{0.000 (0.000)} & \textbf{0.756 (0.004)} & 0.935 (0.002) \\
\cmidrule(lr){2-6}
 & \multicolumn{5}{l}{\textit{$p=50$, $q=20$}} \\
 & matSPACE-sw & 0.753 (0.004) & 0.071 (0.002) & \textbf{0.567 (0.004)} & \textbf{0.913 (0.001)} \\
 & matSPACE-dew & 0.490 (0.008) & 0.036 (0.002) & 0.495 (0.003) & 0.871 (0.001) \\
 & matNS-AND & 0.172 (0.003) & 0.001 (0.000) & 0.384 (0.004) & 0.891 (0.001) \\
 & matNS-OR & 0.337 (0.004) & 0.009 (0.000) & 0.481 (0.003) & 0.879 (0.001) \\
 & GEMINI & \textbf{0.804 (0.005)} & 0.159 (0.005) & 0.448 (0.005) & 0.892 (0.001) \\
 & G-FDR-0.10 & 0.143 (0.003) & \textbf{0.000 (0.000)} & 0.359 (0.004) & 0.800 (0.002) \\
\cmidrule(lr){2-6}
 & \multicolumn{5}{l}{\textit{$p=50$, $q=50$}} \\
 & matSPACE-sw & 0.902 (0.002) & 0.060 (0.002) & \textbf{0.691 (0.004)} & \textbf{0.971 (0.001)} \\
 & matSPACE-dew & 0.667 (0.008) & 0.037 (0.003) & 0.637 (0.004) & 0.939 (0.001) \\
 & matNS-AND & 0.358 (0.004) & 0.001 (0.000) & 0.573 (0.003) & 0.954 (0.001) \\
 & matNS-OR & 0.561 (0.004) & 0.008 (0.000) & 0.672 (0.002) & 0.946 (0.001) \\
 & GEMINI & \textbf{0.945 (0.002)} & 0.207 (0.005) & 0.470 (0.006) & 0.957 (0.001) \\
 & G-FDR-0.10 & 0.449 (0.004) & \textbf{0.000 (0.000)} & 0.653 (0.003) & 0.916 (0.001) \\
\bottomrule
\end{tabular}
}
\end{minipage}
\end{sidewaystable}

\begin{sidewaystable}
\centering
\spacingset{1}
\caption{Simulation results based on 300 data replications, where $U^{-1}$ is fixed at the band structure and $V^{-1}$ is fixed at the band structure, evaluated on the estimated $\widehat{V^{-1}}$.
Each entry reports an average of each metric with the standard error in parentheses. Boldface indicates the best-performing method.
}
\label{tab:band_V}
\renewcommand{\arraystretch}{0.85}
\tiny
\setlength{\tabcolsep}{4pt}
\begin{minipage}{0.49\textwidth}
\centering
\resizebox{\textwidth}{!}{%
\begin{tabular}{clcccc}
\toprule
$n$ & Method & TPR & FPR & MCC & AUC \\
\midrule
\multirow{28}{*}{10} & \multicolumn{5}{l}{\textit{$p=20$, $q=20$}} \\
 & matSPACE-sw & \textbf{0.392 (0.010)} & 0.133 (0.005) & \textbf{0.277 (0.005)} & \textbf{0.706 (0.003)} \\
 & matSPACE-dew & 0.262 (0.008) & 0.077 (0.004) & 0.245 (0.005) & 0.683 (0.003) \\
 & matNS-AND & 0.021 (0.002) & 0.001 (0.000) & 0.084 (0.005) & 0.695 (0.003) \\
 & matNS-OR & 0.079 (0.004) & 0.012 (0.001) & 0.160 (0.006) & 0.686 (0.003) \\
 & GEMINI & 0.310 (0.011) & 0.100 (0.005) & 0.245 (0.006) & 0.691 (0.003) \\
 & G-FDR-0.10 & 0.001 (0.000) & \textbf{0.000 (0.000)} & 0.006 (0.002) & 0.537 (0.002) \\
\cmidrule(lr){2-6}
 & \multicolumn{5}{l}{\textit{$p=20$, $q=50$}} \\
 & matSPACE-sw & \textbf{0.285 (0.006)} & 0.064 (0.002) & \textbf{0.231 (0.003)} & \textbf{0.694 (0.002)} \\
 & matSPACE-dew & 0.157 (0.004) & 0.028 (0.001) & 0.192 (0.002) & 0.669 (0.002) \\
 & matNS-AND & 0.017 (0.001) & \textbf{0.000 (0.000)} & 0.095 (0.004) & 0.682 (0.002) \\
 & matNS-OR & 0.066 (0.002) & 0.009 (0.000) & 0.140 (0.003) & 0.671 (0.002) \\
 & GEMINI & 0.184 (0.007) & 0.032 (0.002) & 0.219 (0.004) & 0.680 (0.002) \\
 & G-FDR-0.10 & 0.000 (0.000) & \textbf{0.000 (0.000)} & 0.000 (0.000) & 0.525 (0.001) \\
\cmidrule(lr){2-6}
 & \multicolumn{5}{l}{\textit{$p=50$, $q=20$}} \\
 & matSPACE-sw & 0.618 (0.008) & 0.154 (0.005) & \textbf{0.437 (0.004)} & \textbf{0.810 (0.002)} \\
 & matSPACE-dew & 0.423 (0.011) & 0.089 (0.005) & 0.387 (0.004) & 0.777 (0.002) \\
 & matNS-AND & 0.065 (0.003) & 0.001 (0.000) & 0.192 (0.006) & 0.792 (0.002) \\
 & matNS-OR & 0.174 (0.005) & 0.012 (0.001) & 0.307 (0.006) & 0.781 (0.002) \\
 & GEMINI & \textbf{0.647 (0.008)} & 0.206 (0.006) & 0.397 (0.005) & 0.790 (0.002) \\
 & G-FDR-0.10 & 0.037 (0.002) & \textbf{0.000 (0.000)} & 0.133 (0.007) & 0.652 (0.003) \\
\cmidrule(lr){2-6}
 & \multicolumn{5}{l}{\textit{$p=50$, $q=50$}} \\
 & matSPACE-sw & \textbf{0.489 (0.005)} & 0.067 (0.002) & \textbf{0.390 (0.003)} & \textbf{0.800 (0.002)} \\
 & matSPACE-dew & 0.275 (0.006) & 0.029 (0.001) & 0.319 (0.003) & 0.763 (0.002) \\
 & matNS-AND & 0.052 (0.002) & 0.001 (0.000) & 0.199 (0.003) & 0.781 (0.002) \\
 & matNS-OR & 0.146 (0.003) & 0.009 (0.000) & 0.268 (0.003) & 0.768 (0.002) \\
 & GEMINI & 0.483 (0.007) & 0.084 (0.003) & 0.361 (0.003) & 0.779 (0.002) \\
 & G-FDR-0.10 & 0.006 (0.000) & \textbf{0.000 (0.000)} & 0.045 (0.003) & 0.616 (0.002) \\
\midrule
\midrule
\multirow{14}{*}{40} & \multicolumn{5}{l}{\textit{$p=20$, $q=20$}} \\
 & matSPACE-sw & 0.731 (0.007) & 0.156 (0.004) & \textbf{0.521 (0.005)} & \textbf{0.864 (0.002)} \\
 & matSPACE-dew & 0.549 (0.011) & 0.113 (0.006) & 0.461 (0.005) & 0.828 (0.002) \\
 & matNS-AND & 0.105 (0.004) & 0.001 (0.000) & 0.269 (0.006) & 0.843 (0.002) \\
 & matNS-OR & 0.249 (0.006) & 0.013 (0.001) & 0.393 (0.006) & 0.833 (0.002) \\
 & GEMINI & \textbf{0.789 (0.006)} & 0.268 (0.006) & 0.434 (0.004) & 0.843 (0.002) \\
 & G-FDR-0.10 & 0.111 (0.004) & \textbf{0.000 (0.000)} & 0.284 (0.007) & 0.741 (0.003) \\
\cmidrule(lr){2-6}
 & \multicolumn{5}{l}{\textit{$p=20$, $q=50$}} \\
 & matSPACE-sw & 0.619 (0.005) & 0.071 (0.002) & \textbf{0.476 (0.003)} & \textbf{0.855 (0.001)} \\
 & matSPACE-dew & 0.353 (0.007) & 0.029 (0.001) & 0.396 (0.003) & 0.814 (0.001) \\
 & matNS-AND & 0.089 (0.002) & 0.001 (0.000) & 0.268 (0.003) & 0.833 (0.001) \\
 & matNS-OR & 0.215 (0.003) & 0.009 (0.000) & 0.358 (0.003) & 0.821 (0.001) \\
 & GEMINI & \textbf{0.643 (0.006)} & 0.118 (0.004) & 0.413 (0.004) & 0.833 (0.001) \\
 & G-FDR-0.10 & 0.036 (0.001) & \textbf{0.000 (0.000)} & 0.166 (0.004) & 0.703 (0.002) \\
\bottomrule
\end{tabular}
}
\end{minipage}%
\hfill
\begin{minipage}{0.49\textwidth}
\centering
\resizebox{\textwidth}{!}{%
\begin{tabular}{clcccc}
\toprule
$n$ & Method & TPR & FPR & MCC & AUC \\
\midrule
\multirow{14}{*}{40} & \multicolumn{5}{l}{\textit{$p=50$, $q=20$}} \\
 & matSPACE-sw & 0.877 (0.004) & 0.143 (0.004) & \textbf{0.649 (0.005)} & \textbf{0.940 (0.001)} \\
 & matSPACE-dew & 0.817 (0.009) & 0.240 (0.011) & 0.524 (0.007) & 0.907 (0.002) \\
 & matNS-AND & 0.248 (0.005) & 0.001 (0.000) & 0.447 (0.006) & 0.920 (0.002) \\
 & matNS-OR & 0.452 (0.006) & 0.013 (0.001) & 0.584 (0.005) & 0.912 (0.002) \\
 & GEMINI & \textbf{0.930 (0.003)} & 0.346 (0.006) & 0.466 (0.005) & 0.921 (0.001) \\
 & G-FDR-0.10 & 0.415 (0.006) & \textbf{0.000 (0.000)} & 0.601 (0.005) & 0.876 (0.002) \\
\cmidrule(lr){2-6}
 & \multicolumn{5}{l}{\textit{$p=50$, $q=50$}} \\
 & matSPACE-sw & 0.817 (0.004) & 0.072 (0.002) & \textbf{0.605 (0.004)} & \textbf{0.937 (0.001)} \\
 & matSPACE-dew & 0.550 (0.008) & 0.037 (0.002) & 0.541 (0.003) & 0.898 (0.001) \\
 & matNS-AND & 0.231 (0.003) & 0.001 (0.000) & 0.450 (0.004) & 0.916 (0.001) \\
 & matNS-OR & 0.413 (0.004) & 0.009 (0.000) & 0.550 (0.003) & 0.906 (0.001) \\
 & GEMINI & \textbf{0.865 (0.004)} & 0.178 (0.005) & 0.458 (0.005) & 0.918 (0.001) \\
 & G-FDR-0.10 & 0.237 (0.003) & \textbf{0.000 (0.000)} & 0.469 (0.003) & 0.844 (0.002) \\
\midrule
\midrule
\multirow{28}{*}{70} & \multicolumn{5}{l}{\textit{$p=20$, $q=20$}} \\
 & matSPACE-sw & 0.836 (0.005) & 0.150 (0.004) & \textbf{0.605 (0.005)} & \textbf{0.917 (0.002)} \\
 & matSPACE-dew & 0.727 (0.010) & 0.170 (0.008) & 0.527 (0.006) & 0.882 (0.002) \\
 & matNS-AND & 0.186 (0.005) & 0.002 (0.000) & 0.376 (0.006) & 0.897 (0.002) \\
 & matNS-OR & 0.378 (0.007) & 0.014 (0.001) & 0.516 (0.005) & 0.888 (0.002) \\
 & GEMINI & \textbf{0.893 (0.004)} & 0.322 (0.006) & 0.460 (0.005) & 0.897 (0.002) \\
 & G-FDR-0.10 & 0.287 (0.006) & \textbf{0.000 (0.000)} & 0.489 (0.006) & 0.833 (0.003) \\
\cmidrule(lr){2-6}
 & \multicolumn{5}{l}{\textit{$p=20$, $q=50$}} \\
 & matSPACE-sw & 0.748 (0.004) & 0.072 (0.002) & \textbf{0.562 (0.004)} & \textbf{0.910 (0.001)} \\
 & matSPACE-dew & 0.477 (0.008) & 0.035 (0.002) & 0.489 (0.003) & 0.868 (0.001) \\
 & matNS-AND & 0.165 (0.003) & 0.001 (0.000) & 0.377 (0.004) & 0.888 (0.001) \\
 & matNS-OR & 0.333 (0.004) & 0.009 (0.000) & 0.479 (0.003) & 0.877 (0.001) \\
 & GEMINI & \textbf{0.794 (0.005)} & 0.156 (0.005) & 0.449 (0.005) & 0.889 (0.001) \\
 & G-FDR-0.10 & 0.139 (0.003) & \textbf{0.000 (0.000)} & 0.354 (0.004) & 0.796 (0.002) \\
\cmidrule(lr){2-6}
 & \multicolumn{5}{l}{\textit{$p=50$, $q=20$}} \\
 & matSPACE-sw & 0.936 (0.003) & 0.127 (0.003) & 0.712 (0.004) & \textbf{0.971 (0.001)} \\
 & matSPACE-dew & 0.915 (0.006) & 0.344 (0.015) & 0.510 (0.011) & 0.945 (0.001) \\
 & matNS-AND & 0.372 (0.006) & 0.001 (0.000) & 0.560 (0.005) & 0.956 (0.001) \\
 & matNS-OR & 0.588 (0.006) & 0.011 (0.001) & 0.694 (0.004) & 0.949 (0.001) \\
 & GEMINI & \textbf{0.974 (0.002)} & 0.368 (0.006) & 0.479 (0.005) & 0.957 (0.001) \\
 & G-FDR-0.10 & 0.623 (0.005) & \textbf{0.000 (0.000)} & \textbf{0.756 (0.004)} & 0.932 (0.002) \\
\cmidrule(lr){2-6}
 & \multicolumn{5}{l}{\textit{$p=50$, $q=50$}} \\
 & matSPACE-sw & 0.900 (0.002) & 0.065 (0.002) & \textbf{0.678 (0.004)} & \textbf{0.969 (0.001)} \\
 & matSPACE-dew & 0.662 (0.008) & 0.038 (0.003) & 0.630 (0.004) & 0.938 (0.001) \\
 & matNS-AND & 0.355 (0.004) & 0.001 (0.000) & 0.571 (0.003) & 0.952 (0.001) \\
 & matNS-OR & 0.557 (0.004) & 0.008 (0.000) & 0.669 (0.002) & 0.945 (0.001) \\
 & GEMINI & \textbf{0.938 (0.002)} & 0.195 (0.005) & 0.480 (0.006) & 0.954 (0.001) \\
 & G-FDR-0.10 & 0.442 (0.004) & \textbf{0.000 (0.000)} & 0.648 (0.003) & 0.912 (0.001) \\
\bottomrule
\end{tabular}
}
\end{minipage}
\end{sidewaystable}

\begin{sidewaystable}
\centering
\spacingset{1}
\caption{Simulation results based on 300 data replications, where $U^{-1}$ is fixed at the band structure and $V^{-1}$ is fixed at the cluster structure, evaluated on the estimated $\widehat{U^{-1}}$.
Each entry reports an average of each metric with the standard error in parentheses. Boldface indicates the best-performing method.
}
\label{tab:cluster_U}
\renewcommand{\arraystretch}{0.85}
\tiny
\setlength{\tabcolsep}{4pt}
\begin{minipage}{0.49\textwidth}
\centering
\resizebox{\textwidth}{!}{%
\begin{tabular}{clcccc}
\toprule
$n$ & Method & TPR & FPR & MCC & AUC \\
\midrule
\multirow{28}{*}{10} & \multicolumn{5}{l}{\textit{$p=20$, $q=20$}} \\
 & matSPACE-sw & \textbf{0.398 (0.009)} & 0.129 (0.005) & \textbf{0.286 (0.005)} & \textbf{0.713 (0.003)} \\
 & matSPACE-dew & 0.262 (0.008) & 0.072 (0.004) & 0.255 (0.005) & 0.689 (0.003) \\
 & matNS-AND & 0.023 (0.002) & 0.001 (0.000) & 0.089 (0.005) & 0.701 (0.003) \\
 & matNS-OR & 0.086 (0.004) & 0.014 (0.001) & 0.167 (0.006) & 0.691 (0.003) \\
 & GEMINI & 0.315 (0.011) & 0.093 (0.005) & 0.267 (0.006) & 0.699 (0.003) \\
 & G-FDR-0.10 & 0.002 (0.000) & \textbf{0.000 (0.000)} & 0.011 (0.002) & 0.543 (0.002) \\
\cmidrule(lr){2-6}
 & \multicolumn{5}{l}{\textit{$p=20$, $q=50$}} \\
 & matSPACE-sw & 0.615 (0.008) & 0.138 (0.004) & \textbf{0.457 (0.004)} & \textbf{0.816 (0.002)} \\
 & matSPACE-dew & 0.432 (0.011) & 0.090 (0.005) & 0.392 (0.004) & 0.782 (0.002) \\
 & matNS-AND & 0.064 (0.003) & 0.001 (0.000) & 0.190 (0.007) & 0.797 (0.002) \\
 & matNS-OR & 0.175 (0.005) & 0.013 (0.001) & 0.309 (0.006) & 0.786 (0.002) \\
 & GEMINI & \textbf{0.649 (0.009)} & 0.207 (0.007) & 0.402 (0.004) & 0.796 (0.002) \\
 & G-FDR-0.10 & 0.037 (0.002) & \textbf{0.000 (0.000)} & 0.130 (0.007) & 0.659 (0.003) \\
\cmidrule(lr){2-6}
 & \multicolumn{5}{l}{\textit{$p=50$, $q=20$}} \\
 & matSPACE-sw & \textbf{0.289 (0.006)} & 0.063 (0.003) & \textbf{0.237 (0.003)} & \textbf{0.700 (0.002)} \\
 & matSPACE-dew & 0.165 (0.005) & 0.029 (0.001) & 0.199 (0.003) & 0.674 (0.002) \\
 & matNS-AND & 0.018 (0.001) & \textbf{0.000 (0.000)} & 0.100 (0.004) & 0.688 (0.002) \\
 & matNS-OR & 0.068 (0.002) & 0.009 (0.000) & 0.145 (0.003) & 0.676 (0.002) \\
 & GEMINI & 0.199 (0.007) & 0.035 (0.002) & 0.227 (0.003) & 0.686 (0.002) \\
 & G-FDR-0.10 & 0.000 (0.000) & \textbf{0.000 (0.000)} & 0.001 (0.000) & 0.528 (0.001) \\
\cmidrule(lr){2-6}
 & \multicolumn{5}{l}{\textit{$p=50$, $q=50$}} \\
 & matSPACE-sw & \textbf{0.495 (0.006)} & 0.066 (0.002) & \textbf{0.396 (0.003)} & \textbf{0.804 (0.002)} \\
 & matSPACE-dew & 0.275 (0.006) & 0.027 (0.001) & 0.326 (0.003) & 0.766 (0.001) \\
 & matNS-AND & 0.053 (0.002) & 0.001 (0.000) & 0.200 (0.004) & 0.784 (0.001) \\
 & matNS-OR & 0.146 (0.003) & 0.008 (0.000) & 0.279 (0.003) & 0.771 (0.001) \\
 & GEMINI & 0.490 (0.006) & 0.080 (0.003) & 0.368 (0.003) & 0.783 (0.002) \\
 & G-FDR-0.10 & 0.007 (0.001) & \textbf{0.000 (0.000)} & 0.050 (0.004) & 0.619 (0.002) \\
\midrule
\midrule
\multirow{14}{*}{40} & \multicolumn{5}{l}{\textit{$p=20$, $q=20$}} \\
 & matSPACE-sw & 0.727 (0.006) & 0.154 (0.004) & \textbf{0.521 (0.005)} & \textbf{0.864 (0.002)} \\
 & matSPACE-dew & 0.543 (0.011) & 0.106 (0.006) & 0.466 (0.004) & 0.827 (0.002) \\
 & matNS-AND & 0.110 (0.004) & 0.001 (0.000) & 0.275 (0.006) & 0.843 (0.002) \\
 & matNS-OR & 0.250 (0.006) & 0.013 (0.001) & 0.397 (0.006) & 0.832 (0.002) \\
 & GEMINI & \textbf{0.791 (0.006)} & 0.276 (0.006) & 0.430 (0.004) & 0.842 (0.002) \\
 & G-FDR-0.10 & 0.119 (0.004) & \textbf{0.000 (0.000)} & 0.294 (0.007) & 0.743 (0.003) \\
\cmidrule(lr){2-6}
 & \multicolumn{5}{l}{\textit{$p=20$, $q=50$}} \\
 & matSPACE-sw & 0.880 (0.004) & 0.149 (0.004) & \textbf{0.641 (0.004)} & \textbf{0.942 (0.001)} \\
 & matSPACE-dew & 0.805 (0.010) & 0.215 (0.011) & 0.543 (0.007) & 0.911 (0.002) \\
 & matNS-AND & 0.251 (0.006) & 0.001 (0.000) & 0.449 (0.005) & 0.924 (0.001) \\
 & matNS-OR & 0.449 (0.006) & 0.013 (0.001) & 0.582 (0.005) & 0.915 (0.002) \\
 & GEMINI & \textbf{0.934 (0.003)} & 0.343 (0.006) & 0.471 (0.005) & 0.923 (0.001) \\
 & G-FDR-0.10 & 0.419 (0.006) & \textbf{0.000 (0.000)} & 0.604 (0.005) & 0.880 (0.002) \\
\bottomrule
\end{tabular}
}
\end{minipage}%
\hfill
\begin{minipage}{0.49\textwidth}
\centering
\resizebox{\textwidth}{!}{%
\begin{tabular}{clcccc}
\toprule
$n$ & Method & TPR & FPR & MCC & AUC \\
\midrule
\multirow{14}{*}{40} & \multicolumn{5}{l}{\textit{$p=50$, $q=20$}} \\
 & matSPACE-sw & 0.625 (0.005) & 0.073 (0.002) & \textbf{0.477 (0.003)} & \textbf{0.858 (0.001)} \\
 & matSPACE-dew & 0.368 (0.007) & 0.032 (0.002) & 0.400 (0.003) & 0.817 (0.001) \\
 & matNS-AND & 0.093 (0.002) & 0.001 (0.000) & 0.274 (0.004) & 0.836 (0.001) \\
 & matNS-OR & 0.218 (0.003) & 0.008 (0.000) & 0.364 (0.003) & 0.824 (0.001) \\
 & GEMINI & \textbf{0.651 (0.006)} & 0.120 (0.004) & 0.413 (0.004) & 0.836 (0.001) \\
 & G-FDR-0.10 & 0.037 (0.002) & \textbf{0.000 (0.000)} & 0.169 (0.004) & 0.706 (0.002) \\
\cmidrule(lr){2-6}
 & \multicolumn{5}{l}{\textit{$p=50$, $q=50$}} \\
 & matSPACE-sw & 0.827 (0.003) & 0.072 (0.002) & \textbf{0.612 (0.004)} & \textbf{0.941 (0.001)} \\
 & matSPACE-dew & 0.564 (0.008) & 0.037 (0.002) & 0.551 (0.003) & 0.902 (0.001) \\
 & matNS-AND & 0.231 (0.003) & 0.001 (0.000) & 0.452 (0.003) & 0.920 (0.001) \\
 & matNS-OR & 0.414 (0.004) & 0.009 (0.000) & 0.554 (0.003) & 0.910 (0.001) \\
 & GEMINI & \textbf{0.876 (0.003)} & 0.178 (0.004) & 0.459 (0.005) & 0.922 (0.001) \\
 & G-FDR-0.10 & 0.241 (0.003) & \textbf{0.000 (0.000)} & 0.473 (0.003) & 0.852 (0.002) \\
\midrule
\midrule
\multirow{28}{*}{70} & \multicolumn{5}{l}{\textit{$p=20$, $q=20$}} \\
 & matSPACE-sw & 0.832 (0.005) & 0.161 (0.005) & \textbf{0.594 (0.005)} & \textbf{0.916 (0.002)} \\
 & matSPACE-dew & 0.720 (0.010) & 0.170 (0.008) & 0.523 (0.006) & 0.880 (0.002) \\
 & matNS-AND & 0.189 (0.005) & 0.001 (0.000) & 0.381 (0.006) & 0.895 (0.002) \\
 & matNS-OR & 0.371 (0.006) & 0.013 (0.001) & 0.513 (0.005) & 0.885 (0.002) \\
 & GEMINI & \textbf{0.891 (0.004)} & 0.323 (0.006) & 0.456 (0.005) & 0.896 (0.002) \\
 & G-FDR-0.10 & 0.287 (0.006) & \textbf{0.000 (0.000)} & 0.489 (0.006) & 0.835 (0.002) \\
\cmidrule(lr){2-6}
 & \multicolumn{5}{l}{\textit{$p=20$, $q=50$}} \\
 & matSPACE-sw & 0.938 (0.003) & 0.126 (0.004) & 0.716 (0.005) & \textbf{0.973 (0.001)} \\
 & matSPACE-dew & 0.913 (0.006) & 0.312 (0.014) & 0.532 (0.011) & 0.948 (0.001) \\
 & matNS-AND & 0.372 (0.007) & 0.001 (0.000) & 0.560 (0.005) & 0.958 (0.001) \\
 & matNS-OR & 0.582 (0.006) & 0.011 (0.001) & 0.691 (0.004) & 0.951 (0.001) \\
 & GEMINI & \textbf{0.978 (0.002)} & 0.389 (0.006) & 0.465 (0.005) & 0.959 (0.001) \\
 & G-FDR-0.10 & 0.629 (0.005) & \textbf{0.000 (0.000)} & \textbf{0.759 (0.004)} & 0.935 (0.002) \\
\cmidrule(lr){2-6}
 & \multicolumn{5}{l}{\textit{$p=50$, $q=20$}} \\
 & matSPACE-sw & 0.752 (0.004) & 0.069 (0.002) & \textbf{0.571 (0.003)} & \textbf{0.913 (0.001)} \\
 & matSPACE-dew & 0.486 (0.008) & 0.035 (0.002) & 0.495 (0.003) & 0.870 (0.001) \\
 & matNS-AND & 0.170 (0.003) & 0.001 (0.000) & 0.380 (0.004) & 0.890 (0.001) \\
 & matNS-OR & 0.335 (0.004) & 0.009 (0.000) & 0.480 (0.003) & 0.879 (0.001) \\
 & GEMINI & \textbf{0.805 (0.005)} & 0.166 (0.005) & 0.441 (0.005) & 0.891 (0.001) \\
 & G-FDR-0.10 & 0.143 (0.003) & \textbf{0.000 (0.000)} & 0.359 (0.004) & 0.799 (0.002) \\
\cmidrule(lr){2-6}
 & \multicolumn{5}{l}{\textit{$p=50$, $q=50$}} \\
 & matSPACE-sw & 0.906 (0.002) & 0.063 (0.002) & \textbf{0.686 (0.004)} & \textbf{0.971 (0.001)} \\
 & matSPACE-dew & 0.675 (0.007) & 0.038 (0.003) & 0.640 (0.004) & 0.940 (0.001) \\
 & matNS-AND & 0.356 (0.004) & 0.001 (0.000) & 0.572 (0.003) & 0.954 (0.001) \\
 & matNS-OR & 0.556 (0.004) & 0.008 (0.000) & 0.671 (0.003) & 0.947 (0.001) \\
 & GEMINI & \textbf{0.942 (0.002)} & 0.189 (0.005) & 0.486 (0.005) & 0.958 (0.001) \\
 & G-FDR-0.10 & 0.446 (0.004) & \textbf{0.000 (0.000)} & 0.652 (0.003) & 0.917 (0.001) \\
\bottomrule
\end{tabular}
}
\end{minipage}
\end{sidewaystable}

\begin{sidewaystable}
\centering
\spacingset{1}
\caption{Simulation results based on 300 data replications, where $U^{-1}$ is fixed at the band structure and $V^{-1}$ is fixed at the cluster structure, evaluated on the estimated $\widehat{V^{-1}}$.
Each entry reports an average of each metric with the standard error in parentheses. Boldface indicates the best-performing method.
}
\label{tab:cluster_V}
\renewcommand{\arraystretch}{0.85}
\tiny
\setlength{\tabcolsep}{4pt}
\begin{minipage}{0.49\textwidth}
\centering
\resizebox{\textwidth}{!}{%
\begin{tabular}{clcccc}
\toprule
$n$ & Method & TPR & FPR & MCC & AUC \\
\midrule
\multirow{28}{*}{10} & \multicolumn{5}{l}{\textit{$p=20$, $q=20$}} \\
 & matSPACE-sw & \textbf{0.363 (0.012)} & 0.149 (0.007) & \textbf{0.237 (0.007)} & \textbf{0.691 (0.003)} \\
 & matSPACE-dew & 0.221 (0.009) & 0.070 (0.004) & 0.214 (0.006) & 0.672 (0.003) \\
 & matNS-AND & 0.012 (0.001) & 0.001 (0.000) & 0.049 (0.004) & 0.681 (0.003) \\
 & matNS-OR & 0.073 (0.004) & 0.016 (0.001) & 0.127 (0.006) & 0.673 (0.003) \\
 & GEMINI & 0.234 (0.012) & 0.084 (0.005) & 0.185 (0.008) & 0.675 (0.003) \\
 & G-FDR-0.10 & 0.000 (0.000) & \textbf{0.000 (0.000)} & 0.001 (0.001) & 0.520 (0.002) \\
\cmidrule(lr){2-6}
 & \multicolumn{5}{l}{\textit{$p=20$, $q=50$}} \\
 & matSPACE-sw & \textbf{0.148 (0.006)} & 0.044 (0.003) & \textbf{0.147 (0.004)} & \textbf{0.645 (0.002)} \\
 & matSPACE-dew & 0.090 (0.004) & 0.022 (0.001) & 0.128 (0.003) & 0.617 (0.002) \\
 & matNS-AND & 0.003 (0.000) & \textbf{0.000 (0.000)} & 0.023 (0.002) & 0.626 (0.002) \\
 & matNS-OR & 0.030 (0.001) & 0.006 (0.000) & 0.077 (0.003) & 0.620 (0.002) \\
 & GEMINI & 0.053 (0.004) & 0.012 (0.001) & 0.086 (0.005) & 0.636 (0.002) \\
 & G-FDR-0.10 & 0.000 (0.000) & \textbf{0.000 (0.000)} & 0.000 (0.000) & 0.502 (0.000) \\
\cmidrule(lr){2-6}
 & \multicolumn{5}{l}{\textit{$p=50$, $q=20$}} \\
 & matSPACE-sw & 0.657 (0.010) & 0.207 (0.006) & \textbf{0.438 (0.005)} & \textbf{0.809 (0.003)} \\
 & matSPACE-dew & 0.444 (0.013) & 0.118 (0.006) & 0.376 (0.005) & 0.775 (0.003) \\
 & matNS-AND & 0.050 (0.003) & 0.002 (0.000) & 0.140 (0.007) & 0.788 (0.003) \\
 & matNS-OR & 0.184 (0.007) & 0.025 (0.001) & 0.272 (0.007) & 0.776 (0.003) \\
 & GEMINI & \textbf{0.678 (0.010)} & 0.265 (0.008) & 0.395 (0.005) & 0.781 (0.003) \\
 & G-FDR-0.10 & 0.013 (0.002) & \textbf{0.000 (0.000)} & 0.049 (0.005) & 0.636 (0.004) \\
\cmidrule(lr){2-6}
 & \multicolumn{5}{l}{\textit{$p=50$, $q=50$}} \\
 & matSPACE-sw & 0.351 (0.008) & 0.056 (0.002) & \textbf{0.332 (0.004)} & \textbf{0.761 (0.002)} \\
 & matSPACE-dew & 0.191 (0.005) & 0.027 (0.001) & 0.251 (0.003) & 0.715 (0.002) \\
 & matNS-AND & 0.014 (0.001) & \textbf{0.000 (0.000)} & 0.079 (0.004) & 0.731 (0.002) \\
 & matNS-OR & 0.078 (0.003) & 0.007 (0.000) & 0.176 (0.004) & 0.722 (0.002) \\
 & GEMINI & \textbf{0.370 (0.009)} & 0.074 (0.003) & 0.318 (0.004) & 0.743 (0.002) \\
 & G-FDR-0.10 & 0.000 (0.000) & \textbf{0.000 (0.000)} & 0.000 (0.000) & 0.531 (0.001) \\
\midrule
\midrule
\multirow{14}{*}{40} & \multicolumn{5}{l}{\textit{$p=20$, $q=20$}} \\
 & matSPACE-sw & 0.809 (0.008) & 0.237 (0.006) & \textbf{0.532 (0.005)} & \textbf{0.875 (0.002)} \\
 & matSPACE-dew & 0.615 (0.014) & 0.176 (0.008) & 0.459 (0.005) & 0.833 (0.002) \\
 & matNS-AND & 0.102 (0.005) & 0.004 (0.000) & 0.223 (0.008) & 0.849 (0.002) \\
 & matNS-OR & 0.300 (0.009) & 0.030 (0.001) & 0.384 (0.008) & 0.836 (0.002) \\
 & GEMINI & \textbf{0.858 (0.006)} & 0.365 (0.007) & 0.447 (0.004) & 0.845 (0.002) \\
 & G-FDR-0.10 & 0.086 (0.006) & \textbf{0.000 (0.000)} & 0.198 (0.009) & 0.757 (0.005) \\
\cmidrule(lr){2-6}
 & \multicolumn{5}{l}{\textit{$p=20$, $q=50$}} \\
 & matSPACE-sw & 0.541 (0.008) & 0.077 (0.002) & \textbf{0.450 (0.003)} & \textbf{0.832 (0.002)} \\
 & matSPACE-dew & 0.281 (0.006) & 0.032 (0.001) & 0.335 (0.004) & 0.777 (0.002) \\
 & matNS-AND & 0.032 (0.002) & 0.001 (0.000) & 0.137 (0.004) & 0.797 (0.002) \\
 & matNS-OR & 0.137 (0.004) & 0.010 (0.000) & 0.262 (0.004) & 0.786 (0.002) \\
 & GEMINI & \textbf{0.593 (0.008)} & 0.120 (0.004) & 0.415 (0.003) & 0.811 (0.002) \\
 & G-FDR-0.10 & 0.001 (0.000) & \textbf{0.000 (0.000)} & 0.006 (0.001) & 0.610 (0.003) \\
\bottomrule
\end{tabular}
}
\end{minipage}%
\hfill
\begin{minipage}{0.49\textwidth}
\centering
\resizebox{\textwidth}{!}{%
\begin{tabular}{clcccc}
\toprule
$n$ & Method & TPR & FPR & MCC & AUC \\
\midrule
\multirow{14}{*}{40} & \multicolumn{5}{l}{\textit{$p=50$, $q=20$}} \\
 & matSPACE-sw & 0.962 (0.002) & 0.246 (0.005) & 0.646 (0.004) & \textbf{0.963 (0.001)} \\
 & matSPACE-dew & 0.917 (0.007) & 0.325 (0.010) & 0.550 (0.006) & 0.923 (0.002) \\
 & matNS-AND & 0.305 (0.009) & 0.008 (0.001) & 0.454 (0.008) & 0.938 (0.001) \\
 & matNS-OR & 0.575 (0.009) & 0.040 (0.002) & 0.614 (0.006) & 0.927 (0.001) \\
 & GEMINI & \textbf{0.983 (0.002)} & 0.480 (0.006) & 0.462 (0.004) & 0.937 (0.001) \\
 & G-FDR-0.10 & 0.532 (0.011) & \textbf{0.001 (0.000)} & \textbf{0.662 (0.008)} & 0.927 (0.002) \\
\cmidrule(lr){2-6}
 & \multicolumn{5}{l}{\textit{$p=50$, $q=50$}} \\
 & matSPACE-sw & 0.853 (0.004) & 0.105 (0.002) & \textbf{0.612 (0.003)} & \textbf{0.942 (0.001)} \\
 & matSPACE-dew & 0.549 (0.009) & 0.059 (0.002) & 0.499 (0.003) & 0.884 (0.001) \\
 & matNS-AND & 0.143 (0.004) & 0.002 (0.000) & 0.331 (0.005) & 0.905 (0.001) \\
 & matNS-OR & 0.365 (0.006) & 0.016 (0.000) & 0.482 (0.004) & 0.896 (0.001) \\
 & GEMINI & \textbf{0.902 (0.003)} & 0.213 (0.005) & 0.496 (0.004) & 0.922 (0.001) \\
 & G-FDR-0.10 & 0.058 (0.003) & \textbf{0.000 (0.000)} & 0.202 (0.006) & 0.819 (0.003) \\
\midrule
\midrule
\multirow{28}{*}{70} & \multicolumn{5}{l}{\textit{$p=20$, $q=20$}} \\
 & matSPACE-sw & 0.926 (0.004) & 0.252 (0.005) & \textbf{0.612 (0.004)} & \textbf{0.938 (0.001)} \\
 & matSPACE-dew & 0.840 (0.010) & 0.273 (0.009) & 0.535 (0.005) & 0.895 (0.002) \\
 & matNS-AND & 0.214 (0.008) & 0.007 (0.000) & 0.360 (0.008) & 0.910 (0.002) \\
 & matNS-OR & 0.469 (0.010) & 0.038 (0.002) & 0.527 (0.007) & 0.898 (0.002) \\
 & GEMINI & \textbf{0.959 (0.003)} & 0.436 (0.007) & 0.475 (0.004) & 0.909 (0.001) \\
 & G-FDR-0.10 & 0.319 (0.011) & \textbf{0.000 (0.000)} & 0.477 (0.011) & 0.878 (0.003) \\
\cmidrule(lr){2-6}
 & \multicolumn{5}{l}{\textit{$p=20$, $q=50$}} \\
 & matSPACE-sw & 0.752 (0.005) & 0.095 (0.002) & \textbf{0.564 (0.003)} & \textbf{0.906 (0.001)} \\
 & matSPACE-dew & 0.442 (0.008) & 0.047 (0.002) & 0.442 (0.003) & 0.846 (0.002) \\
 & matNS-AND & 0.086 (0.003) & 0.001 (0.000) & 0.247 (0.005) & 0.868 (0.001) \\
 & matNS-OR & 0.264 (0.005) & 0.013 (0.000) & 0.396 (0.004) & 0.858 (0.001) \\
 & GEMINI & \textbf{0.806 (0.006)} & 0.180 (0.006) & 0.480 (0.004) & 0.885 (0.001) \\
 & G-FDR-0.10 & 0.013 (0.001) & \textbf{0.000 (0.000)} & 0.073 (0.005) & 0.740 (0.003) \\
\cmidrule(lr){2-6}
 & \multicolumn{5}{l}{\textit{$p=50$, $q=20$}} \\
 & matSPACE-sw & 0.992 (0.001) & 0.210 (0.004) & 0.708 (0.004) & \textbf{0.988 (0.000)} \\
 & matSPACE-dew & 0.975 (0.003) & 0.403 (0.013) & 0.534 (0.009) & 0.961 (0.001) \\
 & matNS-AND & 0.497 (0.010) & 0.007 (0.001) & 0.622 (0.007) & 0.972 (0.001) \\
 & matNS-OR & 0.759 (0.008) & 0.042 (0.002) & 0.754 (0.005) & 0.963 (0.001) \\
 & GEMINI & \textbf{0.998 (0.000)} & 0.493 (0.006) & 0.466 (0.004) & 0.971 (0.001) \\
 & G-FDR-0.10 & 0.817 (0.007) & \textbf{0.001 (0.000)} & \textbf{0.872 (0.005)} & 0.970 (0.001) \\
\cmidrule(lr){2-6}
 & \multicolumn{5}{l}{\textit{$p=50$, $q=50$}} \\
 & matSPACE-sw & 0.951 (0.002) & 0.107 (0.002) & \textbf{0.673 (0.004)} & \textbf{0.979 (0.001)} \\
 & matSPACE-dew & 0.718 (0.009) & 0.076 (0.003) & 0.590 (0.003) & 0.933 (0.001) \\
 & matNS-AND & 0.278 (0.006) & 0.002 (0.000) & 0.480 (0.005) & 0.951 (0.001) \\
 & matNS-OR & 0.552 (0.006) & 0.018 (0.000) & 0.628 (0.004) & 0.943 (0.001) \\
 & GEMINI & \textbf{0.974 (0.001)} & 0.242 (0.006) & 0.511 (0.005) & 0.965 (0.001) \\
 & G-FDR-0.10 & 0.303 (0.007) & \textbf{0.000 (0.000)} & 0.516 (0.007) & 0.926 (0.002) \\
\bottomrule
\end{tabular}
}
\end{minipage}
\end{sidewaystable}

\begin{sidewaystable}
\centering
\spacingset{1}
\caption{Simulation results based on 300 data replications, where $U^{-1}$ is fixed at the band structure and $V^{-1}$ is fixed at the hub structure, evaluated on the estimated $\widehat{U^{-1}}$.
Each entry reports an average of each metric with the standard error in parentheses. Boldface indicates the best-performing method.
}
\label{tab:hub_U}
\renewcommand{\arraystretch}{0.85}
\tiny
\setlength{\tabcolsep}{4pt}
\begin{minipage}{0.49\textwidth}
\centering
\resizebox{\textwidth}{!}{%
\begin{tabular}{clcccc}
\toprule
$n$ & Method & TPR & FPR & MCC & AUC \\
\midrule
\multirow{28}{*}{10} & \multicolumn{5}{l}{\textit{$p=20$, $q=20$}} \\
 & matSPACE-sw & \textbf{0.394 (0.009)} & 0.118 (0.005) & \textbf{0.300 (0.005)} & \textbf{0.716 (0.003)} \\
 & matSPACE-dew & 0.255 (0.008) & 0.063 (0.004) & 0.265 (0.005) & 0.692 (0.003) \\
 & matNS-AND & 0.022 (0.002) & 0.001 (0.000) & 0.086 (0.006) & 0.705 (0.003) \\
 & matNS-OR & 0.081 (0.004) & 0.011 (0.001) & 0.164 (0.006) & 0.695 (0.003) \\
 & GEMINI & 0.319 (0.010) & 0.092 (0.005) & 0.271 (0.006) & 0.701 (0.003) \\
 & G-FDR-0.10 & 0.002 (0.001) & \textbf{0.000 (0.000)} & 0.010 (0.002) & 0.543 (0.002) \\
\cmidrule(lr){2-6}
 & \multicolumn{5}{l}{\textit{$p=20$, $q=50$}} \\
 & matSPACE-sw & 0.622 (0.008) & 0.139 (0.004) & \textbf{0.463 (0.004)} & \textbf{0.820 (0.002)} \\
 & matSPACE-dew & 0.429 (0.010) & 0.084 (0.004) & 0.401 (0.004) & 0.785 (0.002) \\
 & matNS-AND & 0.064 (0.003) & 0.001 (0.000) & 0.194 (0.006) & 0.801 (0.002) \\
 & matNS-OR & 0.173 (0.005) & 0.012 (0.001) & 0.309 (0.006) & 0.789 (0.002) \\
 & GEMINI & \textbf{0.672 (0.008)} & 0.215 (0.006) & 0.406 (0.004) & 0.798 (0.002) \\
 & G-FDR-0.10 & 0.037 (0.003) & \textbf{0.000 (0.000)} & 0.127 (0.007) & 0.660 (0.003) \\
\cmidrule(lr){2-6}
 & \multicolumn{5}{l}{\textit{$p=50$, $q=20$}} \\
 & matSPACE-sw & \textbf{0.276 (0.006)} & 0.053 (0.002) & \textbf{0.249 (0.003)} & \textbf{0.704 (0.002)} \\
 & matSPACE-dew & 0.158 (0.004) & 0.024 (0.001) & 0.205 (0.003) & 0.678 (0.002) \\
 & matNS-AND & 0.017 (0.001) & \textbf{0.000 (0.000)} & 0.098 (0.004) & 0.692 (0.002) \\
 & matNS-OR & 0.066 (0.002) & 0.007 (0.000) & 0.151 (0.003) & 0.681 (0.002) \\
 & GEMINI & 0.192 (0.007) & 0.031 (0.002) & 0.234 (0.003) & 0.691 (0.002) \\
 & G-FDR-0.10 & 0.000 (0.000) & \textbf{0.000 (0.000)} & 0.001 (0.001) & 0.532 (0.001) \\
\cmidrule(lr){2-6}
 & \multicolumn{5}{l}{\textit{$p=50$, $q=50$}} \\
 & matSPACE-sw & 0.499 (0.006) & 0.064 (0.002) & \textbf{0.405 (0.003)} & \textbf{0.808 (0.002)} \\
 & matSPACE-dew & 0.277 (0.006) & 0.027 (0.001) & 0.333 (0.003) & 0.769 (0.001) \\
 & matNS-AND & 0.055 (0.002) & \textbf{0.000 (0.000)} & 0.205 (0.004) & 0.788 (0.001) \\
 & matNS-OR & 0.147 (0.003) & 0.007 (0.000) & 0.281 (0.003) & 0.775 (0.001) \\
 & GEMINI & \textbf{0.511 (0.008)} & 0.095 (0.004) & 0.364 (0.003) & 0.787 (0.002) \\
 & G-FDR-0.10 & 0.007 (0.001) & \textbf{0.000 (0.000)} & 0.050 (0.004) & 0.621 (0.002) \\
\midrule
\midrule
\multirow{14}{*}{40} & \multicolumn{5}{l}{\textit{$p=20$, $q=20$}} \\
 & matSPACE-sw & 0.718 (0.006) & 0.140 (0.004) & \textbf{0.537 (0.005)} & \textbf{0.868 (0.002)} \\
 & matSPACE-dew & 0.540 (0.011) & 0.095 (0.005) & 0.480 (0.004) & 0.831 (0.002) \\
 & matNS-AND & 0.109 (0.004) & 0.001 (0.000) & 0.275 (0.006) & 0.847 (0.002) \\
 & matNS-OR & 0.255 (0.006) & 0.012 (0.001) & 0.406 (0.005) & 0.836 (0.002) \\
 & GEMINI & \textbf{0.788 (0.006)} & 0.268 (0.007) & 0.436 (0.005) & 0.846 (0.002) \\
 & G-FDR-0.10 & 0.122 (0.005) & \textbf{0.000 (0.000)} & 0.296 (0.007) & 0.751 (0.003) \\
\cmidrule(lr){2-6}
 & \multicolumn{5}{l}{\textit{$p=20$, $q=50$}} \\
 & matSPACE-sw & 0.882 (0.004) & 0.140 (0.004) & \textbf{0.655 (0.004)} & \textbf{0.944 (0.001)} \\
 & matSPACE-dew & 0.807 (0.009) & 0.210 (0.011) & 0.551 (0.007) & 0.913 (0.001) \\
 & matNS-AND & 0.252 (0.006) & 0.001 (0.000) & 0.450 (0.005) & 0.926 (0.001) \\
 & matNS-OR & 0.451 (0.006) & 0.011 (0.001) & 0.588 (0.005) & 0.917 (0.001) \\
 & GEMINI & \textbf{0.937 (0.003)} & 0.343 (0.006) & 0.473 (0.005) & 0.926 (0.001) \\
 & G-FDR-0.10 & 0.423 (0.006) & \textbf{0.000 (0.000)} & 0.608 (0.005) & 0.884 (0.002) \\
\bottomrule
\end{tabular}
}
\end{minipage}%
\hfill
\begin{minipage}{0.49\textwidth}
\centering
\resizebox{\textwidth}{!}{%
\begin{tabular}{clcccc}
\toprule
$n$ & Method & TPR & FPR & MCC & AUC \\
\midrule
\multirow{14}{*}{40} & \multicolumn{5}{l}{\textit{$p=50$, $q=20$}} \\
 & matSPACE-sw & 0.618 (0.005) & 0.065 (0.002) & \textbf{0.492 (0.003)} & \textbf{0.863 (0.001)} \\
 & matSPACE-dew & 0.363 (0.007) & 0.029 (0.001) & 0.409 (0.003) & 0.821 (0.001) \\
 & matNS-AND & 0.094 (0.002) & 0.001 (0.000) & 0.277 (0.004) & 0.841 (0.001) \\
 & matNS-OR & 0.218 (0.004) & 0.008 (0.000) & 0.365 (0.004) & 0.828 (0.001) \\
 & GEMINI & \textbf{0.676 (0.006)} & 0.136 (0.004) & 0.405 (0.004) & 0.840 (0.001) \\
 & G-FDR-0.10 & 0.039 (0.002) & \textbf{0.000 (0.000)} & 0.174 (0.004) & 0.713 (0.002) \\
\cmidrule(lr){2-6}
 & \multicolumn{5}{l}{\textit{$p=50$, $q=50$}} \\
 & matSPACE-sw & 0.826 (0.003) & 0.068 (0.002) & \textbf{0.623 (0.004)} & \textbf{0.942 (0.001)} \\
 & matSPACE-dew & 0.564 (0.008) & 0.035 (0.002) & 0.558 (0.003) & 0.904 (0.001) \\
 & matNS-AND & 0.235 (0.004) & 0.001 (0.000) & 0.455 (0.004) & 0.922 (0.001) \\
 & matNS-OR & 0.422 (0.004) & 0.009 (0.000) & 0.561 (0.003) & 0.912 (0.001) \\
 & GEMINI & \textbf{0.878 (0.003)} & 0.185 (0.005) & 0.460 (0.005) & 0.924 (0.001) \\
 & G-FDR-0.10 & 0.247 (0.003) & \textbf{0.000 (0.000)} & 0.479 (0.003) & 0.856 (0.001) \\
\midrule
\midrule
\multirow{28}{*}{70} & \multicolumn{5}{l}{\textit{$p=20$, $q=20$}} \\
 & matSPACE-sw & 0.834 (0.004) & 0.144 (0.004) & \textbf{0.615 (0.005)} & \textbf{0.920 (0.002)} \\
 & matSPACE-dew & 0.707 (0.011) & 0.152 (0.008) & 0.537 (0.006) & 0.884 (0.002) \\
 & matNS-AND & 0.192 (0.005) & 0.002 (0.000) & 0.384 (0.006) & 0.899 (0.002) \\
 & matNS-OR & 0.376 (0.007) & 0.012 (0.001) & 0.521 (0.005) & 0.889 (0.002) \\
 & GEMINI & \textbf{0.895 (0.004)} & 0.324 (0.007) & 0.461 (0.005) & 0.900 (0.002) \\
 & G-FDR-0.10 & 0.293 (0.006) & \textbf{0.000 (0.000)} & 0.495 (0.006) & 0.841 (0.002) \\
\cmidrule(lr){2-6}
 & \multicolumn{5}{l}{\textit{$p=20$, $q=50$}} \\
 & matSPACE-sw & 0.941 (0.003) & 0.121 (0.003) & 0.725 (0.004) & \textbf{0.974 (0.001)} \\
 & matSPACE-dew & 0.913 (0.006) & 0.313 (0.015) & 0.533 (0.011) & 0.949 (0.001) \\
 & matNS-AND & 0.376 (0.006) & 0.001 (0.000) & 0.564 (0.005) & 0.960 (0.001) \\
 & matNS-OR & 0.585 (0.006) & 0.010 (0.001) & 0.696 (0.004) & 0.953 (0.001) \\
 & GEMINI & \textbf{0.979 (0.001)} & 0.380 (0.005) & 0.471 (0.004) & 0.960 (0.001) \\
 & G-FDR-0.10 & 0.636 (0.005) & \textbf{0.000 (0.000)} & \textbf{0.765 (0.004)} & 0.936 (0.001) \\
\cmidrule(lr){2-6}
 & \multicolumn{5}{l}{\textit{$p=50$, $q=20$}} \\
 & matSPACE-sw & 0.754 (0.004) & 0.064 (0.002) & \textbf{0.586 (0.004)} & \textbf{0.916 (0.001)} \\
 & matSPACE-dew & 0.486 (0.008) & 0.033 (0.002) & 0.504 (0.003) & 0.874 (0.001) \\
 & matNS-AND & 0.169 (0.003) & 0.001 (0.000) & 0.382 (0.004) & 0.894 (0.001) \\
 & matNS-OR & 0.335 (0.004) & 0.008 (0.000) & 0.489 (0.003) & 0.882 (0.001) \\
 & GEMINI & \textbf{0.810 (0.004)} & 0.159 (0.005) & 0.452 (0.005) & 0.895 (0.001) \\
 & G-FDR-0.10 & 0.145 (0.003) & \textbf{0.000 (0.000)} & 0.362 (0.004) & 0.805 (0.002) \\
\cmidrule(lr){2-6}
 & \multicolumn{5}{l}{\textit{$p=50$, $q=50$}} \\
 & matSPACE-sw & 0.905 (0.002) & 0.059 (0.002) & \textbf{0.698 (0.004)} & \textbf{0.972 (0.001)} \\
 & matSPACE-dew & 0.671 (0.007) & 0.034 (0.002) & 0.648 (0.004) & 0.941 (0.001) \\
 & matNS-AND & 0.363 (0.004) & 0.001 (0.000) & 0.578 (0.003) & 0.956 (0.001) \\
 & matNS-OR & 0.564 (0.004) & 0.008 (0.000) & 0.677 (0.003) & 0.948 (0.001) \\
 & GEMINI & \textbf{0.947 (0.002)} & 0.214 (0.005) & 0.463 (0.006) & 0.959 (0.001) \\
 & G-FDR-0.10 & 0.453 (0.003) & \textbf{0.000 (0.000)} & 0.657 (0.003) & 0.920 (0.001) \\
\bottomrule
\end{tabular}
}
\end{minipage}
\end{sidewaystable}

\begin{sidewaystable}
\centering
\spacingset{1}
\caption{Simulation results based on 300 data replications, where $U^{-1}$ is fixed at the band structure and $V^{-1}$ is fixed at the hub structure, evaluated on the estimated $\widehat{V^{-1}}$.
Each entry reports an average of each metric with the standard error in parentheses. Boldface indicates the best-performing method.
}
\label{tab:hub_V}
\renewcommand{\arraystretch}{0.85}
\tiny
\setlength{\tabcolsep}{4pt}
\begin{minipage}{0.49\textwidth}
\centering
\resizebox{\textwidth}{!}{%
\begin{tabular}{clcccc}
\toprule
$n$ & Method & TPR & FPR & MCC & AUC \\
\midrule
\multirow{28}{*}{10} & \multicolumn{5}{l}{\textit{$p=20$, $q=20$}} \\
 & matSPACE-sw & \textbf{0.288 (0.013)} & 0.076 (0.004) & 0.207 (0.008) & \textbf{0.712 (0.004)} \\
 & matSPACE-dew & 0.225 (0.010) & 0.045 (0.003) & \textbf{0.225 (0.010)} & 0.693 (0.005) \\
 & matNS-AND & 0.012 (0.002) & \textbf{0.000 (0.000)} & 0.044 (0.005) & 0.691 (0.005) \\
 & matNS-OR & 0.100 (0.008) & 0.005 (0.001) & 0.170 (0.012) & 0.695 (0.005) \\
 & GEMINI & 0.192 (0.012) & 0.044 (0.004) & 0.164 (0.009) & 0.700 (0.004) \\
 & G-FDR-0.10 & 0.000 (0.000) & \textbf{0.000 (0.000)} & 0.001 (0.001) & 0.511 (0.001) \\
\cmidrule(lr){2-6}
 & \multicolumn{5}{l}{\textit{$p=20$, $q=50$}} \\
 & matSPACE-sw & \textbf{0.220 (0.009)} & 0.036 (0.002) & 0.179 (0.005) & \textbf{0.711 (0.003)} \\
 & matSPACE-dew & 0.164 (0.006) & 0.020 (0.001) & \textbf{0.194 (0.006)} & 0.685 (0.004) \\
 & matNS-AND & 0.011 (0.001) & \textbf{0.000 (0.000)} & 0.058 (0.005) & 0.687 (0.003) \\
 & matNS-OR & 0.084 (0.004) & 0.005 (0.000) & 0.163 (0.007) & 0.690 (0.004) \\
 & GEMINI & 0.106 (0.008) & 0.012 (0.001) & 0.131 (0.007) & 0.700 (0.003) \\
 & G-FDR-0.10 & 0.000 (0.000) & \textbf{0.000 (0.000)} & 0.000 (0.000) & 0.506 (0.001) \\
\cmidrule(lr){2-6}
 & \multicolumn{5}{l}{\textit{$p=50$, $q=20$}} \\
 & matSPACE-sw & 0.533 (0.014) & 0.074 (0.004) & \textbf{0.426 (0.009)} & \textbf{0.839 (0.004)} \\
 & matSPACE-dew & 0.398 (0.012) & 0.042 (0.003) & 0.407 (0.010) & 0.808 (0.005) \\
 & matNS-AND & 0.029 (0.003) & \textbf{0.000 (0.000)} & 0.089 (0.008) & 0.807 (0.004) \\
 & matNS-OR & 0.265 (0.012) & 0.005 (0.001) & 0.392 (0.014) & 0.809 (0.005) \\
 & GEMINI & \textbf{0.558 (0.015)} & 0.102 (0.005) & 0.397 (0.009) & 0.822 (0.004) \\
 & G-FDR-0.10 & 0.006 (0.001) & \textbf{0.000 (0.000)} & 0.022 (0.004) & 0.580 (0.004) \\
\cmidrule(lr){2-6}
 & \multicolumn{5}{l}{\textit{$p=50$, $q=50$}} \\
 & matSPACE-sw & \textbf{0.453 (0.009)} & 0.034 (0.002) & \textbf{0.386 (0.005)} & \textbf{0.837 (0.002)} \\
 & matSPACE-dew & 0.275 (0.007) & 0.015 (0.001) & 0.334 (0.006) & 0.796 (0.003) \\
 & matNS-AND & 0.026 (0.002) & \textbf{0.000 (0.000)} & 0.115 (0.006) & 0.804 (0.003) \\
 & matNS-OR & 0.218 (0.007) & 0.004 (0.000) & 0.354 (0.008) & 0.805 (0.003) \\
 & GEMINI & 0.437 (0.011) & 0.040 (0.002) & 0.365 (0.006) & 0.822 (0.002) \\
 & G-FDR-0.10 & 0.000 (0.000) & \textbf{0.000 (0.000)} & 0.000 (0.000) & 0.551 (0.002) \\
\midrule
\midrule
\multirow{14}{*}{40} & \multicolumn{5}{l}{\textit{$p=20$, $q=20$}} \\
 & matSPACE-sw & 0.697 (0.011) & 0.086 (0.003) & \textbf{0.523 (0.007)} & \textbf{0.893 (0.003)} \\
 & matSPACE-dew & 0.509 (0.012) & 0.047 (0.003) & 0.488 (0.009) & 0.860 (0.004) \\
 & matNS-AND & 0.039 (0.003) & \textbf{0.000 (0.000)} & 0.119 (0.008) & 0.859 (0.004) \\
 & matNS-OR & 0.376 (0.013) & 0.005 (0.000) & \textbf{0.523 (0.012)} & 0.862 (0.004) \\
 & GEMINI & \textbf{0.762 (0.010)} & 0.150 (0.005) & 0.464 (0.006) & 0.878 (0.003) \\
 & G-FDR-0.10 & 0.031 (0.003) & \textbf{0.000 (0.000)} & 0.092 (0.008) & 0.676 (0.005) \\
\cmidrule(lr){2-6}
 & \multicolumn{5}{l}{\textit{$p=20$, $q=50$}} \\
 & matSPACE-sw & 0.626 (0.008) & 0.035 (0.002) & \textbf{0.505 (0.005)} & \textbf{0.899 (0.002)} \\
 & matSPACE-dew & 0.370 (0.008) & 0.016 (0.001) & 0.420 (0.007) & 0.855 (0.003) \\
 & matNS-AND & 0.044 (0.002) & \textbf{0.000 (0.000)} & 0.167 (0.006) & 0.865 (0.003) \\
 & matNS-OR & 0.338 (0.009) & 0.004 (0.000) & 0.483 (0.007) & 0.864 (0.003) \\
 & GEMINI & \textbf{0.666 (0.009)} & 0.062 (0.003) & 0.449 (0.006) & 0.886 (0.002) \\
 & G-FDR-0.10 & 0.006 (0.001) & \textbf{0.000 (0.000)} & 0.033 (0.004) & 0.641 (0.003) \\
\bottomrule
\end{tabular}
}
\end{minipage}%
\hfill
\begin{minipage}{0.49\textwidth}
\centering
\resizebox{\textwidth}{!}{%
\begin{tabular}{clcccc}
\toprule
$n$ & Method & TPR & FPR & MCC & AUC \\
\midrule
\multirow{14}{*}{40} & \multicolumn{5}{l}{\textit{$p=50$, $q=20$}} \\
 & matSPACE-sw & 0.922 (0.004) & 0.086 (0.003) & 0.670 (0.006) & \textbf{0.973 (0.001)} \\
 & matSPACE-dew & 0.763 (0.010) & 0.064 (0.004) & 0.643 (0.008) & 0.948 (0.002) \\
 & matNS-AND & 0.088 (0.005) & \textbf{0.000 (0.000)} & 0.220 (0.010) & 0.948 (0.002) \\
 & matNS-OR & 0.691 (0.010) & 0.005 (0.000) & \textbf{0.784 (0.007)} & 0.948 (0.002) \\
 & GEMINI & \textbf{0.953 (0.003)} & 0.192 (0.006) & 0.527 (0.007) & 0.964 (0.001) \\
 & G-FDR-0.10 & 0.282 (0.010) & \textbf{0.000 (0.000)} & 0.483 (0.010) & 0.877 (0.004) \\
\cmidrule(lr){2-6}
 & \multicolumn{5}{l}{\textit{$p=50$, $q=50$}} \\
 & matSPACE-sw & 0.882 (0.004) & 0.039 (0.001) & 0.636 (0.005) & \textbf{0.973 (0.001)} \\
 & matSPACE-dew & 0.575 (0.010) & 0.018 (0.001) & 0.572 (0.006) & 0.939 (0.002) \\
 & matNS-AND & 0.097 (0.004) & \textbf{0.000 (0.000)} & 0.281 (0.006) & 0.948 (0.001) \\
 & matNS-OR & 0.643 (0.008) & 0.005 (0.000) & \textbf{0.724 (0.005)} & 0.946 (0.001) \\
 & GEMINI & \textbf{0.917 (0.004)} & 0.092 (0.003) & 0.518 (0.008) & 0.964 (0.001) \\
 & G-FDR-0.10 & 0.112 (0.005) & \textbf{0.000 (0.000)} & 0.300 (0.008) & 0.841 (0.003) \\
\midrule
\midrule
\multirow{28}{*}{70} & \multicolumn{5}{l}{\textit{$p=20$, $q=20$}} \\
 & matSPACE-sw & 0.855 (0.007) & 0.087 (0.003) & 0.629 (0.006) & \textbf{0.949 (0.002)} \\
 & matSPACE-dew & 0.666 (0.011) & 0.059 (0.003) & 0.585 (0.008) & 0.919 (0.003) \\
 & matNS-AND & 0.061 (0.004) & \textbf{0.000 (0.000)} & 0.166 (0.010) & 0.919 (0.003) \\
 & matNS-OR & 0.573 (0.013) & 0.005 (0.000) & \textbf{0.694 (0.009)} & 0.920 (0.003) \\
 & GEMINI & \textbf{0.906 (0.005)} & 0.172 (0.005) & 0.519 (0.006) & 0.937 (0.002) \\
 & G-FDR-0.10 & 0.141 (0.007) & \textbf{0.000 (0.000)} & 0.302 (0.012) & 0.803 (0.005) \\
\cmidrule(lr){2-6}
 & \multicolumn{5}{l}{\textit{$p=20$, $q=50$}} \\
 & matSPACE-sw & 0.807 (0.005) & 0.039 (0.001) & 0.596 (0.005) & \textbf{0.953 (0.001)} \\
 & matSPACE-dew & 0.499 (0.010) & 0.018 (0.001) & 0.521 (0.006) & 0.912 (0.002) \\
 & matNS-AND & 0.071 (0.003) & \textbf{0.000 (0.000)} & 0.233 (0.006) & 0.922 (0.002) \\
 & matNS-OR & 0.529 (0.009) & 0.005 (0.000) & \textbf{0.638 (0.006)} & 0.921 (0.002) \\
 & GEMINI & \textbf{0.850 (0.005)} & 0.080 (0.004) & 0.514 (0.007) & 0.942 (0.001) \\
 & G-FDR-0.10 & 0.044 (0.003) & \textbf{0.000 (0.000)} & 0.165 (0.007) & 0.769 (0.003) \\
\cmidrule(lr){2-6}
 & \multicolumn{5}{l}{\textit{$p=50$, $q=20$}} \\
 & matSPACE-sw & 0.976 (0.002) & 0.084 (0.003) & 0.708 (0.006) & \textbf{0.992 (0.001)} \\
 & matSPACE-dew & 0.877 (0.009) & 0.114 (0.009) & 0.654 (0.010) & 0.977 (0.001) \\
 & matNS-AND & 0.147 (0.007) & \textbf{0.000 (0.000)} & 0.324 (0.010) & 0.977 (0.001) \\
 & matNS-OR & 0.840 (0.008) & 0.004 (0.000) & \textbf{0.886 (0.004)} & 0.976 (0.001) \\
 & GEMINI & \textbf{0.988 (0.002)} & 0.204 (0.005) & 0.527 (0.007) & 0.987 (0.001) \\
 & G-FDR-0.10 & 0.592 (0.009) & \textbf{0.000 (0.000)} & 0.748 (0.006) & 0.953 (0.002) \\
\cmidrule(lr){2-6}
 & \multicolumn{5}{l}{\textit{$p=50$, $q=50$}} \\
 & matSPACE-sw & 0.962 (0.002) & 0.036 (0.001) & 0.698 (0.006) & \textbf{0.992 (0.000)} \\
 & matSPACE-dew & 0.724 (0.011) & 0.026 (0.002) & 0.639 (0.007) & 0.972 (0.001) \\
 & matNS-AND & 0.152 (0.005) & \textbf{0.000 (0.000)} & 0.366 (0.006) & 0.977 (0.001) \\
 & matNS-OR & 0.815 (0.005) & 0.004 (0.000) & \textbf{0.839 (0.003)} & 0.976 (0.001) \\
 & GEMINI & \textbf{0.979 (0.001)} & 0.101 (0.004) & 0.529 (0.008) & 0.987 (0.000) \\
 & G-FDR-0.10 & 0.371 (0.007) & \textbf{0.000 (0.000)} & 0.594 (0.006) & 0.937 (0.002) \\
\bottomrule
\end{tabular}
}
\end{minipage}
\end{sidewaystable}

\begin{sidewaystable}
\centering
\spacingset{1}
\caption{Simulation results based on 300 data replications, where $U^{-1}$ is fixed at the band structure and $V^{-1}$ is fixed at the random structure, evaluated on the estimated $\widehat{U^{-1}}$.
Each entry reports an average of each metric with the standard error in parentheses. Boldface indicates the best-performing method.
}
\label{tab:random_U}
\renewcommand{\arraystretch}{0.85}
\tiny
\setlength{\tabcolsep}{4pt}
\begin{minipage}{0.49\textwidth}
\centering
\resizebox{\textwidth}{!}{%
\begin{tabular}{clcccc}
\toprule
$n$ & Method & TPR & FPR & MCC & AUC \\
\midrule
\multirow{28}{*}{10} & \multicolumn{5}{l}{\textit{$p=20$, $q=20$}} \\
 & matSPACE-sw & \textbf{0.400 (0.009)} & 0.125 (0.005) & \textbf{0.295 (0.005)} & \textbf{0.715 (0.003)} \\
 & matSPACE-dew & 0.252 (0.007) & 0.062 (0.003) & 0.262 (0.005) & 0.690 (0.003) \\
 & matNS-AND & 0.025 (0.002) & 0.001 (0.000) & 0.093 (0.006) & 0.703 (0.003) \\
 & matNS-OR & 0.084 (0.004) & 0.012 (0.001) & 0.170 (0.006) & 0.693 (0.003) \\
 & GEMINI & 0.329 (0.010) & 0.098 (0.005) & 0.274 (0.006) & 0.698 (0.003) \\
 & G-FDR-0.10 & 0.002 (0.001) & \textbf{0.000 (0.000)} & 0.010 (0.002) & 0.542 (0.002) \\
\cmidrule(lr){2-6}
 & \multicolumn{5}{l}{\textit{$p=20$, $q=50$}} \\
 & matSPACE-sw & 0.621 (0.008) & 0.145 (0.005) & \textbf{0.455 (0.004)} & \textbf{0.818 (0.002)} \\
 & matSPACE-dew & 0.435 (0.010) & 0.088 (0.004) & 0.397 (0.004) & 0.784 (0.002) \\
 & matNS-AND & 0.065 (0.003) & 0.001 (0.000) & 0.192 (0.007) & 0.800 (0.002) \\
 & matNS-OR & 0.178 (0.005) & 0.013 (0.001) & 0.312 (0.006) & 0.788 (0.002) \\
 & GEMINI & \textbf{0.665 (0.008)} & 0.214 (0.006) & 0.402 (0.004) & 0.796 (0.002) \\
 & G-FDR-0.10 & 0.035 (0.002) & \textbf{0.000 (0.000)} & 0.122 (0.007) & 0.661 (0.003) \\
\cmidrule(lr){2-6}
 & \multicolumn{5}{l}{\textit{$p=50$, $q=20$}} \\
 & matSPACE-sw & \textbf{0.282 (0.006)} & 0.058 (0.003) & \textbf{0.243 (0.003)} & \textbf{0.703 (0.002)} \\
 & matSPACE-dew & 0.158 (0.004) & 0.025 (0.001) & 0.203 (0.003) & 0.676 (0.002) \\
 & matNS-AND & 0.016 (0.001) & \textbf{0.000 (0.000)} & 0.096 (0.004) & 0.691 (0.002) \\
 & matNS-OR & 0.068 (0.002) & 0.009 (0.000) & 0.147 (0.003) & 0.679 (0.002) \\
 & GEMINI & 0.208 (0.007) & 0.035 (0.002) & 0.236 (0.003) & 0.689 (0.002) \\
 & G-FDR-0.10 & 0.000 (0.000) & \textbf{0.000 (0.000)} & 0.000 (0.000) & 0.531 (0.001) \\
\cmidrule(lr){2-6}
 & \multicolumn{5}{l}{\textit{$p=50$, $q=50$}} \\
 & matSPACE-sw & \textbf{0.495 (0.005)} & 0.062 (0.002) & \textbf{0.405 (0.003)} & \textbf{0.807 (0.002)} \\
 & matSPACE-dew & 0.277 (0.006) & 0.027 (0.001) & 0.329 (0.003) & 0.769 (0.001) \\
 & matNS-AND & 0.052 (0.002) & \textbf{0.000 (0.000)} & 0.200 (0.004) & 0.787 (0.001) \\
 & matNS-OR & 0.142 (0.003) & 0.007 (0.000) & 0.280 (0.003) & 0.775 (0.001) \\
 & GEMINI & 0.491 (0.007) & 0.081 (0.004) & 0.375 (0.004) & 0.786 (0.001) \\
 & G-FDR-0.10 & 0.007 (0.001) & \textbf{0.000 (0.000)} & 0.053 (0.004) & 0.621 (0.002) \\
\midrule
\midrule
\multirow{14}{*}{40} & \multicolumn{5}{l}{\textit{$p=20$, $q=20$}} \\
 & matSPACE-sw & 0.727 (0.006) & 0.147 (0.004) & \textbf{0.530 (0.004)} & \textbf{0.865 (0.002)} \\
 & matSPACE-dew & 0.543 (0.011) & 0.104 (0.005) & 0.470 (0.005) & 0.829 (0.002) \\
 & matNS-AND & 0.105 (0.004) & 0.001 (0.000) & 0.267 (0.007) & 0.845 (0.002) \\
 & matNS-OR & 0.248 (0.006) & 0.012 (0.001) & 0.399 (0.006) & 0.834 (0.002) \\
 & GEMINI & \textbf{0.787 (0.006)} & 0.265 (0.006) & 0.437 (0.005) & 0.843 (0.002) \\
 & G-FDR-0.10 & 0.121 (0.005) & \textbf{0.000 (0.000)} & 0.295 (0.007) & 0.749 (0.003) \\
\cmidrule(lr){2-6}
 & \multicolumn{5}{l}{\textit{$p=20$, $q=50$}} \\
 & matSPACE-sw & 0.881 (0.004) & 0.141 (0.004) & \textbf{0.652 (0.004)} & \textbf{0.944 (0.001)} \\
 & matSPACE-dew & 0.808 (0.009) & 0.199 (0.010) & 0.561 (0.007) & 0.913 (0.001) \\
 & matNS-AND & 0.250 (0.006) & 0.001 (0.000) & 0.448 (0.005) & 0.926 (0.001) \\
 & matNS-OR & 0.452 (0.007) & 0.013 (0.001) & 0.585 (0.005) & 0.917 (0.001) \\
 & GEMINI & \textbf{0.940 (0.003)} & 0.358 (0.006) & 0.462 (0.005) & 0.927 (0.001) \\
 & G-FDR-0.10 & 0.424 (0.006) & \textbf{0.000 (0.000)} & 0.609 (0.005) & 0.883 (0.002) \\
\bottomrule
\end{tabular}
}
\end{minipage}%
\hfill
\begin{minipage}{0.49\textwidth}
\centering
\resizebox{\textwidth}{!}{%
\begin{tabular}{clcccc}
\toprule
$n$ & Method & TPR & FPR & MCC & AUC \\
\midrule
\multirow{14}{*}{40} & \multicolumn{5}{l}{\textit{$p=50$, $q=20$}} \\
 & matSPACE-sw & 0.621 (0.005) & 0.067 (0.002) & \textbf{0.486 (0.003)} & \textbf{0.862 (0.001)} \\
 & matSPACE-dew & 0.374 (0.007) & 0.032 (0.002) & 0.405 (0.003) & 0.820 (0.001) \\
 & matNS-AND & 0.092 (0.002) & 0.001 (0.000) & 0.271 (0.004) & 0.840 (0.001) \\
 & matNS-OR & 0.218 (0.003) & 0.008 (0.000) & 0.366 (0.003) & 0.827 (0.001) \\
 & GEMINI & \textbf{0.656 (0.006)} & 0.123 (0.004) & 0.413 (0.004) & 0.839 (0.001) \\
 & G-FDR-0.10 & 0.038 (0.002) & \textbf{0.000 (0.000)} & 0.170 (0.004) & 0.710 (0.002) \\
\cmidrule(lr){2-6}
 & \multicolumn{5}{l}{\textit{$p=50$, $q=50$}} \\
 & matSPACE-sw & 0.825 (0.003) & 0.068 (0.002) & \textbf{0.622 (0.004)} & \textbf{0.942 (0.001)} \\
 & matSPACE-dew & 0.566 (0.008) & 0.037 (0.002) & 0.554 (0.003) & 0.903 (0.001) \\
 & matNS-AND & 0.234 (0.004) & 0.001 (0.000) & 0.455 (0.004) & 0.921 (0.001) \\
 & matNS-OR & 0.416 (0.004) & 0.009 (0.000) & 0.556 (0.003) & 0.912 (0.001) \\
 & GEMINI & \textbf{0.875 (0.003)} & 0.176 (0.005) & 0.467 (0.005) & 0.924 (0.001) \\
 & G-FDR-0.10 & 0.244 (0.003) & \textbf{0.000 (0.000)} & 0.476 (0.003) & 0.854 (0.001) \\
\midrule
\midrule
\multirow{28}{*}{70} & \multicolumn{5}{l}{\textit{$p=20$, $q=20$}} \\
 & matSPACE-sw & 0.832 (0.004) & 0.146 (0.004) & \textbf{0.611 (0.004)} & \textbf{0.918 (0.002)} \\
 & matSPACE-dew & 0.725 (0.010) & 0.174 (0.008) & 0.524 (0.006) & 0.883 (0.002) \\
 & matNS-AND & 0.189 (0.005) & 0.001 (0.000) & 0.381 (0.006) & 0.897 (0.002) \\
 & matNS-OR & 0.374 (0.006) & 0.013 (0.001) & 0.517 (0.005) & 0.888 (0.002) \\
 & GEMINI & \textbf{0.895 (0.004)} & 0.318 (0.006) & 0.465 (0.004) & 0.898 (0.002) \\
 & G-FDR-0.10 & 0.292 (0.006) & \textbf{0.000 (0.000)} & 0.494 (0.006) & 0.839 (0.002) \\
\cmidrule(lr){2-6}
 & \multicolumn{5}{l}{\textit{$p=20$, $q=50$}} \\
 & matSPACE-sw & 0.941 (0.003) & 0.122 (0.004) & 0.725 (0.005) & \textbf{0.974 (0.001)} \\
 & matSPACE-dew & 0.911 (0.006) & 0.305 (0.015) & 0.544 (0.011) & 0.949 (0.001) \\
 & matNS-AND & 0.377 (0.006) & 0.001 (0.000) & 0.565 (0.005) & 0.959 (0.001) \\
 & matNS-OR & 0.587 (0.006) & 0.010 (0.001) & 0.697 (0.004) & 0.953 (0.001) \\
 & GEMINI & \textbf{0.978 (0.001)} & 0.389 (0.006) & 0.465 (0.005) & 0.960 (0.001) \\
 & G-FDR-0.10 & 0.638 (0.005) & \textbf{0.000 (0.000)} & \textbf{0.766 (0.003)} & 0.936 (0.002) \\
\cmidrule(lr){2-6}
 & \multicolumn{5}{l}{\textit{$p=50$, $q=20$}} \\
 & matSPACE-sw & 0.753 (0.004) & 0.068 (0.002) & \textbf{0.574 (0.003)} & \textbf{0.915 (0.001)} \\
 & matSPACE-dew & 0.483 (0.008) & 0.033 (0.002) & 0.499 (0.003) & 0.873 (0.001) \\
 & matNS-AND & 0.172 (0.003) & 0.001 (0.000) & 0.386 (0.004) & 0.893 (0.001) \\
 & matNS-OR & 0.332 (0.004) & 0.008 (0.000) & 0.484 (0.003) & 0.881 (0.001) \\
 & GEMINI & \textbf{0.809 (0.005)} & 0.162 (0.005) & 0.450 (0.005) & 0.894 (0.001) \\
 & G-FDR-0.10 & 0.144 (0.003) & \textbf{0.000 (0.000)} & 0.361 (0.004) & 0.802 (0.002) \\
\cmidrule(lr){2-6}
 & \multicolumn{5}{l}{\textit{$p=50$, $q=50$}} \\
 & matSPACE-sw & 0.905 (0.002) & 0.059 (0.002) & \textbf{0.697 (0.004)} & \textbf{0.972 (0.001)} \\
 & matSPACE-dew & 0.659 (0.007) & 0.029 (0.002) & 0.651 (0.003) & 0.941 (0.001) \\
 & matNS-AND & 0.357 (0.004) & 0.001 (0.000) & 0.573 (0.003) & 0.955 (0.001) \\
 & matNS-OR & 0.559 (0.004) & 0.008 (0.000) & 0.676 (0.003) & 0.948 (0.001) \\
 & GEMINI & \textbf{0.948 (0.002)} & 0.211 (0.006) & 0.469 (0.006) & 0.958 (0.001) \\
 & G-FDR-0.10 & 0.450 (0.004) & \textbf{0.000 (0.000)} & 0.654 (0.003) & 0.921 (0.001) \\
\bottomrule
\end{tabular}
}
\end{minipage}
\end{sidewaystable}

\begin{sidewaystable}
\centering
\spacingset{1}
\caption{Simulation results based on 300 data replications, where $U^{-1}$ is fixed at the band structure and $V^{-1}$ is fixed at the random structure, evaluated on the estimated $\widehat{V^{-1}}$.
Each entry reports an average of each metric with the standard error in parentheses. Boldface indicates the best-performing method.
}
\label{tab:random_V}
\renewcommand{\arraystretch}{0.85}
\tiny
\setlength{\tabcolsep}{4pt}
\begin{minipage}{0.49\textwidth}
\centering
\resizebox{\textwidth}{!}{%
\begin{tabular}{clcccc}
\toprule
$n$ & Method & TPR & FPR & MCC & AUC \\
\midrule
\multirow{28}{*}{10} & \multicolumn{5}{l}{\textit{$p=20$, $q=20$}} \\
 & matSPACE-sw & \textbf{0.359 (0.013)} & 0.106 (0.006) & \textbf{0.263 (0.007)} & \textbf{0.727 (0.003)} \\
 & matSPACE-dew & 0.236 (0.010) & 0.060 (0.004) & 0.223 (0.007) & 0.682 (0.003) \\
 & matNS-AND & 0.014 (0.002) & \textbf{0.000 (0.000)} & 0.051 (0.005) & 0.693 (0.003) \\
 & matNS-OR & 0.077 (0.005) & 0.009 (0.001) & 0.139 (0.008) & 0.687 (0.003) \\
 & GEMINI & 0.261 (0.013) & 0.063 (0.004) & 0.230 (0.009) & 0.717 (0.003) \\
 & G-FDR-0.10 & 0.000 (0.000) & \textbf{0.000 (0.000)} & 0.001 (0.001) & 0.514 (0.001) \\
\cmidrule(lr){2-6}
 & \multicolumn{5}{l}{\textit{$p=20$, $q=50$}} \\
 & matSPACE-sw & \textbf{0.191 (0.008)} & 0.041 (0.002) & \textbf{0.166 (0.005)} & \textbf{0.685 (0.002)} \\
 & matSPACE-dew & 0.115 (0.005) & 0.022 (0.001) & 0.137 (0.004) & 0.638 (0.002) \\
 & matNS-AND & 0.005 (0.001) & \textbf{0.000 (0.000)} & 0.031 (0.003) & 0.651 (0.002) \\
 & matNS-OR & 0.039 (0.002) & 0.005 (0.000) & 0.091 (0.004) & 0.645 (0.002) \\
 & GEMINI & 0.081 (0.006) & 0.012 (0.001) & 0.113 (0.006) & 0.680 (0.002) \\
 & G-FDR-0.10 & 0.000 (0.000) & \textbf{0.000 (0.000)} & 0.000 (0.000) & 0.504 (0.000) \\
\cmidrule(lr){2-6}
 & \multicolumn{5}{l}{\textit{$p=50$, $q=20$}} \\
 & matSPACE-sw & 0.662 (0.010) & 0.129 (0.005) & \textbf{0.480 (0.006)} & \textbf{0.853 (0.003)} \\
 & matSPACE-dew & 0.434 (0.012) & 0.078 (0.004) & 0.383 (0.006) & 0.794 (0.003) \\
 & matNS-AND & 0.045 (0.003) & \textbf{0.000 (0.000)} & 0.138 (0.008) & 0.810 (0.003) \\
 & matNS-OR & 0.178 (0.008) & 0.011 (0.001) & 0.291 (0.010) & 0.802 (0.003) \\
 & GEMINI & \textbf{0.691 (0.011)} & 0.169 (0.006) & 0.449 (0.006) & 0.836 (0.003) \\
 & G-FDR-0.10 & 0.007 (0.001) & \textbf{0.000 (0.000)} & 0.030 (0.004) & 0.615 (0.004) \\
\cmidrule(lr){2-6}
 & \multicolumn{5}{l}{\textit{$p=50$, $q=50$}} \\
 & matSPACE-sw & 0.408 (0.009) & 0.042 (0.002) & \textbf{0.363 (0.005)} & \textbf{0.809 (0.002)} \\
 & matSPACE-dew & 0.227 (0.006) & 0.024 (0.001) & 0.263 (0.004) & 0.745 (0.003) \\
 & matNS-AND & 0.018 (0.001) & \textbf{0.000 (0.000)} & 0.089 (0.005) & 0.764 (0.002) \\
 & matNS-OR & 0.090 (0.004) & 0.004 (0.000) & 0.201 (0.006) & 0.756 (0.003) \\
 & GEMINI & \textbf{0.415 (0.010)} & 0.050 (0.002) & 0.354 (0.005) & 0.797 (0.002) \\
 & G-FDR-0.10 & 0.000 (0.000) & \textbf{0.000 (0.000)} & 0.000 (0.000) & 0.537 (0.002) \\
\midrule
\midrule
\multirow{14}{*}{40} & \multicolumn{5}{l}{\textit{$p=20$, $q=20$}} \\
 & matSPACE-sw & 0.792 (0.008) & 0.134 (0.004) & \textbf{0.561 (0.005)} & \textbf{0.909 (0.003)} \\
 & matSPACE-dew & 0.555 (0.012) & 0.088 (0.004) & 0.472 (0.006) & 0.851 (0.003) \\
 & matNS-AND & 0.079 (0.004) & 0.001 (0.000) & 0.211 (0.009) & 0.867 (0.003) \\
 & matNS-OR & 0.263 (0.009) & 0.010 (0.001) & 0.401 (0.009) & 0.860 (0.003) \\
 & GEMINI & \textbf{0.839 (0.008)} & 0.209 (0.006) & 0.502 (0.005) & 0.894 (0.002) \\
 & G-FDR-0.10 & 0.050 (0.004) & \textbf{0.000 (0.000)} & 0.145 (0.009) & 0.735 (0.005) \\
\cmidrule(lr){2-6}
 & \multicolumn{5}{l}{\textit{$p=20$, $q=50$}} \\
 & matSPACE-sw & 0.597 (0.007) & 0.050 (0.002) & \textbf{0.483 (0.004)} & \textbf{0.877 (0.002)} \\
 & matSPACE-dew & 0.315 (0.007) & 0.025 (0.001) & 0.347 (0.004) & 0.807 (0.002) \\
 & matNS-AND & 0.036 (0.002) & \textbf{0.000 (0.000)} & 0.154 (0.005) & 0.829 (0.002) \\
 & matNS-OR & 0.148 (0.004) & 0.004 (0.000) & 0.294 (0.006) & 0.821 (0.002) \\
 & GEMINI & \textbf{0.642 (0.009)} & 0.081 (0.003) & 0.447 (0.005) & 0.862 (0.002) \\
 & G-FDR-0.10 & 0.002 (0.000) & \textbf{0.000 (0.000)} & 0.011 (0.002) & 0.619 (0.003) \\
\bottomrule
\end{tabular}
}
\end{minipage}%
\hfill
\begin{minipage}{0.49\textwidth}
\centering
\resizebox{\textwidth}{!}{%
\begin{tabular}{clcccc}
\toprule
$n$ & Method & TPR & FPR & MCC & AUC \\
\midrule
\multirow{14}{*}{40} & \multicolumn{5}{l}{\textit{$p=50$, $q=20$}} \\
 & matSPACE-sw & 0.955 (0.003) & 0.129 (0.004) & \textbf{0.681 (0.005)} & \textbf{0.979 (0.001)} \\
 & matSPACE-dew & 0.802 (0.010) & 0.121 (0.006) & 0.613 (0.006) & 0.937 (0.002) \\
 & matNS-AND & 0.223 (0.008) & 0.001 (0.000) & 0.415 (0.009) & 0.950 (0.002) \\
 & matNS-OR & 0.503 (0.010) & 0.011 (0.001) & 0.624 (0.007) & 0.944 (0.002) \\
 & GEMINI & \textbf{0.979 (0.002)} & 0.284 (0.007) & 0.516 (0.006) & 0.969 (0.001) \\
 & G-FDR-0.10 & 0.401 (0.011) & \textbf{0.000 (0.000)} & 0.581 (0.010) & 0.920 (0.003) \\
\cmidrule(lr){2-6}
 & \multicolumn{5}{l}{\textit{$p=50$, $q=50$}} \\
 & matSPACE-sw & 0.875 (0.004) & 0.059 (0.002) & \textbf{0.634 (0.004)} & \textbf{0.965 (0.001)} \\
 & matSPACE-dew & 0.528 (0.008) & 0.030 (0.001) & 0.511 (0.004) & 0.905 (0.002) \\
 & matNS-AND & 0.132 (0.004) & \textbf{0.000 (0.000)} & 0.333 (0.006) & 0.926 (0.001) \\
 & matNS-OR & 0.348 (0.006) & 0.006 (0.000) & 0.505 (0.005) & 0.920 (0.002) \\
 & GEMINI & \textbf{0.919 (0.004)} & 0.129 (0.004) & 0.518 (0.006) & 0.954 (0.001) \\
 & G-FDR-0.10 & 0.077 (0.004) & \textbf{0.000 (0.000)} & 0.238 (0.008) & 0.833 (0.003) \\
\midrule
\midrule
\multirow{28}{*}{70} & \multicolumn{5}{l}{\textit{$p=20$, $q=20$}} \\
 & matSPACE-sw & 0.917 (0.005) & 0.138 (0.004) & \textbf{0.642 (0.005)} & \textbf{0.960 (0.002)} \\
 & matSPACE-dew & 0.728 (0.012) & 0.116 (0.005) & 0.558 (0.006) & 0.909 (0.002) \\
 & matNS-AND & 0.160 (0.007) & 0.001 (0.000) & 0.333 (0.010) & 0.924 (0.002) \\
 & matNS-OR & 0.414 (0.011) & 0.012 (0.001) & 0.539 (0.009) & 0.917 (0.002) \\
 & GEMINI & \textbf{0.948 (0.004)} & 0.272 (0.007) & 0.508 (0.006) & 0.947 (0.002) \\
 & G-FDR-0.10 & 0.220 (0.009) & \textbf{0.000 (0.000)} & 0.397 (0.012) & 0.863 (0.004) \\
\cmidrule(lr){2-6}
 & \multicolumn{5}{l}{\textit{$p=20$, $q=50$}} \\
 & matSPACE-sw & 0.788 (0.005) & 0.056 (0.002) & \textbf{0.591 (0.004)} & \textbf{0.939 (0.001)} \\
 & matSPACE-dew & 0.453 (0.008) & 0.029 (0.001) & 0.450 (0.004) & 0.872 (0.002) \\
 & matNS-AND & 0.085 (0.003) & \textbf{0.000 (0.000)} & 0.261 (0.006) & 0.894 (0.002) \\
 & matNS-OR & 0.265 (0.006) & 0.005 (0.000) & 0.427 (0.006) & 0.886 (0.002) \\
 & GEMINI & \textbf{0.835 (0.006)} & 0.110 (0.004) & 0.511 (0.006) & 0.926 (0.001) \\
 & G-FDR-0.10 & 0.020 (0.002) & \textbf{0.000 (0.000)} & 0.095 (0.006) & 0.755 (0.003) \\
\cmidrule(lr){2-6}
 & \multicolumn{5}{l}{\textit{$p=50$, $q=20$}} \\
 & matSPACE-sw & 0.990 (0.001) & 0.114 (0.003) & 0.729 (0.005) & \textbf{0.995 (0.000)} \\
 & matSPACE-dew & 0.899 (0.007) & 0.139 (0.007) & 0.662 (0.008) & 0.970 (0.001) \\
 & matNS-AND & 0.354 (0.009) & 0.001 (0.000) & 0.546 (0.008) & 0.978 (0.001) \\
 & matNS-OR & 0.664 (0.009) & 0.010 (0.001) & 0.750 (0.006) & 0.974 (0.001) \\
 & GEMINI & \textbf{0.997 (0.001)} & 0.295 (0.007) & 0.517 (0.006) & 0.990 (0.001) \\
 & G-FDR-0.10 & 0.721 (0.009) & \textbf{0.000 (0.000)} & \textbf{0.826 (0.006)} & 0.973 (0.001) \\
\cmidrule(lr){2-6}
 & \multicolumn{5}{l}{\textit{$p=50$, $q=50$}} \\
 & matSPACE-sw & 0.961 (0.002) & 0.055 (0.001) & \textbf{0.698 (0.005)} & \textbf{0.990 (0.000)} \\
 & matSPACE-dew & 0.672 (0.008) & 0.032 (0.001) & 0.608 (0.004) & 0.948 (0.001) \\
 & matNS-AND & 0.234 (0.005) & \textbf{0.000 (0.000)} & 0.459 (0.006) & 0.964 (0.001) \\
 & matNS-OR & 0.513 (0.007) & 0.006 (0.000) & 0.639 (0.005) & 0.960 (0.001) \\
 & GEMINI & \textbf{0.980 (0.001)} & 0.138 (0.004) & 0.539 (0.007) & 0.983 (0.001) \\
 & G-FDR-0.10 & 0.319 (0.008) & \textbf{0.000 (0.000)} & 0.540 (0.007) & 0.938 (0.002) \\
\bottomrule
\end{tabular}
}
\end{minipage}
\end{sidewaystable}

\end{document}